\newcommand{\E}{\mathbf{E\,}}
\renewcommand{\P}{\mathbf{P}}
\newcommand{\R}{\mathbb{R}}
\renewcommand{\H}{\mathcal{H}}
\newcommand{\M}{\mathcal{M}}
\newcommand{\diam}{\textnormal{diam}}
\newcommand{\W}{\mathcal{W}}
\newcommand{\pa}{\mathcal{P}}
\newcommand{\I}{\mathbbm{1}}
\newcommand{\one}[1]{\mathbbm{1}_{\{#1\}}}
\newcommand{\Lip}{\mathrm{Lip}}
\newcommand{\lip}{\mathrm{Lip}}
\newcommand{\cf}[1]{} 
\newcommand{\ccf}[1]{} 
\algrenewcommand\alglinenumber[1]{
    {\sf\footnotesize\addfontfeatures{Colour=888888,Numbers=Monospaced}#1}}
\algrenewcommand\algorithmicrequire{\textbf{Precondition:}}
\algrenewcommand\algorithmicensure{\textbf{Postcondition:}}
\newlength{\zero}
\newtheorem{theorem}{Theorem}
\newtheorem{proposition}{Proposition}
\newtheorem{lemma}{Lemma}
\theoremstyle{definition}
\newtheorem{example}{Example}
\numberwithin{equation}{section}
\title{Space-Filling Design for Nonlinear Models} 
\author[*]{Chang-Han Rhee}
\author[**]{Enlu Zhou}
\author[***]{Peng Qiu}
\affil[*]{Stochastics Group, Centrum Wiskunde \& Informatica\\
    Amsterdam, 1098XG, The Netherlands}
\affil[**]{H. Milton Stewart School of Industrial and Systems Engineering, Georgia Institute of Technology\\
	Atlanta, GA, 30332, USA}
\affil[***]{Wallace H. Coulter Department of Biomedical Engineering, Georgia Institute of Technology and Emory University\\
	Atlanta, GA, 30332, USA}
\date{\today}
\begin{document}

\maketitle

\onehalfspacing
\begin{abstract}
\noindent
Performing a computer experiment can be viewed as observing a mapping between the model parameters and the corresponding model outputs predicted by the computer model. In view of this, experimental design for computer experiments can be thought of as devising a reliable procedure for finding configurations of design points in the parameter space so that their images represent the manifold parametrized by such a mapping (i.e., computer experiments). 
Traditional space-filling design aims to achieve this goal by filling the parameter space with design points that are as ``uniform" as possible in the parameter space. However, the resulting design points may be non-uniform in the model output space and hence fail to provide a reliable representation of the manifold, becoming highly inefficient or even misleading in case the computer experiments are non-linear. 
In this paper, we propose an iterative algorithm that fills in the model output manifold uniformly---rather than the parameter space uniformly---so that one could obtain a reliable understanding of the model behaviors with the minimal number of design points. 


\end{abstract}



%
%
\section{Introduction}\label{sec:intro}
By virtue of the immense computational capacity of modern computing machineries, experimentation via computer simulation 
has become an integral element of science and engineering. 
Due to its deterministic nature, however, designing computer simulation experiments requires a different approach from traditional design of physical experiments.
A useful perspective is to view the given computer experiment as a function that maps a set of input variables to output variables (Figure~\ref{fig:1}).
In view of this, one can regard a design of computer experiment as an exploration of a manifold parametrized by such a function. 
Interesting design questions can be answered by sampling from a given distribution on such a manifold.
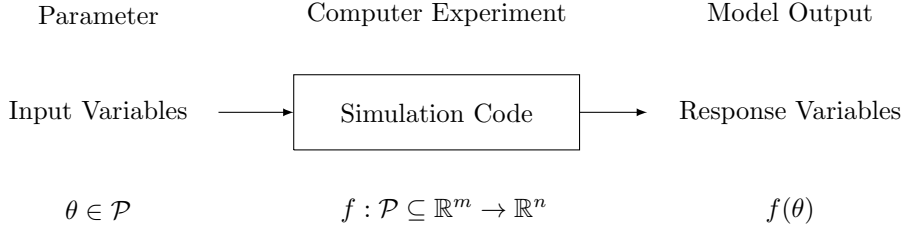
\begin{figure}\label{fig:1}
\center
\begin{tikzpicture}
\draw (-3,1.8) node {Parameter};
\draw (6.2,1.8) node {Model Output};
\draw (1.5,1.8) node {Computer Experiment};
\draw (-3,0.5) node {Input Variables};
\draw (6.2,0.5) node {Response Variables};
\draw[-latex] (-1.4,0.5)--(-0.4,0.5);
\draw[-latex] (3.4,0.5)--(4.3,0.5);
\draw (-0.4,0) to (-0.4,1) to (3.4,1) to (3.4,0) to (-0.4,0);
\draw (1.5,0.5) node {Simulation Code};
\draw (-3,-0.8) node {$\theta\in \pa$};
\draw (1.6,-0.8) node {$f:\pa\subseteq \R^m\to \R^n$};
\draw (6.2,-0.8) node {$f(\theta)$};
\end{tikzpicture}
\caption{Computer experiments can be viewed as a function that maps input variables (parameters) to response variables (model outputs).}
\end{figure}

Suppose that $f:\pa\subset\R^m \to \R^n$ is a mapping on a hypercube $\pa = \prod_{i=1}^m[x_{\min}^i, x_{\max}^i]\subseteq \R^m$; we assume that $f$ possesses sufficient regularity so that the image $f(\pa)$ is an $m$-dimensional manifold embedded in $\R^n$. 
Let $\M$ denote this manifold.
Throughout this paper, we will call $\mathcal P$ as the parameter space or input space, and $\mathcal M$ as the manifold or output space.
The goal of this paper is to develop a reliable and efficient computational procedure that finds configurations $\{x_1,\ldots,x_n\}\subseteq A$ in the parameter space in such a way that $\{f(x_1),\ldots,f(x_n)\}$ represent $\mathcal M$ well. 
We will make it clear what we mean by this in rigorous mathematical statements in Section~\ref{sec:algorithm}, but here we first explain the challenges and the objectives at an intuitive level through an illustrative example.
Consider a mapping 
$f:\pa\subseteq \R^2 \to \R^3$
$$
f(\theta_1,\theta_2) 
\triangleq
\left(
\begin{array}{c}
e^{-\theta_1 t_1} + e^{-\theta_2 t_1}\\
e^{-\theta_1 t_2} + e^{-\theta_2 t_2}\\
e^{-\theta_1 t_3} + e^{-\theta_2 t_3}
\end{array}
\right)
$$
on $\pa = [0,100]^2$ and the associated manifold 
\begin{equation}\label{eq:manifold-exponential-model}
\mathcal M = \{(e^{-\theta_1 t_1} + e^{-\theta_2 t_1},e^{-\theta_1 t_2} + e^{-\theta_2 t_2}, e^{-\theta_1 t_3} + e^{-\theta_2 t_3}): \theta_1,\theta_2 \in[0, 100]\}
\end{equation}
where $t_1=1$, $t_2=2$, $t_3=4$. 
Although this example is given in a closed-form formula for the purpose of illustration, a typical situation we would like to address in this paper is when the evaluation of $f$ is only possible through an expensive black box simulation.
Think of $f$ as a model describing the dynamics of the system so that $f(\theta_1,\theta_2)$ is the model output given the parameter $(\theta_1,\theta_2)$. 
$\mathcal M$ then can be viewed as all of the possible behaviors of the system from the parameters in $\pa$. 
Perhaps, the most straightforward way to investigate the model behavior---e.g., the shape and range of the manifold $\mathcal M$, the sensitivity of the model output in the input parameters, etc.---is to evaluate the mapping $f$ at a large enough number of different parameters, say, $\{x_1,\ldots,x_n\} \subseteq \pa$ so that they ``cover'' $\pa$ sufficiently well and then observe $\{f(x_1),\ldots,f(x_n)\}$. 
Traditional space-filling designs carefully construct design points $\{x_1,\ldots,x_n\}$ in such a way that the parameter space $\pa$ is ``well covered'' by $\{x_1,\ldots,x_n\}$; see for example, \cite{santner2013design}.
While such a strategy can be a powerful means to study the manifolds, it should be noted that if $f$ is parametrized in a highly nonlinear way so that the vast majority of the parameter space $\pa$ is mapped into a small part of the manifold $\mathcal M$, then na\"ive choices of configuration $\{x_1,\ldots,x_n\}$ can lead to not only very inefficient computational procedures but also dangerously misleading observations. 
The manifold $\mathcal M$ in \eqref{eq:manifold-exponential-model} illustrates this point clearly. 
Figure~\ref{fig:expo} displays the samples generated on $\mathcal M$ in two different ways. 
The left plot displays  $5,000$ samples $f(x_1), \ldots, f(x_{5,000})$ where the $x_i$-s are generated uniformly on $\pa$ according to the traditional principle, 
while the right plot displays $5,000$ samples $f(x'_1),\ldots,f(x'_{5,000})$ where $x'_i$-s are generated so that $f(x'_i)$-s are uniformly distributed on $\mathcal M$. 
One can see that most of the $x_i$-s are mapped into a small fraction of $\mathcal M$ in the left plot, and hence, $\{f(x_1),\ldots,f(x_n)\}$ do not reflect the actual geometry of $\mathcal M$ in a reliable way. 
If the output behavior that the na\"ive design points failed to capture in the left plot are undesirable behaviors (that the experimenter wants to avoid), then the experimenter may incorrectly conclude that the model is robust to the perturbation of the parameters and get a false sense of safety. 
On the other hand, if the missed output behaviors are desirable ones,  then the  experimenter will miss opportunities to discover and take advantage of the such model behaviors.  
This illustrates the potential danger in drawing conclusions from the observations based on blindly choosing uniform configurations in the parameter space without considering the behavior of the model $f$.
\begin{figure}[htb]
\centering
\includegraphics[width = \linewidth]{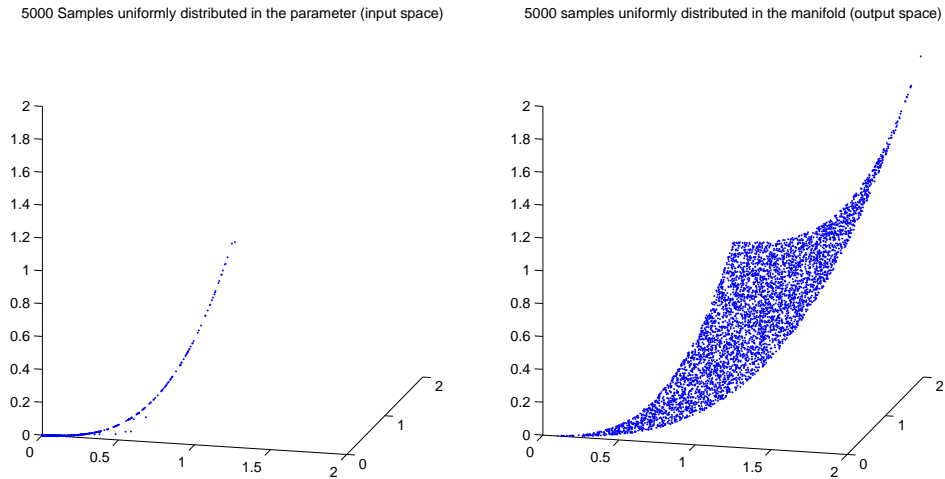}
\caption{$5,000$ samples generated uniformly on the parameter space $\pa$ (left plot) vs.\ uniformly on the manifold $\mathcal M$ (right plot). 
Observations based on the left plot may lead to incorrect inferences.
For many purposes, more reliable and informative  configurations for such nonlinear models would be uniform (or other desired) distribution on $\mathcal M$ rather than $\pa$. 
\label{fig:expo}}
\end{figure}

In this paper, we propose computational procedures for generating configurations (design points) $\{x_1,\ldots,x_n\}$ on $\pa$ so that the resulting configuration $\{f(x_1),\ldots,f(x_n)\}$ on $\mathcal M$ is uniformly distributed (or according to other desired distributions) as in the right plot of Figure~\ref{fig:expo}. 
The idea is to start with a random configuration, and then shift the configuration towards the target distribution by alternating between resampling in the model output space and perturbation in the parameter space. 
In the resampling step, the algorithm resamples the points in such a way that the points that belong to a densely populated region of $\mathcal M$ are less likely to be resampled, while the points that belong to the sparse region of $\mathcal M$ are more likely to be resampled so that the resulting samples are populated more uniformly. 
While this step pushes the empirical distribution of the design points toward the target distribution, it cannot be repeated to push the design points further toward the target distribution since no new points in $\mathcal M$ are discovered while some are discarded. 
To discover new regions of $\mathcal M$, the resampling step is followed by the perturbation step where each points are shifted around in such a way that the distribution of the shifted points are not too far away from the original distribution. 
The two steps are repeated until the configuration is sufficiently uniform on $\mathcal M$.
We study the consistency and the convergence of the proposed algorithm in Kantorovich-Rubinstein distance. 
A numerical investigation (without convergence analysis) of the preliminary version of one of the algorithms discussed in this paper has been presented in \cite{rhee2014iterative}.

It should be pointed out that our setting is different from the setting where algorithms were developed to generate uniform samples on manifolds based on random walks, such as hit-and-run \cite{boneh1979constraints,smith1984efficient} and shake-and-bake \cite{boender1991shake}, 
stochastic billiard \cite{dieker2015stochastic},
geodesic-walks \cite{lee2016geodesic}. 
Such algorithms work when the manifold is a convex set or the boundary of a convex set, and the manifold is specified by a set of constraints (eg. a polytope given by a system of linear inequalities), and hence, it is easy to tell whether a given point in $\R^n$ (the ambient space in which the manifold is embedded) belongs to the manifold or not.
In contrast, in our setting, the manifold is not necessarily a convex set or the boundary of a convex set, and more importantly, we do not have a direct way to answer whether a given point in $\R^n$ belongs to the manifold or not.

The rest of the paper is organized as follows.
Section~\ref{sec:algorithm} formulates the objectives of this paper in rigorous mathematical formulation and presents our main algorithm.
Section~\ref{sec:consistency-convergence-analysis} analyzes the consistency and the convergence of the proposed algorithm.
Section~\ref{sec:proofs} provides the technical proofs of the results in Section~\ref{sec:consistency-convergence-analysis}. 
Section~\ref{sec:examples} examines the numerical behavior of our algorithm with a few illustrating examples including a systems biology model for enzymatic reaction networks.

%
%
\section{Mathematical Formulation and Algorithm Description}\label{sec:algorithm}
Section~\ref{subsec:mathematical-formulation} recalls the preliminary mathematical notions required to state the objectives of this paper, and Section~\ref{subsec:algorithm-description} proposes the algorithm that achieves the objectives stated in Section~\ref{subsec:mathematical-formulation}.

\subsection{Mathematical Formulation}\label{subsec:mathematical-formulation}
Let $f:\pa\subset\R^m \to \R^n$ be a mapping on a hypercube $\pa \triangleq \prod_{i=1}^m[x_{\min}^i, x_{\max}^i]\subseteq \R^m$ with sufficient regularity so that $\mathcal M \triangleq f(\pa)$ is the associated $m$-dimensional manifold embedded in $\R^n$. 
We assume that $f$ is not analytically tractable but can be evaluated by a simulation code at arbitrary points in $\pa$. 
Such evaluation often requires significant computational resource.
Further, we will assume for simplicity of the discussion that $f$ is one-to-one.
Our goal is to find a configuration $\{x_1,\ldots,x_n\}\subseteq \pa$, for which the pairs $(x_1,f(x_1)),\ldots,(x_n,f(x_n))$ provide reliable information regarding $\mathcal M$.
In view of the discussion in Section~\ref{sec:intro}, we wish to generate samples  uniformly (in some sense) on the manifold $\mathcal M$. 
A natural notion of uniform distribution on a manifold can be stated in terms of the Hausdorff measure. 
Recall that the $m$-dimensional Hausdorff measure is defined as
\begin{equation}
\mathcal H^m (B) = \lim_{\delta\to 0} \inf_{\substack{B\subseteq \cup S_i,\\\text{diam}(S_i)\leq \delta}}\sum \Gamma_m \left(\frac{\text{diam}(S_i)}{2}\right)^m
\end{equation}
where the infimum is over all countable coverings $\{S_i\subseteq \R^n:i\in I\}$ of $B$, $\text{diam}(S_i) \triangleq \sup\{|x-y|: x,y\in S_i\}$, and $\Gamma_m$ is the volume of the $m$-dimensional unit ball.
Note that $\mathcal H_m$ is a natural generalization of Lebesgue measure, and if $m = n$, $\mathcal H^m$ coincides with the $m$-dimensional Lebesgue measure; see, for example,  \cite{Federer1996} for more details.

As \cite{Diaconis2013} points out, the area formula (see, for example, Section 3.2.5 of \citealt{Federer1996}) in geometric measure theory dictates how one should sample from a given density with respect to the Hausdorff measure.
\begin{proposition} (Area Formula, \citealt{Federer1996}) If $f:\R^m \to \R^n$ is Lipschitz and $m\leq n$, for any measurable $A$ and measurable $g:\R^n\to\R$,
\begin{equation}\label{eq:area_formula}
\int_A g(f(x))J_mf(x) \lambda^m(dx)  = \int_{\R^n} g(y)(\#\{A\cap f^{-1}(y)\})\ \mathcal H^m(dy)
\end{equation}
where $\lambda^m$ denotes the $m$-dimensional Lebesgue measure, $\#S$ denotes the cardinality of the set $S$, and $J_k f$ is the $k$-dimensional Jacobian of $f$. In this special case where $k = m\leq n$, the $k$-dimensional Jacobian is equal to
$$J_mf(x) = \sqrt{\textup{det}\big(Df(x)^TDf(x)\big)},$$
where $Df(x)$ is the differential of $f$ at $x$
$$
Df=
\left(
\begin{array}{cccc}
\frac{\partial}{\partial x_1} f_1
&\frac{\partial}{\partial x_2} f_1
&\ldots
&\frac{\partial}{\partial x_m} f_1
\\
\frac{\partial}{\partial x_1} f_2
&\frac{\partial}{\partial x_2} f_2
&\ldots
&\frac{\partial}{\partial x_m} f_2
\\
\vdots
&\vdots
&\ddots
&\vdots
\\
\frac{\partial}{\partial x_1} f_n
&\frac{\partial}{\partial x_2} f_n
&\ldots
&\frac{\partial}{\partial x_m} f_n
\end{array}
\right)
.
$$
\end{proposition}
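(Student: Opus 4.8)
The plan is to reduce the identity to its simplest instance and then build it back up by a linearization argument. First I would observe that it suffices to treat nonnegative $g$ (split $g=g^{+}-g^{-}$), and then, by the monotone convergence theorem and linearity, to treat $g=\mathbbm 1_{B}$ for Borel $B\subseteq\R^{n}$ (simple functions increasing to $g$). Writing $N(y,A)\triangleq\#\{A\cap f^{-1}(y)\}$ and replacing $A$ by $A\cap f^{-1}(B)$, the indicator case is in turn equivalent to the unweighted statement
\begin{equation*}
\int_{A} J_m f(x)\,\lambda^{m}(dx)=\int_{\R^{n}} N(y,A)\,\mathcal H^{m}(dy)\qquad\text{for every Borel }A\subseteq\R^{m}. \tag{$\star$}
\end{equation*}
Along the way one must also check that $y\mapsto N(y,A)$ is $\mathcal H^{m}$-measurable; this will be transparent from the countable decomposition below. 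By Rademacher's theorem $f$ is differentiable $\lambda^{m}$-a.e., so $Df(x)$ and $J_m f(x)$ are defined a.e.

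Two ingredients feed into $(\star)$. The first is the linear case: if $L:\R^{m}\to\R^{n}$ is linear and injective, then $\mathcal H^{m}(L(A))=J_m L\cdot\lambda^{m}(A)$ for measurable $A$. This follows from the polar decomposition $L=O\circ S$ with $S=(L^{T}L)^{1/2}$ a symmetric positive-definite automorphism of $\R^{m}$ and $O$ a linear isometry of $\R^{m}$ onto an $m$-plane in $\R^{n}$: isometries preserve $\mathcal H^{m}$, the restriction of $\mathcal H^{m}$ to an $m$-plane coincides with Lebesgue measure (this is exactly why the normalization $\Gamma_m$ appears in the definition of $\mathcal H^{m}$), and $S$ scales $\lambda^{m}$ by $\det S=(\det(L^{T}L))^{1/2}=J_m L$. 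The second ingredient concerns the bad set $Z\triangleq\{x\in A:\ Df(x)\text{ does not exist, or exists with rank}<m\}$: it affects neither side of $(\star)$. Indeed $J_m f=0$ on the part of $Z$ where $Df$ exists, which kills the left side; and for the right side one checks $\mathcal H^{m}(f(Z))=0$, because the set where $Df$ fails to exist is $\lambda^{m}$-null and a Lipschitz map sends $\lambda^{m}$-null sets to $\mathcal H^{m}$-null sets (from $\mathcal H^{m}(f(E))\le(\mathrm{Lip}\,f)^{m}\,\mathcal H^{m}(E)$), while on the set where $Df$ exists with rank $<m$ a standard covering / Sard-type estimate shows the image is $\mathcal H^{m}$-null. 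Hence in $(\star)$ we may replace $A$ by $A_{*}\triangleq A\setminus Z$.

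It then remains to prove $(\star)$ on $A_{*}$, where $Df$ exists and is injective. Fix $t>1$. The key structural step is a linearization lemma: $A_{*}$ can be written as a countable disjoint union of Borel sets $E_{k}$ such that on each $E_{k}$ the map $f$ is injective and there is an injective linear $L_{k}$ with $t^{-1}|L_{k}(u-v)|\le|f(u)-f(v)|\le t\,|L_{k}(u-v)|$ for $u,v\in E_{k}$ and $t^{-1}J_m L_{k}\le J_m f(x)\le t\,J_m L_{k}$ for $x\in E_{k}$; this is obtained by a Lusin-type approximation of the a.e.-defined map $x\mapsto Df(x)$ by countably many constant maps, together with the openness of the condition ``close to a fixed injective linear map.'' Given such a decomposition, the bi-Lipschitz bounds give $t^{-m}\mathcal H^{m}(L_{k}(E_{k}))\le\mathcal H^{m}(f(E_{k}))\le t^{m}\mathcal H^{m}(L_{k}(E_{k}))$, and combining with the linear case and the Jacobian bounds yields $t^{-2m}\int_{E_{k}}J_m f\,d\lambda^{m}\le\mathcal H^{m}(f(E_{k}))\le t^{2m}\int_{E_{k}}J_m f\,d\lambda^{m}$. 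Since $f|_{E_{k}}$ is injective, $\mathcal H^{m}(f(E_{k}))=\int_{\R^{n}}N(y,E_{k})\,\mathcal H^{m}(dy)$; summing over $k$ and using $N(\cdot,A)=N(\cdot,A_{*})=\sum_{k}N(\cdot,E_{k})$ ($\mathcal H^{m}$-a.e.) and $\int_{A}J_m f=\int_{A_{*}}J_m f=\sum_{k}\int_{E_{k}}J_m f$ gives $(\star)$ up to the factor $t^{\pm2m}$; letting $t\downarrow1$ finishes the proof. The main obstacle is exactly this linearization lemma, together with the $\mathcal H^{m}$-nullity of $f$ on the rank-deficient set; the rest is measure-theoretic bookkeeping (the ascent from indicators to general $g$, and additivity of both sides over the partition). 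Under the stronger regularity effectively assumed in this paper, where $Df$ is genuinely continuous and $\mathcal M$ is a smooth manifold, the linearization is considerably easier — one partitions using honest continuity of $x\mapsto Df(x)$ rather than a Lusin approximation — and the rank-deficient set is handled by a direct Sard-type argument.
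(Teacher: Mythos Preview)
Your outline is a correct and essentially standard proof of the Area Formula: reduce to indicators by monotone convergence, dispose of the rank-deficient/non-differentiable set, handle the linear case via polar decomposition, and then pass to the general case through the linearization (Lusin-type) decomposition with a parameter $t\downarrow 1$. This is precisely the route taken in Federer's \emph{Geometric Measure Theory} (Section~3.2), which the paper cites.

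That said, the paper does not supply its own proof of this proposition at all; it merely quotes the result from \citet{Federer1996} and uses it as a black box. So there is no ``paper's proof'' to compare against --- you have written out what the paper deliberately omits. Your sketch is sound, with the usual caveat that the two genuinely technical steps (the linearization lemma and $\mathcal H^{m}$-nullity of the image of the rank-deficient set) are exactly where the real work lies, as you correctly flag.
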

Now our goal can be formulated as sampling from a given distribution $\eta(\cdot)$ supported on the manifold $\M$. 
In particular, we assume that $\eta(\cdot)$ is absolutely continuous with respect to the Hausdorff measure on $\M$, and the density (i.e., Radon-Nikodym derivative) is $\mu$, which is known up to a multiplicative constant. 
In view of the area formula (\ref{eq:area_formula}), to generate samples from $\eta$, one can generate samples $x_1, x_2, \ldots$ in the parameter space from the density proportional to $\mu\circ f\cdot J_mf$ and then apply $f$ to $x_1, x_2, \ldots$ to obtain the samples $f(x_1), f(x_2), \ldots$ from the desired density on the manifold.
To see why such a procedure generates samples from the desired distribution, pick $g(x) = \mu(x) I_B(x)$  for a given set $B$. 
Then from \eqref{eq:area_formula},
$P(X \in f^{-1}(B)) = \int \mu\circ f (x) J_m f(x) I_{f^{-1}(B)}(x) dx = \int \mu(y) I_{B}(y) \mathcal H^m(dy) = P( Y \in B)$, where  $Y$ is an $\mathcal M$-valued random variable with the density $\mu$, and $X$ is an $\pa$-valued random variable with density $\mu\circ f \cdot J_m f$. 
Since $B$ was chosen arbitrarily, we see that $f(X)$ and $Y$ have the same distribution. 
That is, if we sample $X$ in $\pa$ from the density  $\mu\circ f \cdot J_mf$ (w.r.t. the Lebesgue measure), then $f(X)$ is a random variable in $\mathcal M$ with density $\mu$ (w.r.t. the Hausdorff measure $\mathcal H^m$).

Sampling from a given density with respect to the Lebesgue measure is a classical topic that has been  addressed by many traditional methods such as inversion, acceptance-rejection, and Markov chain Monte Carlo;
in particular, Markov chain Monte Carlo algorithms such as Metropolis-Hastings provide powerful means to sample from analytically intractable densities; see for example \cite{asmussen2007stochastic,liu2008monte,robert2004monte}. 
However, it should be noted that our goal is different from the context where Markov chain Monte Carlo methods are typically deployed. 
Markov chain Monte Carlo algorithms produce samples that conform with the target distribution by rejecting many proposals that do not conform with the target distribution. 
Deciding whether or not to reject the proposal requires computation of likelihoods, which corresponds to performing computer experiments in our context.
That is, if we use Markov chain Monte Carlo algorithms for our purpose, many computer experiments will have to be performed at the points that are not ``right".
Considering that our goal is to find minimal design points that produce a robust representation of the manifold, this feature of Markov chain Monte Carlo approach defeats the purpose.
Moreover, Markov chain Monte Carlo algorithms are typically designed to compute the integral of certain functions w.r.t.\ the given distribution, and hence, the conventional performance criteria are concerned with the mean square error of such integrals. 
On the other hand, our goal is not merely computing integrals w.r.t.\ the target measure.
Rather, our goal is to generate samples whose images cover the whole manifold $\mathcal M$ so that they represent the manifold in reliable ways.   
Among the existing sampling techniques, what comes closest to our spirit is perhaps non-parametric importance sampling: \cite{zhang1996nonparametric,givens1996local,kim2000nonparametric,zlochin2002efficient}.
In fact, if the evaluation of the derivative of $f$ is also available in addition to the evaluation of $f$ itself, our algorithm can be simplified to a version which can be seen as a variant of non-parametric importance sampling algorithms. 
However, previous studies of such algorithms have been focused on computing a single test function by approximating the zero-variance importance sampling measure with kernel density proposal. 
In view of our purpose, a distance between the target measure and the empirical measure of the samples produced by the algorithm would be a more proper perfomance measure. 
We analyze the convergence of our algorithm with respect to the Kantorovich-Rubinstein distance (which is also know as Wasserstein distance of order $1$).
To the best of the authors' knowledge, the convergence bound we establish for our algorithm in this paper is the first convergence analysis of non-parametric importance sampling type algorithms in terms of Kantorovich-Rubinstein distance.




We conclude this section with a brief review of Kantorovich-Rubinstein distance.
Kantorovich-Rubinstein distance is the $L^1$ distance between the optimal coupling of the two random variables whose marginal distributions coincide with the probability distributions. 
That is, 
$$\W_1(\mu, \nu) \triangleq \inf_{\pi\in M(\mu, \nu)} \int_{A\times A} \|x-y\|_1 d\pi(x,y) $$
where $M(\mu,\nu)$ denotes the set of all joint probability measures on $A\times A$ with marginals $\mu$ and $\nu$ respectively. Obviously, this is equivalent to
$$\W_1(\mu, \nu) = \inf_{X,Y} \E \| X - Y\|_1$$
where the infimum is taken over all coupling of $\mu$ and $\nu$. 
The following dual formula will be useful in the analysis of the modulus of continuity of the resampling step:
$$\W_1(\mu, \nu) = \sup \left\{\int_A f(x) \mu(dx) - \int_A f(x) \nu(dx): \|f\|_\lip \leq 1\right\}$$
where $\|f\|_\lip$ denotes the Lipschitz constant of $f$. 
Convergence in Kantorovich-Rubinstein distance implies weak convergence.
See, for example, Chapter 6 of \cite{villani2008optimal} for more details.

\subsection{Algorithm Description}\label{subsec:algorithm-description}

In this section, we propose an algorithm that generates a sequence of design points  whose empirical distribution converges to the target distribution $\mu$ in Kantorovich-Rubinstein distance.




The main idea of our algorithm is to start with arbitrarily distributed samples and then repeat iterations consisting of a resampling step and a perturbation step so that the empirical distribution of the samples becomes closer and closer to the target distribution. 
The resampling step is designed to shift the current empirical distribution toward the target distribution by eliminating the samples in concentrated regions, and duplicating the samples in  sparse regions; the perturbation step is designed to force the algorithm to explore new areas of the manifold while still respecting the information obtained through the previous iterations. 
More specifically, the algorithm works as follows.
At iteration 0, one starts with $N$ (arbitrarily chosen) points $x_1,\ldots,x_N$ and their images $y_1,\ldots,y_N$ where $y_i \triangleq f(x_i)$ for $i=1,\ldots,N$.
Let $B(y_i;r)$ denote the $n$-dimensional ball with radius $r$ centered at $y_i$, 
and $\hat r(y_i)$ denote the $k$\textsuperscript{th} nearest neighborhood distance from $y_i$. 
That is, 
$\hat r(y_i) = \hat r \circ f(x_i)\triangleq \inf\{r>0: \#\{j: y_j \in B(y_i;r)\} \geq k\}$.
For each $j>0$, the $j$\textsuperscript{th} iteration consists of a resampling step and a perturbation step. 
In the resampling step, one computes the resampling weights $G_i$ as follows:
\begin{equation}\label{eq:resampling-formula}
G_i \triangleq \frac{ \hat r^m\circ f(x_i)\cdot (\mu \circ f)( x_i) }{ \sum_{l=1}^N \hat r^m\circ f(x_l) \cdot (\mu \circ f)( x_l)},\qquad \forall i=1,\ldots,N
\end{equation}
where $\hat r^m(\cdot)$ denotes the $m$\textsuperscript{th} power of $\hat r(\cdot)$---i.e., $\hat r^m(y) = (\hat r(y))^m$ for each $y\in \R^n$. 

Then, the algorithm generates iid samples $x_1',\ldots,x_N'$ in such a way that $\P(x_i' = x_j) = G_j$ for each $i, j=1,\ldots, N$.
For the perturbation step, fix constants (algorithmic parameters) $q\in(0,1)$, $b\gg 1$, and $h>0$ as well as a scaled kernel $\tilde \zeta_h(\cdot;y)$ centered at $y$ with bandwidth $h$. 
The constants $q$ and $b$ regularize the perturbation density so that the density is bounded away from 0 and $\infty$, while $h$ is the perturbation bandwidth. 
The choice of these parameters and the precise construction of $\tilde \zeta_h$ will be discussed further in Section~\ref{sec:consistency-convergence-analysis} and Section~\ref{sec:examples}.
One starts the perturbation step with the samples $x_1',\ldots, x_N'$ generated in the previous resampling step, and constructs a smoothed and regularized density
\begin{equation}\label{eq:regularizedDensity}
\frac{\min\left\{b, \ q/\lambda^m(\pa) + (1-q)\frac{1}{N}\sum_{i=1}^{N}\tilde\zeta_h(x;x_i')\right\}}
{\int_A \min\left\{b, \ q/\lambda^m(\pa) + (1-q)\frac{1}{N}\sum_{i=1}^{N}\tilde\zeta_h(s;x_i')\right\}ds}
\end{equation}
for some sufficiently large $b$. 
For each $i$, generate $x_i$ from \eqref{eq:regularizedDensity}.
To generate a sample from this density, the algorithm generates a proposal $x^*$ uniformly (on $\pa$) with probability $q$, and according to $\frac1N\sum_{i=1}^N \tilde\zeta_h(x;x_i')$ with probability $1-q$. 
Set $a = q/\lambda^m(\pa) + (1-q)\frac{1}{N}\sum_{i=1}^{N}\tilde\zeta_h(x;x_i')$. 
Accept $x^*$ (i.e., set $x_i = x^*$) with probability $\max\{a,b\}/a$. 
If not accepted, reject $x^*$ and repeat until some proposal is accepted so that $x_i$ is set. 
When $x_1,\ldots,x_N$ are all generated from this procedure, one moves on to the resampling step of the next iteration. 
The whole procedure described so far is summarized in Algorithm~\ref{alg:iterative}.

\begin{algorithm}
	\caption{Space-Filling Algorithm (without Derivative)}
	\label{alg:iterative}
	\begin{algorithmic}
	\vspace{1pt}
	\State Generate $N$ samples $x_1, \cdots, x_{N} \in \pa$ from an initial distribution $p_0$;
	\While {$\hat\eta$ changes notably}
		\State $\{x_1',\cdots,x_{N}'\} \gets \textsc{Resample}(\{x_1,\cdots,x_{N}\})$;
		\State $\{x_1,\cdots,x_{N}\} \gets \textsc{Perturb}(\{x_1',\cdots,x_{N}'\})$;
		\State $\hat\eta \gets \frac1{N} \sum_{i=1}^{N} \delta_{f(x_i)}$;
	\EndWhile
	\State\Return $\hat\eta$\\

	\Function{Resample}{$\{x_1,\cdots,x_N\}$}
		\State $y_i \gets f(x_i)$,\qquad\qquad\qquad\qquad\qquad\qquad\ \ \ \ \,$i=1,\ldots,N$;\hfill
 		\State $\hat r_i \gets \text{$k$-NN distance from $y_i$}$,\qquad\qquad\qquad$i=1,\ldots,N$;\hfill 
		\State $G_i \gets {\hat r_{i}^m\cdot\mu(y_i)}/({\sum_{l=1}^n \hat r_{l}^m\cdot\mu(y_l)})$,\hspace{33.5pt}$i=1,\ldots,N$;\hfill 
		\For {$i=1:N$}
			\State Sample $x_i'$ so that $\P(x_i' = x_l) = G_l$, $\forall l=1,\ldots,N$;
		\EndFor
		\State \Return $\{x_1',\cdots,x_{N}'\}$;
	\EndFunction\\

	\Function{Perturb}{$\{x_1',\cdots,x_N'\}$}
		\For {$i=1:N$}
			\While {$x^*$ is not accepted}
					\State Draw $x^*$ from the density $q/\lambda^m(\pa) + (1-q)\frac{1}{n}\sum_{l=1}^n\tilde\zeta_h(x;x'_l)$;
				\State $a \gets q/\lambda^m(\pa) + (1-q)\frac{1}{n}\sum_{l=1}^n\tilde\zeta_h(x^*;x'_l)$;
				\State Accept $x^*$ and set $x_i=x^*$ with probability $\frac{\min\left\{a,b\right\}}{a}$;
			\EndWhile
		\EndFor
		\State \Return $\{x_1,\cdots,x_N\}$;
	\EndFunction\\
	\end{algorithmic}
\end{algorithm}
In the rest of this section, we provide some intuition behind the design and analysis of Algorithm~\ref{alg:iterative} and propose a simplified version Algorithm~\ref{alg:ideal} in case the derivative of $f$ can be evaluated in addition to the value of $f$ itself. 
Roughly speaking, $1/\hat r^m$ is approximately proportional to the density $p_Y$ of the current samples, and hence, the resampling formula assigns probability mass approximately proportional to the likelihood ratio $\mu/p_Y$ between the target density and the density of the current samples. 
In view of this, the resulting empirical distribution of the samples after the resampling step should be closer to the target distribution.
To be more specific, the resampling formula \eqref{eq:resampling-formula} can be understood as an approximate Boltzmann-Gibbs transformation:
recall that the Boltzmann-Gibbs transformation $\Psi_G(\eta)$ of $\eta$ w.r.t.\ the potential $G$ is defined as a  measure such that 
$$
\Psi_G(\eta)(dy) \triangleq \frac{G(y)\eta(dy)}{\int G(y)\eta(dx)}.
$$
Suppose that $x_1,\ldots,x_N$ are the samples generated by the perturbation step in the previous iteration  and we integrate a test function $g$ w.r.t.\ the weighted empirical measure $\hat\eta = \frac 1N \sum_{i=1}^N G_i\delta_{f(x_i)}$ and compare it to the integral of $g$ w.r.t.\ the target measure $\eta(dy) =\mu(y)\H^m(dy)$: 
\begin{equation}\label{eq:the_difference}
\sum_{i=1}^{N}\frac{\hat r^m(y_i)  \mu(y_i)}
{\sum_{j=1}^{N}\hat r^m(y_j) \mu(y_j) }g(y_i)
- \int_\M g(y)\eta(dy).
\end{equation}
Set the potential $\hat G_Y$ as
$$ \hat G_Y(y) \triangleq \mu(y)\Gamma_m \hat r^m(y)/(k/N)\triangleq \mu(y)/\hat p_Y(y)$$
where $\hat p_Y$ can be viewed as the $k$-nearest neighbor approximate density of $y_i$'s
and denote the empirical distribution of $y_i$'s with $\hat\eta_Y = \frac{1}{N}\sum_{i=1}^N\delta_{y_i}$.
Note that the first term in (\ref{eq:the_difference}) can be viewed as the expectation of $g$ with respect to the measure obtained by applying the Boltzmann-Gibbs transformation to $\hat\eta_Y$ with respect to the potential $\hat G_Y$. 
That is, $\hat \eta = \Psi_{\hat G_Y}(\hat\eta_Y)$.
Set the potential $G_Y$ to be
$$G_Y(y) \triangleq \lim_{N,k\to \infty,\ k/N \to 0} \mu(y)\Gamma_m r_{k,N}(y)^m/(k/N)=\mu(y)/p_Y(y),$$
where $r_{k,N}(\cdot)$ is the diameter of the ball that contains the $k/N$ fraction of $y_i$'s distribution, i.e., for each $z$,
$$\int_{B(z; r_{k,N}(z))} p_Y(y) \H^m(dy)  = k/N, $$
and $p_Y$ is the density of $y_i$'s.
Note that $\hat r$ approximates $r_{k,N}$. 
The second term in (\ref{eq:the_difference}) can be rewritten as 
$$\int_\M g(y)\frac{\mu(y)}{p_Y(y)}p_Y(y)\H^m(dy),$$ 
and hence, can be interpreted as the expectation of $g$ with respect to the measure obtained by applying the Boltzmann-Gibbs transformation to $\eta_Y(dy) \triangleq p_Y(y)\H^m(dy)$ with respect to the potential $G_Y$.
Therefore, (\ref{eq:the_difference}) can be re-written as
\begin{align*}
\hat \eta g - \eta g
=
\Psi_{\hat G_Y}(\hat\eta_Y)g - \Psi_{G_Y}(\eta_Y)g.
\end{align*}
In view of this, if $\hat\eta_Y$ is a good approximation of $\eta_Y$ and $\hat G_Y$ is a good approximation of $G_Y$, we expect that the difference will be small.
Since $g$ is an arbitrary test function, this suggests that the (weighted) empirical measure $\hat \eta$ resulting from the resampling formula should be a good approximation of $\eta$, which explains, at an intuitive level, why the resampling formula works. 
Finally, note that if we let $\xi$ be the probability measure (on $\pa\subset\R^m$) with density $\mu\circ f \cdot J_m f$, i.e., 
\begin{equation}\label{eq:density_of_xi}
\frac{d\xi}{d\lambda^m}(x) = \left(\mu\circ f(x)\right)\cdot \left(J_m f(x)\right),
\end{equation} 
then $\xi$ is the target measure on the parameter space. 
In other words, if $X\sim \xi$, then $f(X) \sim \eta$. 
It should also be noted that $(\hat r^m\circ f\cdot \mu\circ f)$ in \eqref{eq:resampling-formula} can be regarded as an (normalized) approximation of $(\mu\circ f\cdot J_m f \cdot \iota )$ where $1/\iota$ is the density from which  $x_i$-s are sampled from. 
In view of this, in case one can readily evaluate $J_mf$, a simplified version of Algorithm~\ref{alg:iterative} can be implemented by replacing $(\hat r^m\circ f\cdot \mu\circ f)$ in \eqref{alg:iterative} with $(\mu\circ f\cdot J_m f \cdot \iota )$. 
In this case, the perturbation step can also be simplified; it is not necessary to truncate the sampling density in the perturbation step at level $b$. 
Such a simplified version of the algorithm is summarized in Algorithm~\ref{alg:ideal}.

\begin{algorithm}
	\caption{Space-Filling Algorithm (with Derivative)}
	\label{alg:ideal}
	\begin{algorithmic}
	\vspace{1pt}
	\State Generate $N$ samples $x_1, \cdots, x_{N} \in \pa$ from an initial distribution $p_0$;
	\While {$\hat\eta$ changes notably}
		\State $\{x_1',\cdots,x_{N}'\} \gets \textsc{Resample}(\{x_1,\cdots,x_{N}\})$;
		\State $\{x_1,\cdots,x_{N}\} \gets \textsc{Perturb}(\{x_1',\cdots,x_{N}'\})$;
		\State $\hat\eta \gets \frac1{N} \sum_{i=1}^{N} \delta_{f(x_i)}$; 
	\EndWhile
	\State\Return $\hat\eta$\\

	\Function{Resample}{$\{x_1,\cdots,x_N\}$}
		\State $\mu_i \gets \mu\circ f(x_i)$ \qquad\qquad\qquad\qquad\qquad\qquad\qquad \ \ \ \,$i=1,\ldots,N$;\hfill
		\State $J_i \gets J_mf(x_i)$\qquad\hspace{137.5pt}$i=1,\ldots,N$;\hfill
		\State $\iota_i \gets \big(q/\lambda^m(\pa) + (1-q)\frac{1}{N}\sum_{l=1}^N\tilde\zeta_h(x_i;x'_l)\big)^{-1}$\hspace{15pt} $i=1,\ldots,N$;\hfill
		\State $G_i \gets {\mu_i\cdot J_i\cdot \iota_i }/({\sum_{l=1}^n \mu_i\cdot J_i\cdot \iota_l})$\qquad \hspace{54pt}$i=1,\ldots,N$;\hfill 
		\For {$i=1:N$}
			\State Sample $x_i'$ so that $\P(x_i' = x_l) = G_l$, $\forall l=1,\ldots,N$;
		\EndFor
		\State \Return $\{x_1',\cdots,x_{N}'\}$;
	\EndFunction\\

	\Function{Perturb}{$\{x_1',\cdots,x_N'\}$}
		\For {$i=1:N$}
			\State 
			Draw $x^*$ from the density $q/\lambda^m(\pa) + (1-q)\frac{1}{n}\sum_{l=1}^n\tilde\zeta_h(x;x'_l)$;
		\EndFor
		\State \Return $\{x_1,\cdots,x_N\}$;
	\EndFunction\\
	\end{algorithmic}
\end{algorithm}

%
%
\section{Consistency and Convergence Analysis}\label{sec:consistency-convergence-analysis}

In this section, we provide sufficient conditions for the convergence of the proposed algorithms.
We make the following assumptions: 
\begin{itemize}
\item[A1.] $f$ is $C^2_b(\pa)$;
\item[A2.] $\partial_i J_mf$ vanishes on the boundary $\partial \pa$ of $\pa$ for each $i=1,\ldots, m$.
\end{itemize}
The above conditions are imposed for the purpose of facilitating the convergence proof. 
We expect that Algorithm~\ref{alg:iterative} and Algorithm~\ref{alg:ideal} are consistent under much more general conditions. 
A2 simplifies the analysis since it allows $\tilde \zeta_h$ to produce consistent density estimation in terms of the derivative of the density in addition to the value of the density itself at the boundary; see (iii) of Lemma~\ref{lemma:new-l_infty_proximity}. 
If A2 does not hold, one can still prove the consistency by carefully dealing with the boundary of $\pa$ separately as in Appendix~\ref{sec:knn-proofs}. 
Our goal in this section is to show that the empirical distribution of the points $\{x_1,\ldots,x_{N}\}$ produced by Algorithm~\ref{alg:ideal} ``converges'' to the target distribution $\xi$ in $\W_1$ as the number of samples $N$ and the iterations $j$ grow. 
The precise statement will be given in Theorem~\ref{thm:main-result}.
The consistency analysis of Algorithm~\ref{alg:iterative} is more involved. 
We provide a discussion of Algorithm~\ref{alg:iterative} in Appendix~\ref{sec:knn-proofs}.

Before starting the analysis, we construct a kernel that is consistent near the boundary.
In simple words, we are defining these kernels with reflection along the boundaries. 
That is, if $\pa = [0,1]$, we fill in $[-1,0]$ with the mirror images of the points on $[0,1]$ (axis of reflection: $x=0$) and also fill in $[1,2]$ with the mirror images of the points on $[0,1]$ (axis of reflection: $x=1$) so that the data does not abruptly disappear outside of the boundary. 
This eliminates the boundary effect in the sense that the resulting kernel density estimation is consistent on the boundary; see (i) of Lemma~\ref{lemma:new-l_infty_proximity}. 
To be specific, consider a smooth and symmetric kernel $\zeta$ supported on $B(0;1)$, such as biweight kernel, triweight kernel, and tricube kernel, and
set
$$\tilde \zeta_h(x;y) \triangleq \tilde\zeta_h^{(m)}(x;y)$$
where 
$\tilde\zeta_h^{(0)}(x;y) \triangleq \zeta_h(x-y) \triangleq \frac{1}{h^m}\zeta\left(\frac{x-y}{h}\right)$
and $\tilde \zeta_h^{(i)}$ are defined recursively for $i=1,\ldots,m$ as
$$
\tilde \zeta_h^{(i+1)}\left(x;y\right) = \tilde \zeta_h^{(i)}(x;y) + \tilde\zeta_h^{(i)}\big(x;\,\text{refl}_{\min}(y;i+1)\big) + \tilde\zeta_h^{(i)}\big(x;\,\text{refl}_{\max}(y;i+1)\big),
$$
where $x = (x^1,\ldots,x^m)^T$, $y = (y^1,\ldots,y^m)^T$,
and 
$$
\text{refl}_{\min}(y;i) 
\triangleq
\left(\begin{array}{l}\phantom{2x_i^{\min}-\ }y^1\\\phantom{2x_i^{\min}-}\ \vdots\\\phantom{2x_i^{\min}-\ }y^{i-1}\\2x^i_{\min}-y^i\\\phantom{2x_i^{\min}-\ }y^{i+1}\\\phantom{2x_i^{\min}-}\ \vdots\\\phantom{2x_i^{\min}-\ }y^m\end{array}\right)
\qquad\text{and} \qquad
\text{refl}_{\max}(y;i) 
\triangleq
\left(\begin{array}{l}\phantom{-}y^1\\\phantom{-}\ \vdots\\\phantom{-}y^{i-1}\\-y^i+2x^i_{\max}\\\phantom{-}y^{i+1}\\\phantom{-}\ \vdots\\\phantom{-}y^m\end{array}\right).
$$
Note that for $x,y$ such that $|x-y|>h$,
\begin{equation}\label{zeta-is-zero}
\tilde \zeta_h(x;y) = 0,
\end{equation}
and for any $x\in \pa$,
\begin{equation}\label{zeta-integrates-to-one}
\int_\pa \tilde \zeta_h(x;y) dy = 1.
\end{equation}

For the purpose of the analysis, it is convenient to decompose each iteration of Algorithm~\ref{alg:ideal} into four smaller conceptual pieces---smoothing step, smoothed sampling step, reweighting step, reweighted sampling step. 
At the beginning of the $(j+1)$\textsuperscript{th} iteration, the algorithm starts with the empirical distribution $\hat \xi^{[j]} \triangleq \frac{1}{N}\sum_{i=1}^{N}\delta_{X_i^{[j]}}$ of samples $X_1^{[j]}, \ldots, X_{N}^{[j]}$ from the previous iteration
.
In the first step (smoothing step) of the iteration, 
the algorithm produces a probability density $\xi^{[j+1/2]}$ by smoothing and regularizing the empirical measure $\hat \xi^{[j]}$%
.
In the second step (smoothed sampling step), 
the algorithm generates iid samples $X_1^{[j+1/2]}, \ldots, X_{N}^{[j+1/2]}$ from $\xi^{[j+1/2]}$ to obtain the empirical measure $\hat \xi^{[j+1/2]}\triangleq \frac{1}{N}\sum_{i=1}^{N} \delta_{X_i^{[j+1/2]}}$%
.
In the third step (reweighting step), 
the algorithm adjusts weights of the each probability mass of the empirical distribution to get a new distribution $\xi^{[j+1]} \triangleq \frac{1}{N}\sum_{i=1}^{N} w_i\delta_{X_i^{[j+1/2]}}$, which has the same support as $\hat \xi^{[j+1/2]}$ but shifted (via redistribution of the weights $w_i$'s) toward the target distribution%
. 
In the fourth step (reweighted sampling step),
the algorithm generates samples $X_1^{[j+1]}, \ldots, X_{N}^{[j+1]}$ from $\xi^{[j+1]}$ and constructs a new empirical distribution $\hat \xi^{[j+1]}$.
(Note that for design points $X_{i}^{[j]}$, we are using the superscript with square bracket to denote the index of the iteration and the subscript to denote the index of the sample within the iteration.)
The first two steps correspond to the perturbation  step in Algorithm~\ref{alg:ideal} and the last two steps correspond to the resampling step in Algorithm~\ref{alg:ideal}.
Schematically, the process can be summarized as follows:
\begin{align*}
0\text{\textsuperscript{th} iteration: }\qquad
&{\color{white}\hat \xi^{[0]} \ \ 
\stackrel{\text{smoothing}}{\longrightarrow} \ \, \, 
\xi^{[\ \,-1/2]} \ \ 
\stackrel{\text{sampling}}{\longrightarrow} }\ \ \; 
\hat \xi^{[\; \,-1/2]} \ \;
\stackrel{\text{reweighting}}{\longrightarrow} \ \
\xi^{[0]} \ \ 
\stackrel{\text{sampling}}{\longrightarrow} \ \ 
\hat \xi^{[0]} \ \ 
\\
1\text{\textsuperscript{st} iteration: }\qquad
&\hat \xi^{[0]} \ \ 
\stackrel{\text{smoothing}}{\longrightarrow} \ \ 
\xi^{[0+1/2]} \ \ 
\stackrel{\text{sampling}}{\longrightarrow} \ \ 
\hat \xi^{[0+1/2]} \ \ 
\stackrel{\text{reweighting}}{\longrightarrow} \ \
\xi^{[1]} \ \ 
\stackrel{\text{sampling}}{\longrightarrow} \ \ 
\hat \xi^{[1]} \ \ 
\\
2\text{\textsuperscript{nd} iteration: }\qquad
&\hat\xi^{[1]} \ \ 
\stackrel{\text{smoothing}}{\longrightarrow}\ \ 
\xi^{[1+1/2]} \ \ 
\stackrel{\text{sampling}}{\longrightarrow} \ \ 
\hat \xi^{[1+1/2]} \ \ 
\stackrel{\text{reweighting}}{\longrightarrow} \ \
\xi^{[2]} \ \ 
\stackrel{\text{sampling}}{\longrightarrow} \ \ 
\hat \xi^{[2]} \ \ 
\\
\vdots \qquad \qquad \ \ & 
\end{align*}
where $\hat\xi^{[\ -1/2]}$ is the empirical distribution of an arbitrary initial samples. A typical choice would be i.i.d.\ uniform samples from $\pa$. 
The precise description of the four steps is as follows: to generate samples from the target measure $\xi$ on the parameter space $\pa$, (or equivalently, $\eta$ on the manifold $\M$),
\begin{itemize}
	\item At iteration 0, we skip the smoothing and smoothed sampling step, and start directly with an arbitrary initial samples $X_1^{[\ -1/2]},\ldots,X_{N}^{[\ -1/2]}$. (Then we proceed to the reweighting step and the reweighted sampling step.)

	\item At iteration $j+1$, we start with an empirical distribution $\hat \xi^{[j]}$ of the samples $X_1^{[j]},\ldots,X_{N}^{[j]}$ from the previous iteration
	$$\hat\xi^{[j]} \triangleq \frac{1}{N}\sum_{i=1}^{N} \delta_{X_i^{[j]}}.$$
	Now, for suitably chosen parameters $h$ and $q$ (whose choice will be discussed later in this section and Section~\ref{sec:examples}),  
	\begin{itemize}\item[]\begin{itemize}
		\item[Step 1)] 
		Smooth out the empirical distribution $\hat\xi^{[j]}$ with $\zeta_h$ to get $\tilde\xi^{[j+1/2]}$
		$$
		\frac{d\tilde\xi^{[j+1/2]}}{d\lambda^m}(x) 
		\triangleq \frac{1}{N}\sum_{i=1}^{N}\tilde\zeta_h(x;X_i^{[j]}).
		$$
		Set $\xi^{[j+1/2]}$ as a mixture of the uniform distribution (on $\pa$) and $\tilde \xi^{[j+1/2]}$ with probability $q$ and $1-q$, respectively:
		$$
		\frac{d\xi^{[j+1/2]}}{d\lambda^m}(x) \triangleq q/\lambda^m(\pa) + (1-q)\frac{1}{N}\sum_{i=1}^{N}\tilde \zeta_h(x;X_i^{[j]}) 
		$$
		where $\lambda^m(\pa)$ denotes the $m$ dimensional volume of $\pa$.

		\item[Step 2)] Generate (for example, via acceptance-rejection) iid samples $X_1^{[j+1/2]},\ldots,X_{N}^{[j+1/2]}$ from $\xi^{[j+1/2]}$, and set $\hat \xi^{[j+1/2]}$ to be the empirical distribution of the generated samples:
		$$
		\hat\xi^{[j+1/2]} \triangleq \frac{1}{N}\sum_{i=1}^{N} \delta_{X_i^{[j+1/2]}}
		.
		$$

		\item[Step 3)] 
		Evaluate $f$ and $J_m f$ at $X_i^{[j]}$-s and re-distribute the weights as follows:
		\begin{equation}\label{eq:xi-j-plus-1}
		\xi^{[j+1]}
		\triangleq \sum_{i=1}^{N} \frac{(\mu\circ f\cdot J_m f \cdot \iota )(X_i^{[j+1/2]})}{\sum_{l=1}^{N} (\mu\circ f\cdot J_m f \cdot \iota)(X_l^{[j+1/2]})}\delta_{X_i^{[j+1/2]}},
		\end{equation}
		where $\iota \triangleq d\lambda^m/d\xi^{[j+1/2]}$ is the reciprocal of the density of $\xi^{[j+1/2]}$ w.r.t.\ the Lebesgue measure.
		Recall that $\mu$ is the Radon-Nikodym derivative of the target distribution $\eta$.

		\item[Step 4)] Generate  iid samples  $X_1^{[j+1]},\ldots,X_{N}^{[j+1]}$ from $\xi^{[j+1]}$ to get an empirical distribution 
		$$
		\hat\xi^{[j+1]} \triangleq \frac{1}{N} \sum_{i=1}^{N} \delta_{X_i^{[j+1]}}
		$$
	\end{itemize}\end{itemize}

	\item Repeat step 1)-4) to get $\hat\xi^{[j+2]}, \hat\xi^{[j+3]}, \ldots$
\end{itemize}
The main result of this section is that the above algorithm is consistent.
Let 
\begin{equation*}
\alpha(N) \triangleq 
\begin{cases}
N^{-1/2} & \mathrm{if}\ m = 1
\\
N^{-1/2}\log(N+1) & \mathrm{if}\ m = 2
\\
N^{-1/m} & \mathrm{otherwise}
\end{cases}
\end{equation*}
and, for any function $g:A \to \R^d$ on a domain $A$, let $\|g\|_\infty \triangleq \sup_{x\in A} |g(x)|$ where $|\cdot|$ is the Euclidean norm, and let $\|g\|_\lip \triangleq \sup_{x,y\in A,\,x\neq y} \frac{|g(x) - g(y)|}{|x-y|}$.
Let $D \triangleq \diam(\pa)$, $V \triangleq \lambda^m(\pa)$, and $c$ denote the constant from Proposition~\ref{prop:key_prop}  that depends only on $\|d\xi/d\lambda^m\|_\infty$, $\|d\xi/d\lambda^m\|_\lip$, $\|\nabla(d\xi/d\lambda^m)\|_\lip$, $q$, and $m$. 
\begin{theorem}\label{thm:main-result}
Suppose that $h$ is chosen in such a way that
$$c\alpha(N)\big(D^2V^2(\|\zeta_h\|_\lip + \|\nabla\zeta_h\|_\lip) + DV\|\zeta_h\|_\lip\big) < 1.$$
Algorithm~\ref{alg:ideal} is consistent in the sense that the Kantorovich-Rubinstein distance between the output $\hat\xi^{[j]}$ and the target measure $\xi$ converges to 0 in $L_1$ as $N\to\infty$ and $j\to\infty$.
More specifically,
\begin{align}
\E \W_1(\hat\xi^{[j]},\xi) 
&
\leq 
\frac{c\alpha(N)\big( (D^2V^2+ DV)h + (2D+D^2V)\big) }{1-c\alpha(N)\big(D^2V^2(\|\zeta_h\|_\lip + \|\nabla\zeta_h\|_\lip) + DV\|\zeta_h\|_\lip\big)}
\nonumber
\\
&
\qquad 
+\Big(c\alpha(N)\big(D^2V^2(\|\zeta_h\|_\lip
+ \|\nabla\zeta_h\|_\lip) + DV\|\zeta_h\|_\lip\big)\Big)^j\E \W_1(\hat \xi^{[0]},\xi).
\label{eq:main-result}
\end{align}
\end{theorem}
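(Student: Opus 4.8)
The plan is to track the evolution of the Kantorovich--Rubinstein distance $\W_1(\hat\xi^{[j]},\xi)$ across one full iteration and then iterate the resulting recursion. The key object is a one-step contraction-type inequality of the form
\begin{equation*}
\E\W_1(\hat\xi^{[j+1]},\xi) \;\leq\; \rho\,\E\W_1(\hat\xi^{[j]},\xi) \;+\; \epsilon_N,
\end{equation*}
where $\rho = c\alpha(N)\big(D^2V^2(\|\zeta_h\|_\lip + \|\nabla\zeta_h\|_\lip) + DV\|\zeta_h\|_\lip\big)$ is the contraction factor appearing in the theorem and $\epsilon_N = c\alpha(N)\big((D^2V^2+DV)h + (2D+D^2V)\big)$ is the additive error. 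Once this is in place, unrolling the recursion over $j$ iterations (a geometric sum, valid precisely because the hypothesis $\rho<1$ is assumed) yields
\begin{equation*}
\E\W_1(\hat\xi^{[j]},\xi) \;\leq\; \frac{\epsilon_N}{1-\rho} \;+\; \rho^{\,j}\,\E\W_1(\hat\xi^{[0]},\xi),
\end{equation*}
which is exactly \eqref{eq:main-result}; consistency follows since $\alpha(N)\to0$ forces both $\rho\to0$ and $\epsilon_N\to0$ (the choice of $h$ must be coordinated with $N$ so that $\rho<1$, as in the hypothesis).

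To establish the one-step inequality I would decompose the iteration into the four conceptual sub-steps introduced in the text (smoothing, smoothed sampling, reweighting, reweighted sampling) and bound the $\W_1$-error contributed by each, using the triangle inequality for $\W_1$. The two \emph{sampling} steps (Steps 2 and 4) replace a probability density by the empirical measure of $N$ i.i.d.\ draws from it; here I would invoke the standard rate for the expected $\W_1$ distance between a measure on a bounded $m$-dimensional domain and its empirical measure, which is exactly the source of $\alpha(N)$ (with the $\log$ correction in dimension $2$ and the $N^{-1/2}$ rate in dimension $1$). The \emph{smoothing} step (Step 1) perturbs the measure by convolution with $\tilde\zeta_h$; since $\tilde\zeta_h$ is supported on $B(0;h)$, this moves mass by at most $O(h)$, contributing an $O(h)$ term (this is where the $h$-dependent piece of $\epsilon_N$ comes from, modulated by the $D^2V^2+DV$ factors that track how the regularization mixture and normalization amplify errors). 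The \emph{reweighting} step (Step 3) is the Boltzmann--Gibbs transformation with potential $\mu\circ f\cdot J_mf\cdot\iota$; its effect on $\W_1$ is controlled by the modulus of continuity of this transformation, which is where Proposition~\ref{prop:key_prop} (and its constant $c$, depending on $\|d\xi/d\lambda^m\|_\infty$, the two Lipschitz norms, $q$, and $m$) enters, and where the Lipschitz norms $\|\zeta_h\|_\lip$ and $\|\nabla\zeta_h\|_\lip$ of the kernel appear — they measure how sensitively the weights depend on the sample locations. The dual (Kantorovich--Rubinstein) formula for $\W_1$ is the natural tool for the reweighting bound, since the weights are built from a density that is Lipschitz once smoothed.

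The main obstacle is the reweighting step: one must show that applying the Boltzmann--Gibbs transformation does not expand $\W_1$ by more than the stated factor, \emph{and} that the error incurred is linear in the incoming $\W_1$ distance rather than merely bounded — this linearity is what makes the recursion contractive and is presumably the content of Proposition~\ref{prop:key_prop}. The delicate point is that the potential itself depends on the current (smoothed) empirical measure through $\iota = d\lambda^m/d\xi^{[j+1/2]}$, so the transformation is not with respect to a fixed potential; one has to bound simultaneously the discrepancy between $\hat\xi^{[j+1/2]}$ and its smooth version and the discrepancy between the data-dependent potential and its idealized limit (the $\hat G_Y$ versus $G_Y$ comparison sketched in Section~\ref{subsec:algorithm-description}), and the regularization parameters $q$ and $b$ (here only $q$, since Algorithm~\ref{alg:ideal} needs no truncation) are precisely what keep these densities bounded away from $0$ and $\infty$ so the ratios stay Lipschitz. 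Assumptions A1 and A2, together with the reflected-kernel construction and Lemma~\ref{lemma:new-l_infty_proximity}, are what guarantee the smoothed density and its gradient are consistent uniformly up to the boundary, which is needed to control the potential. Everything else — the empirical-measure rates, the geometric summation, the $O(h)$ smoothing bound — is routine once this core estimate is secured, so I would spend the bulk of the effort isolating and proving Proposition~\ref{prop:key_prop} and then assembling the four bounds via the triangle inequality.
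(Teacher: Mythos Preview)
Your proposal is correct and follows essentially the same route as the paper: the proof of Theorem~\ref{thm:main-result} in the paper is exactly the triangle-inequality decomposition $\E\W_1(\hat\xi^{[j+1]},\xi)\leq \E\W_1(\hat\xi^{[j+1]},\xi^{[j+1]})+\E\W_1(\xi^{[j+1]},\xi)$, with the first term bounded by $cD\alpha(N)$ via the Fournier--Guillin empirical-measure rate and the second by Proposition~\ref{prop:key_prop}, after which the recursion $w_{j+1}\leq a+bw_j$ is solved as you describe. Your discussion of the four sub-steps and the data-dependent potential is really an outline of how Proposition~\ref{prop:key_prop} itself is proved (via Lemma~\ref{lem:new-mod_of_cont} for the Boltzmann--Gibbs modulus of continuity and Lemma~\ref{lemma:new-l_infty_proximity} for the $\|\cdot\|_\infty$ and $\|\cdot\|_{\Lip}$ bounds on $d\xi/d\xi^{[j+1/2]}$), which matches the paper's structure precisely.
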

We defer the proof of Theorem~\ref{thm:main-result} to Section~\ref{sec:proofs}
and conclude this section with a few remarks regarding the implications of the theorem.

Note first that $\hat \xi^{[j]}$ will approach $\xi$ rapidly at the beginning and then linger around $\xi$ as $j$ increases. 
In view of this, one should choose the number of iterations $j$ in such a way that the two terms in \eqref{eq:main-result} are of the same order. 
That is, for a given $N$, $j$ should be chosen roughly at around
\begin{equation}\label{quantity:number-of-iterations}
\frac{\log\W_1(\hat \xi^{[0]}, \xi) - \log\Big(c\alpha(N)\big( (D^2V^2+ DV)h + (2D+D^2V)\big)\Big) }{-\log \Big(c\alpha(N)\big(D^2V^2(\|\zeta_h\|_\lip
+ \|\nabla\zeta_h\|_\lip) + DV\|\zeta_h\|_\lip\big)\Big)}.
\end{equation}
If $j$ is much smaller than \eqref{quantity:number-of-iterations}, the second term will dominate the error while it can be reduced geometrically and hence losing opportunities to reduce the total error efficiently; 
on the other hand, if $j$ is much larger than \eqref{quantity:number-of-iterations}, the first term will dominate the error and hence one will waste extra efforts without much gain in terms of the error. 
Note also that if the starting configuration $\hat \xi^{[0]}$ is reasonably close to $\xi$ to begin with, Algorithm~\ref{alg:ideal} will require very small number of iterations before it stabilizes. 
This is consistent with our numerical experiences reported in Section~\ref{sec:examples}.

There is a tradeoff between choosing $h$ small and large. 
Note that the limit supremum condition in Theorem~\ref{thm:main-result} requires that $h$ should not decrease to 0 too fast compared to the rate at which the size $N$ of samples grows.
On the other hand, in case $D$ and $V$ are large (for example, as in our Example~\ref{eg:exponential}, where $D=100\sqrt 2$ and $V = 10,000$), the error that does not decrease at a geometric rate w.r.t.\ the number of iterations---i.e., the first term in \eqref{quantity:number-of-iterations}---will be more or less proportional to $cD^2V^2\alpha(N)h$ for the practical range of values of $N$, and hence, small $h$ is desired.
That is, larger $h$ allows to make sure that the algorithm is stable w.r.t.\ the iterations and for smaller number of samples $N$; smaller $h$ allows one to reduce the final error of the algorithm after a sufficiently large number of iterations.

Finally, while our convergence bound is explicit, such explicitness seems to come at the expense of tight asymptotics.
Note that $\|\zeta_h\|_\lip$ and $\|\nabla \zeta_h\|_\lip$ are typically of order $h^{-m-1}$ and $h^{-m-2}$, respectively. 
That is, $h$ should not decrease to 0 at a faster rate than $N^{-\frac{1}{m(m+2)}}$ to satisfy the limit supremum condition. 
This is a very slow rate even for moderately large $m$'s, and we expect that the algorithm would be stable for $h$'s that decreases faster than such a rate guaranteed by Theorem~\ref{thm:main-result}. 
The practical choice of $h$ and $q$ will be further discussed in Section~\ref{sec:examples}.

%
%
\section{Proofs}\label{sec:proofs}
This section provides the proof of Theorem~\ref{thm:main-result}. 
For the notational convenience, we adopt the (common) convention that for a function $h:A\to\R$ and a measure $\mu$ on $A$,
$$
\mu h = \int_A h(y)\mu(dy).
$$
The proof of Theorem~\ref{thm:main-result} hinges critically on the following proposition.
\begin{proposition}\label{prop:key_prop}
%
There exists a constant $c$ depending only on $\|d\xi/d\lambda^m\|_\infty$, $\|d\xi/d\lambda^m\|_\lip$, $\|\nabla(d\xi/d\lambda^m)\|_\lip$, $q$, and $m$ such that
\begin{align*}
\E \W_1( \xi^{[j+1]}, \xi) 
&
\leq 
c\alpha(N)\big(D^2V^2(\|\zeta_h\|_\lip + \|\nabla\zeta_h\|_\lip) + DV\|\zeta_h\|_\lip\big) \E 
\W_1(\xi^{[j]}, \xi)
\\
&
\qquad
+ 
c\alpha(N)\big( (D^2V^2+ DV)h + (D+D^2V)\big) 
\end{align*}
where $D = \diam(\pa)$ and $V = \lambda^m(\pa)$.
\end{proposition}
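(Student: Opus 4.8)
The plan is to propagate the Kantorovich--Rubinstein distance through the four conceptual substeps of one iteration,
\[
\xi^{[j]}\ \xrightarrow{\text{sample}}\ \hat\xi^{[j]}\ \xrightarrow{\text{smooth}}\ \xi^{[j+1/2]}\ \xrightarrow{\text{sample}}\ \hat\xi^{[j+1/2]}\ \xrightarrow{\text{reweight}}\ \xi^{[j+1]},
\]
and to exploit the fact that the reweighting step is an \emph{exact} Boltzmann--Gibbs transformation. Writing $p\triangleq d\xi/d\lambda^m$ and $g^{[j+1/2]}\triangleq d\xi^{[j+1/2]}/d\lambda^m$, the potential in \eqref{eq:xi-j-plus-1} is $G=\mu\circ f\cdot J_mf\cdot\iota=p/g^{[j+1/2]}$, so that $\xi^{[j+1]}=\Psi_G(\hat\xi^{[j+1/2]})$ while $G\cdot g^{[j+1/2]}=p$ forces $\Psi_G(\xi^{[j+1/2]})=\xi$ identically. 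Hence $\W_1(\xi^{[j+1]},\xi)=\W_1\big(\Psi_G(\hat\xi^{[j+1/2]}),\Psi_G(\xi^{[j+1/2]})\big)$ and the proof splits into three ingredients: a modulus of continuity for $\Psi_G$ in $\W_1$; the sampling error $\W_1(\hat\xi^{[j+1/2]},\xi^{[j+1/2]})$; and control of how $G$ (equivalently $g^{[j+1/2]}$) depends on the incoming $\hat\xi^{[j]}$, which is what converts the estimate into a recursion in $\E\W_1(\xi^{[j]},\xi)$.

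For the modulus of continuity I would use the Kantorovich dual formula: for a test function $\phi$ with $\|\phi\|_\lip\le1$, centered so that $\|\phi\|_\infty\le D$, and with $\nu=\xi^{[j+1/2]}$ (so that $\nu G=1$),
\[
\Psi_G(\mu)\phi-\Psi_G(\nu)\phi=(\mu-\nu)(G\phi)-\Psi_G(\mu)\phi\cdot(\mu-\nu)(G),
\]
and since $|(\mu-\nu)(\psi)|\le\|\psi\|_\lip\W_1(\mu,\nu)$ this yields $\W_1(\Psi_G\mu,\Psi_G\nu)\le(\|G\|_\infty+2D\|G\|_\lip)\W_1(\mu,\nu)$. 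The regularization is precisely what keeps $\|G\|_\infty,\|G\|_\lip$ finite: from $g^{[j+1/2]}\ge q/\lambda^m(\pa)=q/V$ and $\|g^{[j+1/2]}\|_\lip\le(1-q)C_m\|\zeta_h\|_\lip$ ($C_m$ absorbing the finitely many reflected copies in $\tilde\zeta_h$), the elementary bound $\|p/g\|_\lip\le\|p\|_\lip/\inf g+\|p\|_\infty\|g\|_\lip/(\inf g)^2$ gives $\|G\|_\infty\le(V/q)\|p\|_\infty$ and $\|G\|_\lip\lesssim(V/q)\|p\|_\lip+(V/q)^2\|p\|_\infty\|\zeta_h\|_\lip$. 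For the sampling error, conditionally on $\hat\xi^{[j]}$ the $X_i^{[j+1/2]}$ are i.i.d.\ from $\xi^{[j+1/2]}$ on $\pa$ (of diameter $D$ and volume $V$), and the standard dyadic-cube/Cauchy--Schwarz estimate for empirical measures in $\W_1$ --- which is exactly what produces the three regimes in $\alpha(N)$ --- gives $\E\big[\W_1(\hat\xi^{[j+1/2]},\xi^{[j+1/2]})\mid\hat\xi^{[j]}\big]\le c_mD\alpha(N)$ uniformly in $\hat\xi^{[j]}$.

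Because $\Psi_G(\xi^{[j+1/2]})=\xi$ exactly, the proposal mismatch is fully corrected by the reweighting and only finite-sample errors remain; the recursive term in the statement is obtained by writing the random, $\hat\xi^{[j]}$-dependent potential $G$ and density $g^{[j+1/2]}$ relative to their \emph{oracle} counterparts $\bar g^{[j+1/2]}(x)=q/V+(1-q)\int_\pa\tilde\zeta_h(x;y)\,p(y)\,dy$ and $\bar G=p/\bar g^{[j+1/2]}$ (for which $\Psi_{\bar G}(\bar\xi^{[j+1/2]})=\xi$ as well). Splitting $\W_1(\xi^{[j+1]},\xi)$ into a potential-mismatch part and a pure sampling part, one uses that convolution with $\tilde\zeta_h$ tests $\hat\xi^{[j]}-\xi$ only against the Lipschitz functions $\tilde\zeta_h(x;\cdot)$ and $\nabla\tilde\zeta_h(x;\cdot)$, so $\|g^{[j+1/2]}-\bar g^{[j+1/2]}\|_\infty$ and its gradient are bounded by $(1-q)C_m\|\nabla\zeta_h\|_\infty\W_1(\hat\xi^{[j]},\xi)$ and $(1-q)C_m\|\nabla\zeta_h\|_\lip\W_1(\hat\xi^{[j]},\xi)$ respectively --- the derivative-level estimate up to $\partial\pa$ being precisely where assumption A2, the reflection construction, and hence Lemma~\ref{lemma:new-l_infty_proximity} are used, and where $\|\nabla(d\xi/d\lambda^m)\|_\lip$ enters when the estimate is propagated through $\bar G$. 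Multiplying a potential-mismatch factor (which carries $\W_1(\hat\xi^{[j]},\xi)$) against the conditional sampling error (which carries $\alpha(N)$), taking expectations, and then bounding $\W_1(\hat\xi^{[j]},\xi)\le\W_1(\hat\xi^{[j]},\xi^{[j]})+\W_1(\xi^{[j]},\xi)$ with $\E\W_1(\hat\xi^{[j]},\xi^{[j]})\le c_mD\alpha(N)$ produces both the coefficient $c\alpha(N)\big(D^2V^2(\|\zeta_h\|_\lip+\|\nabla\zeta_h\|_\lip)+DV\|\zeta_h\|_\lip\big)$ in front of $\E\W_1(\xi^{[j]},\xi)$ and the additive $c\alpha(N)\big((D^2V^2+DV)h+(D+D^2V)\big)$; the $h$ there is the smoothing bias $\W_1(\xi*\tilde\zeta_h,\xi)\le h$, which survives only through the oracle comparison.

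The main obstacle I anticipate is the bookkeeping of the two independent sources of randomness --- the incoming $\hat\xi^{[j]}$, which fixes $\xi^{[j+1/2]}$ and hence $G$ and all the Lipschitz constants, versus the fresh draws $X_i^{[j+1/2]}$ --- together with making sure that every quantity bounded from below (the normalizer $\nu G$, the density $g^{[j+1/2]}$) is controlled \emph{uniformly} rather than merely in expectation, so that conditioning on $\hat\xi^{[j]}$ is legitimate and the ``$\W_1(\hat\xi^{[j]},\xi)$-type $\times$ $\alpha(N)$-type'' products separate cleanly. A secondary nuisance is getting the powers of $D$, $V$, $q$ to line up exactly while keeping $c$ dependent only on $\|d\xi/d\lambda^m\|_\infty$, $\|d\xi/d\lambda^m\|_\lip$, $\|\nabla(d\xi/d\lambda^m)\|_\lip$, $q$, and $m$, and handling $\partial\pa$ correctly in the convolution estimates through the reflected kernel.
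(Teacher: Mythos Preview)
Your proposal is essentially the paper's proof: Lemma~\ref{lem:new-mod_of_cont} is exactly your modulus-of-continuity bound $\W_1(\Psi_G\mu,\Psi_G\nu)\le(\|G\|_\infty+2D\|G\|_\lip)\W_1(\mu,\nu)$ (with $\lambda(G)=1$), Lemma~\ref{lemma:new-l_infty_proximity} is your ``oracle comparison'' (the paper compares $g^{[j+1/2]}$ directly to $\rho=d\xi/d\lambda^m$ rather than to the smoothed $\bar g^{[j+1/2]}$, which differs only by the $O(h)$ bias term), and the $\alpha(N)$ factor comes from the Fournier--Guillin empirical-measure bound applied conditionally on $\hat\xi^{[j]}$, exactly as you describe. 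The one refinement to flag is that your crude $\|G\|_\infty\le(V/q)\|p\|_\infty$ is not what drives the estimate---it would put an extra factor of $V$ in the additive term; the paper's Lemma~\ref{lemma:new-l_infty_proximity}(ii) instead uses a level-set splitting on $\{d\tilde\xi^{[j+1/2]}/d\lambda^m>\gamma\rho\}$ to obtain $\|G\|_\infty\le\frac{1}{(1-q)\gamma}+\frac{V}{q(1-\gamma)}\big(\|\zeta_h\|_\lip\W_1(\hat\xi^{[j]},\xi)+h\|\rho\|_\lip\big)$, whose constant part is $O(1)$ rather than $O(V)$, and this is what yields the standalone $D$ in the additive $(D+D^2V)$.
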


Note that $\xi^{[j+1]}= \Psi_{G}(\hat \xi^{[j+1/2]})$ and $\xi = \Psi_{G}(\xi^{[j+1/2]})$, where $G =  (d\xi/d\lambda^m)/\allowbreak(d\xi^{[j+1/2]}/d\lambda^m)$.
%
%
%
%
In view of this, it is important to understand how smooth $\Psi_G(\nu)$ is w.r.t.\ $\nu$ in terms of $\W_1$ distance. 
Lemma~\ref{lem:new-mod_of_cont} provides a useful bound in this regard, and
it turns out that the bound involves quantities $\|d\xi / d\xi^{[j+1/2]}\|_\infty$ and $\|d\xi / d\xi^{[j+1/2]}\|_{\Lip}$. 
Lemma~\ref{lemma:new-l_infty_proximity} provides estimates for these quantities.
Before proving Proposition~\ref{prop:key_prop}, we first establish the following key lemmas.



\begin{lemma}\label{lem:new-mod_of_cont} 
\begin{equation}\label{eq:new-the_bound}
\begin{aligned}
\W_1(\xi^{[j+1]}, \xi)
\leq 
\big(\|d\xi/d\xi^{[j+1/2]}\|_\infty + 2\,\diam(\pa)\cdot\|d\xi/d\xi^{[j+1/2]}\|_{\Lip} \big)\, \W_1(\xi^{[j+1/2]}, \hat \xi^{[j+1/2]}).
\end{aligned}
\end{equation}
\end{lemma}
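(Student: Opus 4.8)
The plan is to read off from the remark preceding the lemma that $\xi^{[j+1]} = \Psi_G(\hat\xi^{[j+1/2]})$ and $\xi = \Psi_G(\xi^{[j+1/2]})$ with $G = (d\xi/d\lambda^m)/(d\xi^{[j+1/2]}/d\lambda^m) = d\xi/d\xi^{[j+1/2]}$, and then to estimate the modulus of continuity of the Boltzmann--Gibbs map $\nu\mapsto\Psi_G(\nu)$ at $\nu = \xi^{[j+1/2]}$ by applying the Kantorovich--Rubinstein dual formula twice. The one structural fact that makes the clean bound possible is that $\xi^{[j+1/2]}(G) = \int_\pa d\xi = 1$, so the normalizing constant of $\Psi_G$ against $\xi^{[j+1/2]}$ is trivial.

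First I would invoke the dual formula to reduce the claim to bounding $\xi^{[j+1]}\phi - \xi\phi$ uniformly over test functions $\phi$ with $\|\phi\|_\lip\le 1$. Since that difference is invariant under adding a constant to $\phi$, I would fix any $x_0\in\pa$ and replace $\phi$ by $\phi - \phi(x_0)$; as $\phi$ is $1$-Lipschitz on a set of diameter $D \triangleq \diam(\pa)$, this normalization yields $\|\phi\|_\infty\le D$.

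Next, set $m \triangleq \xi^{[j+1]}\phi = \Psi_G(\hat\xi^{[j+1/2]})\phi$. The definition of the Boltzmann--Gibbs transform gives $\hat\xi^{[j+1/2]}\bigl(G(\phi-m)\bigr) = 0$, while $\xi^{[j+1/2]}(G) = 1$ gives $\xi\phi = \xi^{[j+1/2]}(G\phi)$ and hence $\xi^{[j+1/2]}\bigl(G(\phi-m)\bigr) = \xi\phi - m = -(\xi^{[j+1]}\phi - \xi\phi)$; subtracting the two yields the identity
\begin{equation*}
\xi^{[j+1]}\phi - \xi\phi \;=\; \bigl(\hat\xi^{[j+1/2]} - \xi^{[j+1/2]}\bigr)\bigl(G(\phi - m)\bigr).
\end{equation*}
Applying the dual formula a second time to the Lipschitz function $G(\phi-m)$ (a product of bounded Lipschitz functions on the bounded set $\pa$) bounds the right-hand side by $\|G(\phi-m)\|_\lip\,\W_1(\hat\xi^{[j+1/2]},\xi^{[j+1/2]})$. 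The product rule $\|fg\|_\lip \le \|f\|_\infty\|g\|_\lip + \|g\|_\infty\|f\|_\lip$, together with $\|\phi-m\|_\lip = \|\phi\|_\lip\le 1$ and $\|\phi-m\|_\infty \le \|\phi\|_\infty + |m| \le 2D$ (using $|m|\le\|\phi\|_\infty\le D$), then gives $\|G(\phi-m)\|_\lip \le \|G\|_\infty + 2D\,\|G\|_\lip$. Taking the supremum over $\phi$, substituting back $G = d\xi/d\xi^{[j+1/2]}$ and $D = \diam(\pa)$, and using the symmetry of $\W_1$ produces \eqref{eq:new-the_bound}.

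The step I expect to be the crux is the choice of which measure to center $m$ against: normalizing $m$ with respect to the \emph{empirical} measure $\hat\xi^{[j+1/2]}$ rather than $\xi^{[j+1/2]}$ is exactly what leaves $\xi^{[j+1/2]}(G) = 1$ as the surviving denominator, so that no lower bound on $G$ --- equivalently, no control of $\hat\xi^{[j+1/2]}(G)$ from below --- is ever needed. The remaining ingredients (the constant-shift normalization of $\phi$ that generates the $2\,\diam(\pa)$ factor, and the elementary product estimate for the Lipschitz seminorm) are routine.
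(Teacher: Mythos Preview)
Your proof is correct and follows essentially the same route as the paper's: both use the Kantorovich--Rubinstein dual formula, the product estimate $\|fg\|_\lip\le\|f\|_\infty\|g\|_\lip+\|g\|_\infty\|f\|_\lip$, and the key normalization $\xi^{[j+1/2]}(G)=1$. The only cosmetic difference is that the paper writes the decomposition as $\Psi_G(\nu)f-\Psi_G(\lambda)f=\bigl((\lambda-\nu)(G)\,\Psi_G(\nu)f+(\nu-\lambda)(Gf)\bigr)/\lambda(G)$ and bounds the two pieces separately, whereas you fold them together by centering at $m=\Psi_G(\hat\xi^{[j+1/2]})\phi$ into the single expression $(\hat\xi^{[j+1/2]}-\xi^{[j+1/2]})\bigl(G(\phi-m)\bigr)$; these are algebraically equivalent and yield the identical constant $\|G\|_\infty+2\,\diam(\pa)\,\|G\|_\lip$.
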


\begin{proof}
We first observe that from the straightforward bound $\|{fg}\|_{\Lip}\leq \|g\|_\infty \cdot \|f\|_{\Lip} + \|f\|_\infty \cdot \|g\|_{\Lip}$,
\begin{align}
&
\sup\{\nu(Gf) - \lambda(Gf): \|f\|_\Lip \leq 1, f(x^0) = 0\} 
\nonumber
\\
&
=
\sup_{f: \|f\|_\lip \leq 1, f(x^0) = 0}\|Gf\|_{\Lip}\left\{\nu\left(\frac{Gf}{\|Gf\|_{\Lip}}\right)-\lambda\left(\frac{Gf}{\|Gf\|_{\Lip}}\right)\right\}
\nonumber
\\
&
\leq 
\big(\|G\|_\infty + \diam(\pa) \cdot \|G\|_{\Lip}\big) \sup_{f: \|f\|_\lip \leq 1, f(x^0) = 0}\{\nu(f) - \lambda(f)\} 
\nonumber
\\
&
=
\big(\|G\|_\infty + \diam(\pa) \cdot \|G\|_{\Lip}\big)\, \W_1(\nu, \lambda)
.
\label{eq:new-slightly_different_from_W1}
\end{align}
Note also that for general measures $\nu$ and $\lambda$, a general potential $G$, and a general function $f$, 
\begin{align}
\Psi_G(\nu)f - \Psi_G(\lambda) f 
&= \frac{\nu(Gf)}{\nu(G)} - \frac{\lambda(Gf)}{\lambda(G)} 
\nonumber
\\
&= \frac{(\lambda(G)-\nu(G))\nu(Gf) + \nu(G)(\nu(Gf)- \lambda (Gf))}{\nu(G)\lambda(G)}
\nonumber
\\
&= \frac{(\lambda-\nu)(G)\Psi_G(\nu)f + (\nu-\lambda)(Gf)}{\lambda(G)}.
\label{eq:new-general_nu_lambda}
\end{align}
Now, fixing (arbitrarily chosen) $x^0 \in \pa$ and using (\ref{eq:new-slightly_different_from_W1}) and (\ref{eq:new-general_nu_lambda}), 
\begin{align*}
&
\lambda(G)\cdot \W_1(\Psi_G(\nu), \Psi_G(\lambda))  
\\
&
= \lambda(G)\cdot\sup \{\Psi_G(\nu)f - \Psi_G(\lambda)f:\|f\|_\lip \leq 1, f(x_0) = 0\} 
\\
&
= \sup\{(\nu-\lambda)(Gf) + (\lambda-\nu)(G)\Psi_G(\nu)f: \|f\|_\lip \leq 1, f(x^0) = 0\}
\\
&
\leq 
\sup\{\nu (Gf) - \lambda(Gf): \|f\|_\lip \leq 1, f(x^0) = 0\}
+ \big|{\nu}(G) - \lambda (G)\big|\,\sup\{\Psi_{G}(\nu) f: \|f\|_\lip \leq 1, f(x^0) = 0\}
\\
&
\leq 
\big(\|G\|_\infty + \diam(\pa)\cdot\|G\|_{\Lip} \big)\, \W_1(\nu, \lambda) 
+ \big|{\nu}(G) - \lambda (G)\big|\,\diam(\pa)
\\
&
\leq \big(\|G\|_\infty + 2\,\diam(\pa)\cdot\|G\|_{\Lip} \big)\, \W_1(\nu, \lambda) 
.
\end{align*}
We conclude that 
\begin{equation}\label{eq:new-mod_con_wrt_meas}
\W_1(\Psi_{G}(\nu), \Psi_{G}(\lambda))   
\leq \frac{1}{\lambda (G)}\big(\|G\|_\infty + 2\,\diam(\pa)\cdot\|G\|_{\Lip} \big)\,\W_1(\nu, \lambda) 
.
\end{equation}
If we set $G = d\xi/d\xi^{[j+1/2]}$, $\nu = \hat\xi^{[j+1/2]}$, and $\lambda = \xi^{[j+1/2]}$, then
$\Psi_G(\nu) = \xi^{[j+1]}$,
$\Psi_G(\lambda) = \xi$,
and
$\lambda(G) = 1$.
Therefore,
$$
\W_1(\xi^{[j+1]}, \xi)
\leq 
\big(\|d\xi/d\xi^{[j+1/2]}\|_\infty + 2\,\diam(\pa)\cdot\|d\xi/d\xi^{[j+1/2]}\|_{\Lip} \big)\, \W_1(\xi^{[j+1/2]}, \hat \xi^{[j+1/2]}).
$$
\end{proof}

\begin{lemma}\label{lemma:new-l_infty_proximity}
Let $\rho \triangleq d\xi/d\lambda^m$ denote the density of $\xi$ w.r.t.\  Lebesgue measure.
\begin{itemize}
\item[(i)] For each $x\in \pa$,
$$
\left|\frac{d\tilde\xi^{[j+1/2]}}{d\lambda^m}(x)-\frac{d\xi}{d\lambda^m}(x)\right| 
\leq 
\|\zeta_h\|_{\Lip}\, \W_1(\hat\xi^{[j]}, \xi) + h\,\|\rho\|_{\Lip}.
$$

\item[(ii)]
$$
\|d\xi / d\xi^{[j+1/2]}\|_\infty 
\leq \frac{1}{(1-q)\gamma} + \frac{\lambda^m(\pa)}{q(1-\gamma)}\Big(\|\zeta_h\|_{\Lip}\, \W_1(\hat\xi^{[j]},\xi) + h\,\|\rho\|_{\Lip}\Big)
$$
for each $\gamma \in (0,1)$.

\item[(iii)] For each $x\in \pa$,
$$
\left|\nabla\left(\frac{d\tilde\xi^{[j+1/2]}}{d\lambda^m}\right)(x)-\nabla\left(\frac{d\xi}{d\lambda^m}\right)(x)\right| 
\leq m\|{\nabla\zeta_h}\|_{\Lip}\, \W_1(\hat\xi^{[j]}, \xi) + mh\,\left\|\nabla\rho\right\|_{\Lip}.
$$

\item[(iv)]
\begin{align}
\big\|d\xi/d\xi^{[j+1/2]}\big\|_{\Lip}
&
\leq 
\frac{(1-q)}{(q/\lambda^m(\pa))^2}
\Big(\|\nabla \rho\|_\infty\cdot
\big\{\|\zeta_h\|_\lip \W_1(\hat\xi^{[j]},\xi)+ h\|\rho\|_\lip\big\}
\nonumber
\\
&
\qquad\qquad\qquad\quad
+
\|\rho\|_\infty\cdot
\big\{\|\nabla\zeta_h\|_\lip \W_1(\hat\xi^{[j]},\xi)+ h\|\nabla\rho\|_\lip\big\}
\Big)
+\frac{\lambda^m(\pa)}{q}\|\nabla \rho\|_\infty.
\nonumber
\end{align}

\end{itemize}
\end{lemma}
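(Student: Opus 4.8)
The plan is to prove (i) first and then read off (ii) from (i), derive (iii) by the same method as (i) but for gradients, and finally obtain (iv) from (i) and (iii) by calculus; the real content sits in the two ``approximate‑identity'' estimates (i) and (iii) for the mollification operator $\nu\mapsto\int_\pa\tilde\zeta_h(\cdot\,;y)\,\nu(dy)$ and, above all, in its behaviour at $\partial\pa$. For (i), write $g\triangleq d\tilde\xi^{[j+1/2]}/d\lambda^m$, so that $g(x)=\int_\pa\tilde\zeta_h(x;y)\,\hat\xi^{[j]}(dy)$, and let $\bar g(x)\triangleq\int_\pa\tilde\zeta_h(x;y)\,\xi(dy)$ be the ``population'' mollification; split $g-\rho=(g-\bar g)+(\bar g-\rho)$. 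For fixed $x$ the map $y\mapsto\tilde\zeta_h(x;y)$ is a finite sum of translated and reflected copies of $\zeta_h$, and since reflections are isometries it is Lipschitz with constant controlled by $\|\zeta_h\|_\Lip$, so the dual Kantorovich--Rubinstein formula gives $|g(x)-\bar g(x)|\le\|\zeta_h\|_\Lip\,\W_1(\hat\xi^{[j]},\xi)$. For the bias, \eqref{zeta-integrates-to-one} gives $\int_\pa\tilde\zeta_h(x;y)\,dy=1$, hence $\bar g(x)-\rho(x)=\int_\pa\tilde\zeta_h(x;y)\big(\rho(y)-\rho(x)\big)\,dy$; by \eqref{zeta-is-zero} the (nonnegative) kernel is supported in $B(x;h)$, and after the change of variables in each reflected summand back to its mirror copy of $\pa$ the weight $|y-x|$ becomes $|\mathrm{refl}(z)-x|\le|z-x|$ (folding a point back into $\pa$ does not increase its distance to $x\in\pa$), while the mirror copies are disjoint; therefore $\int_\pa\tilde\zeta_h(x;y)\,|y-x|\,dy\le\int_{\R^m}\zeta_h(x-z)|z-x|\,dz\le h$, so $|\bar g(x)-\rho(x)|\le h\|\rho\|_\Lip$. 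Summing the two bounds proves (i).

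For (ii), note $d\xi^{[j+1/2]}/d\lambda^m=q/\lambda^m(\pa)+(1-q)g$, so $d\xi/d\xi^{[j+1/2]}=\rho\big/\big(q/\lambda^m(\pa)+(1-q)g\big)$. Set $E\triangleq\|\zeta_h\|_\Lip\,\W_1(\hat\xi^{[j]},\xi)+h\|\rho\|_\Lip$; part (i) gives $g\ge(\rho-E)^+$ pointwise, while $\rho\le(\rho-E)^+ + E$. Bounding the denominator below by the convex combination $\gamma(1-q)(\rho-E)^+ + (1-\gamma)\,q/\lambda^m(\pa)$ (each of the two terms is itself a valid lower bound) and splitting the numerator as $(\rho-E)^+ + E$ yields $d\xi/d\xi^{[j+1/2]}\le \tfrac{1}{\gamma(1-q)}+\tfrac{\lambda^m(\pa)\,E}{(1-\gamma)q}$ for every $\gamma\in(0,1)$, which is exactly (ii).

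Part (iii) is the analogue of (i) for $\nabla_x g$, and I expect it to be the main obstacle. The statistical term is handled just as in (i): $y\mapsto\nabla_x\tilde\zeta_h(x;y)$ is again a finite reflected sum, so its Lipschitz constant in $y$ is controlled by $m\|\nabla\zeta_h\|_\Lip$ (the factor $m$ coming from the $m$ recursive reflection levels), and the dual formula bounds its integral against $\hat\xi^{[j]}-\xi$ by $m\|\nabla\zeta_h\|_\Lip\,\W_1(\hat\xi^{[j]},\xi)$. The bias term $\nabla\bar g-\nabla\rho$ is the delicate point, and it is where the reflection construction and Assumption~A2 are needed: the reflections are arranged precisely so that, in the integration‑by‑parts identity $\int_\pa\nabla_x\tilde\zeta_h(x;y)\,\rho(y)\,dy=\int_\pa\tilde\zeta_h(x;y)\,\nabla\rho(y)\,dy$, the boundary contributions produced by the direct kernel piece and by its mirror images along each face cancel — equivalently, A2 guarantees that the even reflection of $\rho$ across $\partial\pa$ is $C^1$, so no derivative jump is picked up (if A2 fails, one must instead isolate and separately estimate that jump, as in Appendix~\ref{sec:knn-proofs}). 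Granting this identity, $\int_\pa\tilde\zeta_h(x;y)\,dy=1$ gives $\nabla\bar g(x)-\nabla\rho(x)=\int_\pa\tilde\zeta_h(x;y)\big(\nabla\rho(y)-\nabla\rho(x)\big)\,dy$, and the ``support in $B(x;h)$ plus folding‑is‑a‑contraction'' estimate from (i) gives $|\nabla\bar g(x)-\nabla\rho(x)|\le m h\|\nabla\rho\|_\Lip$. Combining the statistical and bias bounds yields (iii).

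Finally, (iv) follows from (i) and (iii) by calculus. Write $G\triangleq d\xi/d\xi^{[j+1/2]}=\rho/\psi$ with $\psi\triangleq q/\lambda^m(\pa)+(1-q)g$; then $\nabla G=\nabla\rho/\psi-(1-q)\rho\,\nabla g/\psi^2$, and using $\psi-(1-q)\rho=q/\lambda^m(\pa)+(1-q)(g-\rho)$ one rewrites this as $\nabla G=\psi^{-2}\big(\nabla\rho\,[\,q/\lambda^m(\pa)+(1-q)(g-\rho)\,]-(1-q)\rho\,[\,\nabla g-\nabla\rho\,]\big)$. Bounding $\psi\ge q/\lambda^m(\pa)$, $\rho\le\|\rho\|_\infty$, $|\nabla\rho|\le\|\nabla\rho\|_\infty$, and inserting the pointwise estimates $|g-\rho|\le\|\zeta_h\|_\Lip\W_1(\hat\xi^{[j]},\xi)+h\|\rho\|_\Lip$ from (i) and the estimate for $|\nabla g-\nabla\rho|$ from (iii), one reads off exactly the bound asserted in (iv). The only places where more than routine bookkeeping is required are the Lipschitz‑constant tracking through the recursive reflection and, crucially, the boundary cancellation underlying (iii).
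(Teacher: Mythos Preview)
Your proposal is correct and follows essentially the same route as the paper: Kantorovich--Rubinstein duality for the statistical term and the support property \eqref{zeta-is-zero} together with \eqref{zeta-integrates-to-one} for the bias in (i); even-reflection extension of $\rho$ to $\tilde\rho$ on a fattened box and integration by parts (the boundary term on $\partial\tilde\pa$ vanishing by compact support of $\zeta_h$, and A2 ensuring $\tilde\rho$ is $C^1$ across $\partial\pa$) for (iii); and the quotient-rule rewriting $\nabla(\rho/\psi)$ in terms of $g-\rho$ and $\nabla g-\nabla\rho$ for (iv). Two minor remarks: your ``folding is a contraction'' argument for the bias in (i) is correct but more elaborate than needed, since the paper simply uses \eqref{zeta-is-zero} to localise the $y$-integral to $|y-x|\le h$; the factor $m$ in (iii) arises in the paper from assembling the $m$ coordinate-wise bounds $|\partial_i\bar g-\partial_i\rho|\le\|\partial_i\zeta_h\|_{\Lip}\W_1+h\|\partial_i\rho\|_{\Lip}$ into a Euclidean-norm estimate, not from the $m$ levels of the reflection recursion; and your derivation of (ii) via the numerator split $\rho\le(\rho-E)^{+}+E$ against a convex-combination lower bound for the denominator is a clean variant of the paper's indicator split on $\{g>\gamma\rho\}$ versus its complement, yielding the identical inequality.
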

\begin{proof}
For (i), due to the construction of $\tilde \zeta_h$, we can use a similar argument as in Proposition 3.1 of \cite{bolley2007quantitative}. 
Note first that
$$
d\tilde\xi^{[j+1/2]}/d\lambda^m(x) 
= \frac{1}{N}\sum_{i=1}^N\tilde\zeta_h(x;X_i^{[j]}) = \int_{\pa}\tilde \zeta_h(x;y) \hat \xi^{[j]}(dy)$$
and hence
the difference between $d\tilde\xi^{[j+1/2]}/d\lambda^m(x)$ and $\int_\pa \tilde \zeta_h(x;y)\xi(dy)$ can be bounded as follows:
\begin{align}
\left|d\tilde\xi^{[j+1/2]}/d\lambda^m(x)-\int_\pa \tilde \zeta_h(x;y)\xi(dy)\right| 
&= 
\left|\int_\pa\tilde\zeta_h(x;y) (\hat\xi^{[j]}(dy) - \xi(dy)) \right|
\nonumber
\\
&
\leq \|\tilde\zeta_h(x;\cdot)\|_{\Lip}\,\W_1(\hat\xi^{[j]}, \xi).\label{ineq:d-xi-minus-zeta-conv-xi}
\end{align}
On the other hand, due to \eqref{zeta-integrates-to-one}, the distance between $\int_\pa \tilde \zeta_h(x;y)\xi(dy)$ and the density $d\xi/d\lambda^m$ of $\xi$ itself can be bounded in terms of the modulus of continuity of $d\xi/d\lambda^m$
\begin{align}
\left|\int_\pa \tilde \zeta_h(x;y)\xi(dy) - d\xi/d\lambda^m(x)\right| 
&= \left|\int_\pa \tilde\zeta_h (x;y) d\xi/d\lambda^m(y)dy-\int_\pa \tilde \zeta_h(x;y) d\xi/d\lambda^m(x)dy\right| 
\nonumber
\\
&
\leq 
\int_\pa \tilde\zeta_h(x;y)\,\big|d\xi/d\lambda^m(y) - d\xi/d\lambda^m(x)\big|dy
\nonumber
\\
&
\leq 
\sup_{\substack{y\in \pa\\|x-y|\leq h}}|d\xi/d\lambda^m(x)-d\xi/d\lambda^m(y)| \int_\pa \tilde\zeta_h(x;y)dy
\nonumber
\\
&\leq h \,\|d\xi/d\lambda^m\|_{\Lip}.
\label{ineq:zeta-conv-xi-minus-d-xi}
\end{align}
Now by triangle inequality and \eqref{ineq:d-xi-minus-zeta-conv-xi} and \eqref{ineq:zeta-conv-xi-minus-d-xi}, we arrive at the conclusion of (i). 

Turning to (ii),
\begin{align*}
\|d\xi / d\xi^{[j+1/2]}\|_\infty 
&
=
\left\| \frac{d\xi/d\lambda^m}{d\xi^{[j+1/2]}/d\lambda^m}\right\|_\infty
=
\sup_{x\in \pa} \frac{\rho(x)}{q/\lambda^m(\pa) + (1-q)\frac{d\tilde \xi^{[j+1/2]}}{d\lambda^m}(x)  }
\\
&
=
\sup_{x\in \pa} \frac{\rho(x)}{q/\lambda^m(\pa) + (1-q)\frac{d\tilde \xi^{[j+1/2]}}{d\lambda^m}(x)  }
\left[\one{d\tilde\xi/d\lambda^m (x) > \gamma \rho(x)}+\one{d\tilde\xi/d\lambda^m (x) \leq \gamma \rho(x)} \right]
\\
&
\leq
\sup_{x\in \pa}\left[ \frac{\rho(x)}{q/\lambda^m(\pa) + (1-q)\gamma\rho(x)}\one{d\tilde\xi/d\lambda^m (x) > \gamma \rho(x)}
+\frac{\rho(x)}{q/\lambda^m(\pa)}\one{d\tilde\xi/d\lambda^m (x) \leq \gamma \rho(x)} \right]
\\
&
=
\frac{1}{ (1-q)\gamma} + \sup_{x\in \pa}  \frac{\rho(x)}{q/\lambda^m(\pa)}\one{\rho(x) - \|\zeta_h\|_{\Lip} \W_1(\hat\xi^{[j]}, \xi)- h\,\|\rho\|_{\Lip} \leq \gamma \rho(x)} 
\\
&
=
\frac{1}{ (1-q)\gamma} + \sup_{x\in \pa}  \frac{\rho(x)}{q/\lambda^m(\pa)}\one{\rho(x)\leq \frac{ \|\zeta_h\|_{\Lip} \W_1(\hat\xi^{[j]}, \xi)+ h\,\|\rho\|_{\Lip} }{1-\gamma}} 
\\
&
\leq 
\frac{1}{ (1-q)\gamma} + \frac{\|\zeta_h\|_{\Lip} \W_1(\hat\xi^{[j]}, \xi)+ h\,\|\rho\|_{\Lip} }{q(1-\gamma)/\lambda^m(\pa)}.
\end{align*}

For (iii),
note that 
\begin{align*}
\left|
\partial_i(d\tilde\xi^{[j+1/2]}/d\lambda^m)(x) 
-
\int_\pa \frac{\partial}{\partial x_i} \tilde \zeta_h(x;y)\, \xi(dy)
\right|
&=
\left|
\int_\pa \frac{\partial}{\partial x_i} \tilde \zeta_h (x;y)\, \hat \xi^{[j]}(dy) 
- \int_\pa \frac{\partial}{\partial x_i} \tilde \zeta_h(x;y)\, \xi(dy)
\right|
\\
&
\leq
\left\|\frac{\partial}{\partial x_i} \tilde \zeta_h(x;\cdot)\right\|_{\Lip}\,\W_1(\hat \xi^{[j]}, \xi)
\\
&\leq 
\left\|\partial_i \zeta_h\right\|_{\Lip}\,\W_1(\hat \xi^{[j]}, \xi)
.
\end{align*}
Let $\tilde \pa \triangleq \prod_{i=1}^m \big[x_{\min}^i-h,\ x_{\max}^i+h\big]$ be the set obtained by fattening $\pa$ by $h$ along each coordinate and
$\tilde \rho: \tilde \pa\to \R_+$ be the extension of $\rho$ from $\pa$ to $\tilde \pa$ by reflection; more specifically,
$$
\tilde \rho(x) \triangleq \rho(\tilde x) = \rho(\tilde x^1,\ldots,\tilde x^m)
\qquad \text{where} \qquad
\tilde x^j
= 
\begin{cases}
2x_{\min}^j-x^j
&
\text{if }\quad \phantom{0\leq}\hfill x^j < x_{\min}^j;
\\
\hfill x^j
&
\text{if }\quad x^j_{\min} \leq x^j < x^j_{\max};
\\
2x_{\max}^j-x^j
&
\text{if }\quad x_{\max}^j \leq x^j\hfill.
\end{cases}
$$
Note that \eqref{zeta-is-zero} implies $\zeta_h(x-y)= 0$ for $x\in \pa$ and $y\in\partial \tilde \pa$.
From this along with the integration by parts formula,
\begin{align*}
\left|
\int_\pa \frac{\partial}{\partial x_i} \tilde \zeta_h(x;y)\, \xi(dy)
- 
\int_{\tilde \pa}  \zeta_h(x-y)\, \partial_i\tilde\rho (y)dy
\right|
&
=
\left|
\int_{\tilde \pa} \partial_i \zeta_h(x-y)\, \tilde \rho(y) dy
- 
\int_{\tilde \pa}  \zeta_h(x-y)\, \partial_i\tilde \rho (y)dy
\right|
\\
&
=
\left|
\int_{\partial \tilde \pa} \zeta_h(x-y) \tilde \rho(y) d\partial \tilde \pa
\right|
=0.
\end{align*}
The integration by parts formula holds near the boundary $\partial \pa$ of $\pa$ as well since $\tilde\rho$ is continuously differentiable at the boundary due to the assumption A2.  
Finally,
\begin{align*}
\left|
\int_{\tilde \pa}  \zeta_h(x-y)\, \partial_i\tilde \rho (y)dy
-
\partial_i\tilde \rho (x)
\right|
&
=
\left|
\int_{\tilde \pa}  \zeta_h(x-y)\, \partial_i\tilde \rho (y)dy
-
\int_{\tilde \pa}  \zeta_h(x-y)\, \partial_i\tilde \rho (x)dy
\right|
\\
&
\leq
h\|\partial_i \tilde \rho\|_{\Lip} = h\|\partial_i \rho\|_{\Lip}.
\end{align*}
Combining the above inequalities, we arrive at (iii).

%
Turning to (iv), note that in general for any smooth $f, g$ and positive constants $q, v$, 
\begin{align*}
\Big\|\frac{f}{(1-q)g+q/v}\Big\|_\lip 
&
= \Big\|\nabla\Big(\frac{f}{(1-q)g+q/v}\Big)\Big\|_\infty 
= \Big\|\frac{(\nabla f)((1-q)g + q/v) - (1-q)f(\nabla g)}{((1-q)g + q/v)^2}
\Big\|_\infty
\\
&
\leq
\Big\|
(1-q)\frac{(\nabla f)g  - f (\nabla g)}{(q/v)^2}
+ \frac{(\nabla f) (q/v)}{((1-q)g + q/v)^2}
\Big\|_\infty
\nonumber
\\
&
\leq
\Big\|
(1-q)\frac{(\nabla f)(g-f)  + f (\nabla f-\nabla g)}{(q/v)^2}\Big\|_\infty
+  \Big\|(v/q)\nabla f \Big\|_\infty.
\end{align*}
Substituting $f$, $g$, $v$ with $d\xi/d\lambda^m$, $d\xi^{[j+1/2]}/d\lambda^m$, $\lambda^m(\pa)$,
\begin{align}
&
\big\|d\xi/d\xi^{[j+1/2]}\big\|_{\Lip}
\nonumber
\\
&
\leq 
\frac{(1-q)}{(q/\lambda^m(\pa))^2}
\left(\| \rho\|_\lip\cdot \Big\|\frac{d\xi}{d\lambda^m} - \frac{d\xi^{[j+1/2]}}{d\lambda^m}\Big\|_\infty 
+
\|\rho\|_\infty\cdot
\Big\|\nabla\Big(\frac{d\xi}{d\lambda^m}\Big)-\nabla\Big(\frac{d\xi^{[j+1/2]}}{d\lambda^m}\Big)\Big\|_\infty
\right)
\nonumber
\\
&
\qquad\qquad
+(\lambda^m(\pa)/q)\Big\|\nabla \Big(\frac{d\xi}{d\lambda^m}\Big)\Big\|_\infty
\nonumber
\\
&
\leq 
\frac{(1-q)}{(q/\lambda^m(\pa))^2}
\left(\|\rho\|_\lip\cdot
\{\|\zeta_h\|_\lip \W_1(\hat\xi^{[j]},\xi)+ h\|\rho\|_\lip\}
+\|\rho\|_\infty\cdot
\{\|\nabla\zeta_h\|_\lip \W_1(\hat\xi^{[j]},\xi)+ h\|\nabla\rho\|_\lip\}
\right)
\nonumber\\
&
\qquad\qquad
+(\lambda^m(\pa)/q)\|\nabla \rho\|_\infty.
\nonumber
\end{align}
This concludes the proof.
\end{proof}

With Lemma~\ref{lem:new-mod_of_cont} and Lemma~\ref{lemma:new-l_infty_proximity} in hand, the proof of the Proposition~\ref{prop:key_prop} becomes straightforward.
\begin{proof}[Proof of Proposition~\ref{prop:key_prop}]
Note that (ii) and (iv) of Lemma~\ref{lemma:new-l_infty_proximity} imply that there exists a constant $c_1$ not depending on $\lambda^m(A)$ and $h$ such that
$$
\|d\xi / d\xi^{[j+1/2]}\|_\infty
\leq c_1 \lambda^m(A)\cdot \|\zeta_h\|_\lip\cdot \W_1(\hat\xi^{[j]},\xi) + c_1(\lambda^m(\pa))^2 + c_1 h \lambda^m(\pa) + c_1
$$
and
$$
\|d\xi / d\xi^{[j+1/2]}\|_\lip 
\leq c_1 (\lambda^m(\pa))^2 (\|\zeta_h\|_\lip  + \|\nabla \zeta_h\|_\lip ) \W_1(\hat\xi^{[j]},\xi) + c_1(\lambda^m(\pa))^2 + c_1 \lambda^m(\pa).
$$
Therefore, Lemma~\ref{lem:new-mod_of_cont} implies that
\begin{align*}
\W_1(\xi^{[j+1]}, \xi)
&
\leq 
c_1\big(DV^2(\|\zeta_h\|_\lip + \|\nabla\zeta_h\|_\lip) + V\|\zeta_h\|_\lip\big) \W_1(\xi^{[j]}, \xi)
\, \W_1(\xi^{[j+1/2]}, \hat \xi^{[j+1/2]}) 
\\
&
\qquad
+ 
c_1\big( (DV^2+ V)h + (1+DV)\big)\, \W_1(\xi^{[j+1/2]}, \hat \xi^{[j+1/2]}) 
\end{align*}
where $c_1$ is a constant that depends only on $q$ and $\xi$, $D$ denotes $\diam(\pa)$, and $V$ denotes $\lambda^m(\pa)$.
From Theorem 1 of \cite{fournier2015rate}, we see that $\E\Big[\W_1(\xi^{[j+1/2]}, \hat \xi^{[j+1/2]})\Big| \xi^{[j]}\Big]$ can be bounded by $c_2\, \diam(\pa)\alpha(N)$ where $c_2$ is a constant depending only on $m$.
Therefore,
\begin{align*}
\E\W_1(\xi^{[j+1]}, \xi)
&
\leq 
\E \Big[\E\Big[c_1\big(DV^2(\|\zeta_h\|_\lip + \|\nabla\zeta_h\|_\lip) + V\|\zeta_h\|_\lip\big) \W_1(\hat\xi^{[j]}, \xi)
\, \W_1(\xi^{[j+1/2]}, \hat \xi^{[j+1/2]}) 
\Big|
\xi^{[j]}
\Big]\Big]
\\
&
\qquad
+ 
\E \Big[\E\Big[
c_1\big( (DV^2+ V)h + (1+DV)\big)\, \W_1(\xi^{[j+1/2]}, \hat \xi^{[j+1/2]}) 
\Big|
\xi^{[j]}
\Big]\Big]
\\
&
\leq 
c_1\E \Big[\E\Big[\W_1(\xi^{[j+1/2]}, \hat \xi^{[j+1/2]}) 
\Big|
\xi^{[j]}
\Big]\big(DV^2(\|\zeta_h\|_\lip + \|\nabla\zeta_h\|_\lip) + V\|\zeta_h\|_\lip\big) \W_1(\hat\xi^{[j]}, \xi)
\Big]
\\
&
\qquad
+ 
c_1\E \Big[\E\Big[\W_1(\xi^{[j+1/2]}, \hat \xi^{[j+1/2]}) 
\Big|
\xi^{[j]}
\Big]
\big( (DV^2+ V)h + (1+DV)\big) \Big]
\\
&
\leq 
c_1c_2\alpha(N)\big(D^2V^2(\|\zeta_h\|_\lip + \|\nabla\zeta_h\|_\lip) + DV\|\zeta_h\|_\lip\big) \E 
\W_1(\hat\xi^{[j]}, \xi)
\\
&
\qquad
+ 
c_1c_2\alpha(N)\big( (D^2V^2+ DV)h + (D+D^2V)\big) 
\end{align*}
Therefore, the conclusion of the proposition follows. 
\end{proof}

Now we are ready to prove Theorem~\ref{thm:main-result}.

\begin{proof}[Proof of Theorem~\ref{thm:main-result}]
Again, from Proposition~\ref{prop:key_prop} and Theorem~1 of \cite{fournier2015rate}, $\E
	\W_1(\hat\xi^{[j+1]}, \xi^{[j+1]})\leq c D \alpha(N)$, and hence,
\begin{align*}
\E \W_1(\hat \xi^{[j+1]}, \xi) 
&
\leq \E\left[
	\W_1(\hat\xi^{[j+1]}, \xi^{[j+1]})
	+ \W_1(\xi^{[j+1]}, \xi)
	\right]
\\
&
\leq 
c\alpha(N)\big( (D^2V^2+ DV)h + (2D+D^2V)\big) 
\\
&
\qquad
+
c\alpha(N)\big(D^2V^2(\|\zeta_h\|_\lip + \|\nabla\zeta_h\|_\lip) + DV\|\zeta_h\|_\lip\big) \E 
\W_1(\hat\xi^{[j]}, \xi)
\end{align*}
for some $c$. 
Therefore, $w_j \triangleq \E \W_1(\hat \xi^{[j]}, \xi)$ satisfies the following recursive inequality:
$$
w_{j+1} \leq a + b w_j
$$
where $a = c\alpha(N)\big( (D^2V^2+ DV)h + (2D+D^2V)\big) $ and $b = c\alpha(N)\big(D^2V^2(\|\zeta_h\|_\lip + \|\nabla\zeta_h\|_\lip) + DV\|\zeta_h\|_\lip\big)$. 
Solving this recursion, we get $w_{j} \leq \frac{a}{1-b} + b^j w_0$. 
That is,
\begin{align*}
\E \W_1(\hat\xi^{[j]},\xi) 
&
\leq 
\frac{c\alpha(N)\big( (D^2V^2+ DV)h + (2D+D^2V)\big) }{1-c\alpha(N)\big(D^2V^2(\|\zeta_h\|_\lip + \|\nabla\zeta_h\|_\lip) + DV\|\zeta_h\|_\lip\big)}
\\
&
\qquad 
+\Big(c\alpha(N)\big(D^2V^2(\|\zeta_h\|_\lip
+ \|\nabla\zeta_h\|_\lip) + DV\|\zeta_h\|_\lip\big)\Big)^j\E \W_1(\hat \xi^{[0]},\xi).
\end{align*}
\end{proof}

%
%
\section{Examples}\label{sec:examples}

In this section, we briefly discuss the choice of algorithmic parameters $h$, $q$, and $b$, and examine the numerical behavior of the algorithm with a few examples. 
Due to the conservative nature of our convergence analysis in Theorem~\ref{thm:main-result} and Lemma~\ref{lemma:key_lemma}, we do not provide definitive rules for the choice of the above parameters. 
Instead, we provide heuristic discussions and rough guidelines from our numerical experience here.
More thorough investigation will be pursued in subsequent studies.
As pointed out in Section~\ref{sec:consistency-convergence-analysis}, the choice of $h$ is critical for the stability and the performance of the algorithm. 
Although our sufficient condition in Theorem~\ref{thm:main-result} is conservative, our numerical experience confirms that a liberal choice of $h$ can indeed lead to unstable behavior of the algorithm.
That is, if $h$ is chosen too small compared to $N$, the algorithm may diverge from the target distribution.  
An indication of such divergence is clustering of the points, which can easily be detected with various methods. 
In view of this, we suggest starting with a conservative choice of $h$ and gradually reducing the size of $h$ as the algorithm stabilizes. 
When the clustering behavior is detected, the experimenter can either increase $N$ or increase $h$ so that the algorithm does not exhibit clustering behavior.
Turning to $q$, note that $q$ being away from 0 prevents $\xi^{[j+1/2]}$ from being much smaller than $\xi$ so that the ratio doesn't blow up. 
On the other hand, if $q$ is close to 1, our algorithm is not assigning enough resource (samples) in learning the geometry of the manifold. 
Such a trade-off can be noticed in the upper bounds in (ii) and (iv) of Lemma~\ref{lemma:new-l_infty_proximity} where both $q$ and $1-q$ appear in the denominator.
In view of this, we suggest choosing $q$ inside of the interior of $(0,1)$ sufficiently away from the boundary, say, between $1/10$ and $1/2$.
The choice of $b$ does not seem to make much difference in terms of the performance of the algorithm as far as $b$ is chosen sufficiently large. 


\begin{figure}[tb]
\hspace{-55pt}\includegraphics[width = 1.2\linewidth]{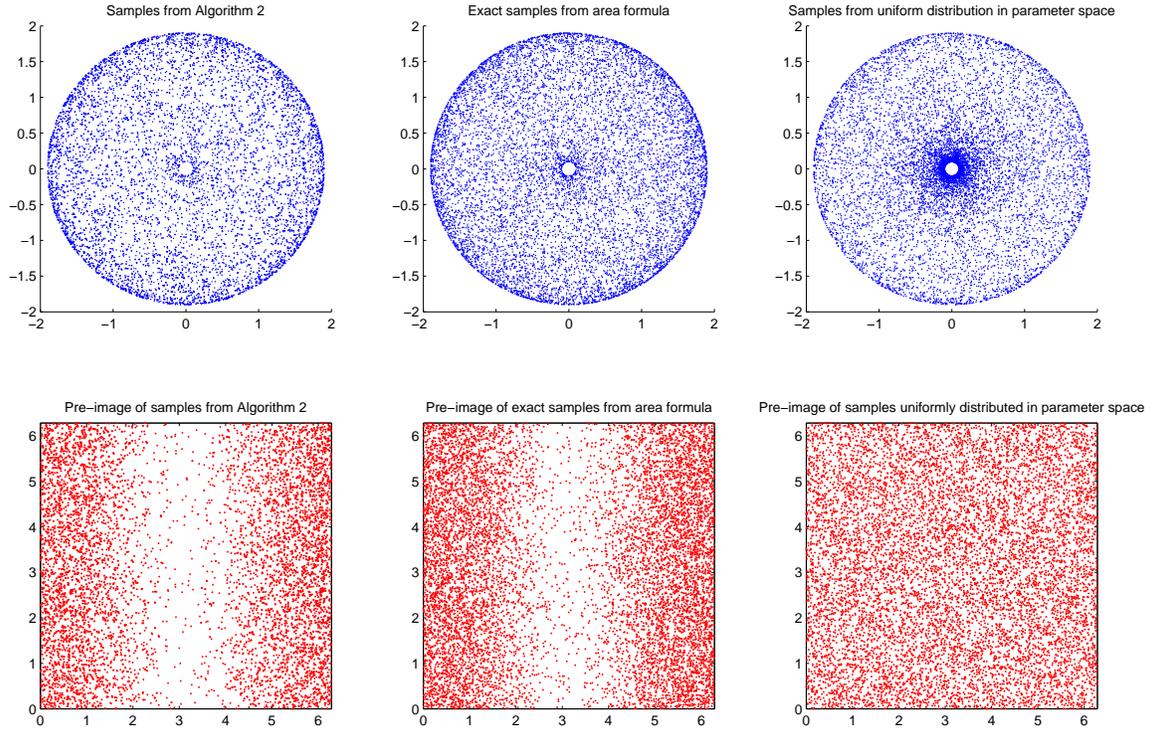}
\caption{(Uniform Samples From Torus) Comparison of the 10,000 samples from Algorithm~\ref{alg:ideal}, exact uniform distribution on the manifold, and uniform distribution in parameter space.\label{fig:torus}}
\end{figure}

\begin{example}{(Uniform Samples from Torus)}\label{eg:torus}
\cite{Diaconis2013} illustrate how to sample from a torus using the area formula (\ref{eq:area_formula}). Consider a torus
$$\mathcal M = \{((R+r\cos \theta) \cos\psi, (R+r\cos \theta)\sin\psi, r \sin \theta): 0\leq\theta,\psi<2\pi \}$$
where $0< r < R$. The major radius $R$ is the distance from the center of the tube to the center of the torus, and the minor radius $r$ is the radius of the tube. 
One way to parametrize $\mathcal M$ and its $2$-dimensional Jacobian $J_2f$ are
$$ f(\theta,\psi) = ((R+r\cos \theta) \cos\psi, (R+r\cos \theta)\sin\psi, r \sin \theta),\qquad J_2 f(\theta,\psi) =  r(R + r\cos\theta).$$
In view of (\ref{eq:area_formula}), one can generate exactly uniform samples on $\mathcal M$ w.r.t.\ Hausdorff measure by generating samples on $[0,2\pi]\times[0,2\pi]$ from the density $g(\theta,\psi) \propto R+r\cos\theta$.
For example, one can generate $\psi$ from the uniform distribution on $[0,2\pi]$, and (independently) generate $\theta$ from the density $\frac{1}{2\pi R}(R+r\cos\theta)$, via acceptance-rejection or inversion.
We compare the three different ways of covering $\mathcal M$ for the purpose of illustration of the consistency of our algorithm. 
Figure~\ref{fig:torus} compares the samples on $\mathcal M$ produced for $R=1$ and $r=0.9$.
The upper plot shows the samples projected on $x$-$y$ plane, and the lower plot shows the pre-image of the samples in the parameter space.
The plots on the left show the samples generated from Algorithm~\ref{alg:ideal} after the resampling step in the second iteration with $q = 0.1$ and $h = 0.5$.
The plots in the middle were produced with the 10,000 samples generated by the area formula (as described above), and the right plots show 10,000 samples generated by uniformly sampling in the parameter space, i.e., $\theta \sim U[0,2\pi]$ and $\psi \sim U[0,2\pi]$.
Observe that the left and middle plots are similar to each other while in the upper right plot the center of the torus is much more densely populated compared to the outer part of the torus.
This illustrates that the samples generated uniformly from the parameter space $A$ is far from uniform on the manifold $\mathcal M$, and Algorithm~\ref{alg:ideal} produces samples from the target distribution.
Figure~\ref{fig:theta_torus} compares the histogram of $\theta$ sampled by Algorithm~\ref{alg:ideal}, and the exact marginal of the target density $g(\theta,\psi) = \frac{1}{4\pi^2 R}(R+r\cos\theta)$.

\begin{figure}[tb]
\centering
\includegraphics[width = 0.8\linewidth]{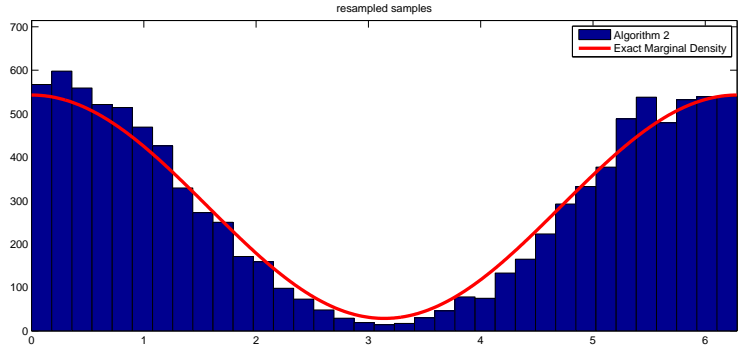}
\caption{(Uniform Samples From Torus) Histogram from 10,000 samples of $\theta$'s generated by Algorithm~\ref{alg:ideal}. Red line shows the exact target marginal density computed from the area formula.\label{fig:theta_torus}}
\end{figure}

\end{example}

\begin{example}{(Non-uniform Density on Torus)}\label{eg:torus_non_unif} In this example, we consider the same manifold $\mathcal M$ as in Example~\ref{eg:torus}, but we illustrate how Algorithm~\ref{alg:ideal} performs for a non-uniform density on $\mathcal M$.
Suppose that we are particularly interested in studying $\mathcal M$ in the proximity of a given point. 
For example, suppose that we are interested in $(0,1,0)$, and hence, we want to use more computational resource for the closer parts of the manifold to the point, and less resource for the farther parts of the manifold.
For this purpose, we choose a density proportional to the reciprocal of the squared distance from $(0,1,0)$.
More specifically, we want to sample from the distribution $P(d\mathbf x) = r(\mathbf x) \mathcal H^2(d\mathbf x)$ where $r(x,y,z) \propto 1/(x^2+(y-1)^2 + z^2)$.
Again, for this simple example, one can generate exact samples from $P$ directly from the area formula via acceptance-rejection with the proposal density proportional to
$$r(f(\theta,\psi))g(\theta,\psi)
\propto
\frac{R+r\cos\theta}{((R+r\cos\theta)\cos\psi)^2 + ((R+r\cos\theta)\sin\psi-1)^2 + (r\sin\theta)^2}.$$
The samples produced by Algorithm~\ref{alg:ideal} (after the third resampling step with $q=0.1$ and $h=0.5$), the exact samples generated by the area formula, and the samples generated uniformly in the parameter space are compared in Figure~\ref{fig:non_uniform_torus} and \ref{fig:non_uniform_torus_hist}.
Once can again it should be noted that Algorithm~\ref{alg:ideal} generates the correct distribution.
\begin{figure}[htb]
\hspace{-55pt}\includegraphics[width = 1.2\linewidth, scale=0.8]{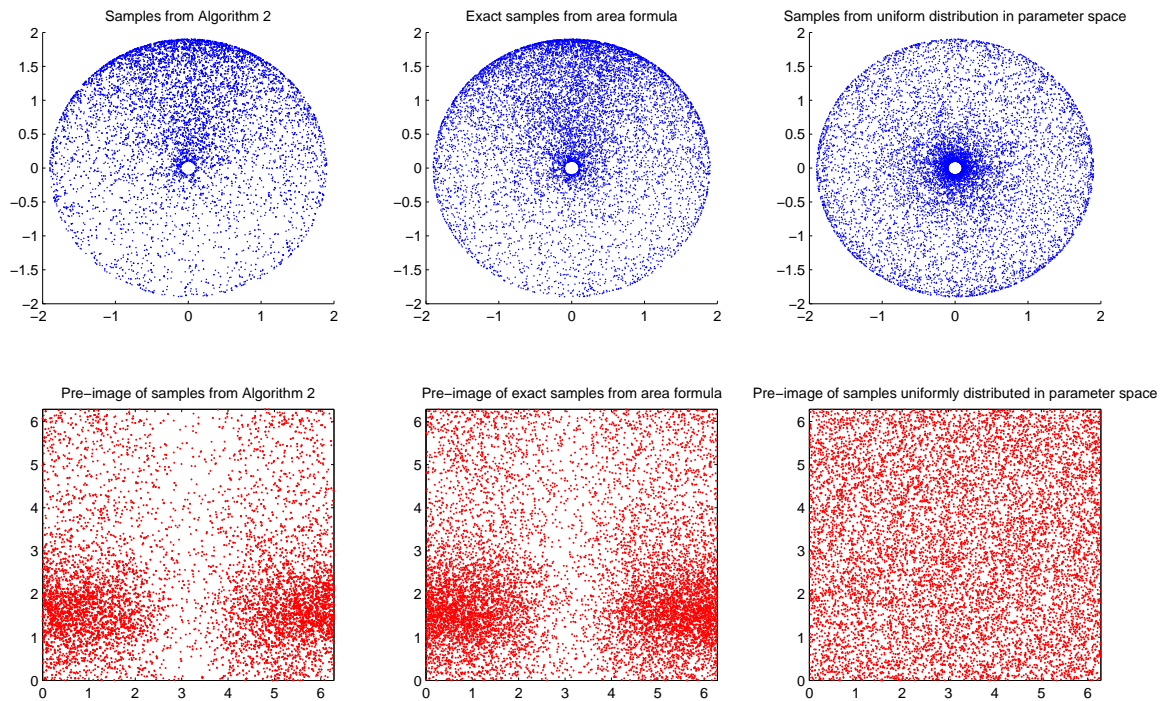}
\caption{(Non-Uniform Density on Torus) Comparison of the 10,000 samples from Algorithm~\ref{alg:ideal}, area formula, and uniform distribution in parameter space.\label{fig:non_uniform_torus}}
\end{figure}

\begin{figure}[htb]
\hspace{-35pt}
\includegraphics[width = 1.1\linewidth, height=7cm]{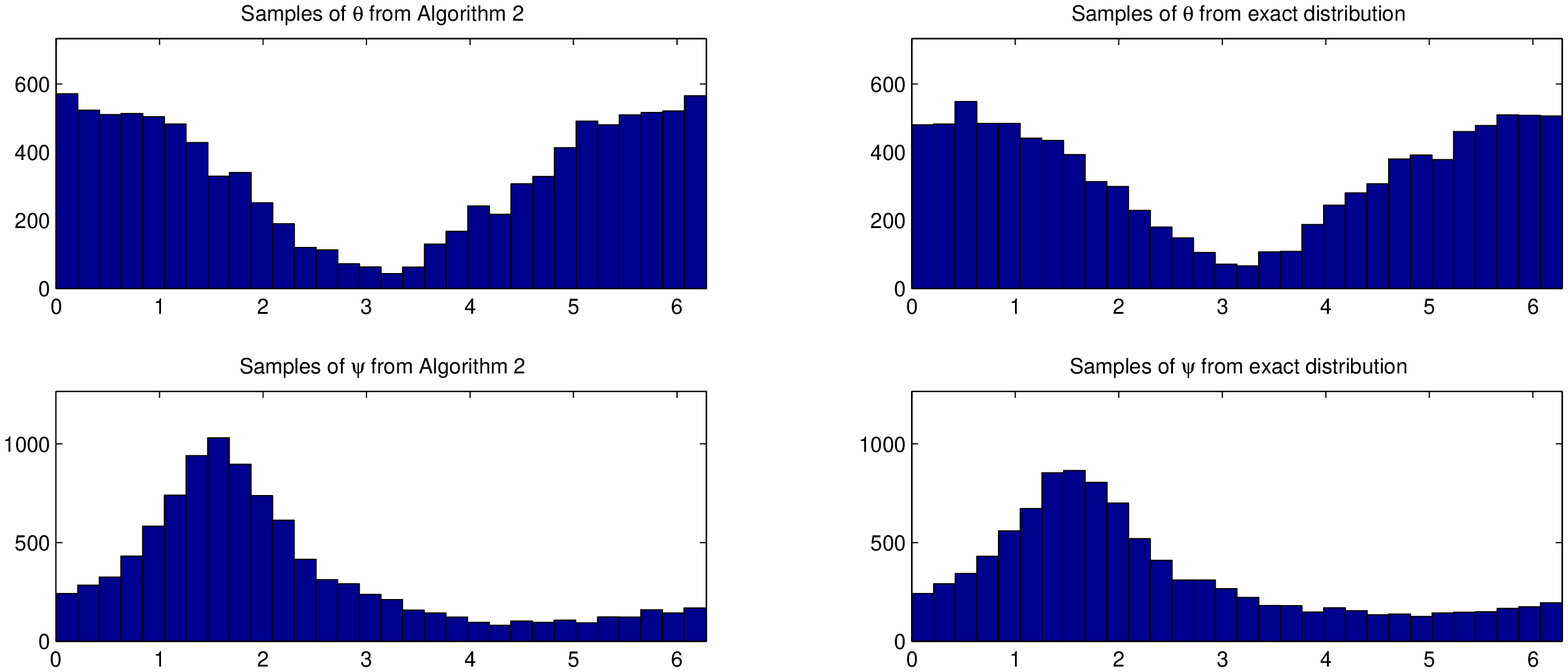}
\caption{(Non-Uniform Density on Torus) Histograms from 10,000 samples generated by Algorithm~\ref{alg:ideal}, and the exact area formula.\label{fig:non_uniform_torus_hist}}
\end{figure}
\end{example}

Next, we examine a more interesting case, 
where the model changes its behavior significantly on a small part of the parameter space while it remains relatively constant over the majority of the parameter space.
That is, most of the parameter space is mapped to a small fraction of the manifold and the rest---the majority---of the manifold comes from a small fraction of the parameter space. 
The next example illustrates how our algorithm discovers such  a small region of the parameter space.

\begin{example}\label{eg:exponential} (Exponential Model)
Here we consider a manifold
$$\mathcal M = \{(e^{-\theta t_1} + e^{-\psi t_1},e^{-\theta t_2} + e^{-\psi t_2}, e^{-\theta t_3} + e^{-\psi t_3}): \theta, \psi \in [0,100]\}$$
where $0 < t_1 < t_2 < t_3$. The first derivative
\begin{equation}
Df(\theta, \psi) = \left(
\begin{array}{cc}
-t_1 \exp(-\theta t_1) & -t_1 \exp(-\psi t_1)\\
-t_2 \exp(-\theta t_2) & -t_2 \exp(-\psi t_2)\\
-t_3 \exp(-\theta t_3) & -t_3 \exp(-\psi t_3)
\end{array}
\right),
\end{equation}
and the 2-dimensional Jacobian
\begin{equation}
J_2f(\theta,\psi)
=
\sqrt{\alpha_{12} + \alpha_{13} + \alpha_{23}}
\end{equation}
where
\begin{equation}
\alpha_{ij} =
t_i^2 t_j^2 \exp\{-2(\theta t_j + \psi t_i)\}\{\exp(\theta-\psi)(t_j-t_i) -1\}^2.
\end{equation}
Figure~\ref{fig:exponential} shows the result for $t_1 = 1$, $t_2 = 2$, $t_3 = 4$ with $2,000$ samples.
The left plot was produced by Algorithm~\ref{alg:ideal} with $q = 0.1$ and $h = 1$, (the upper plot shows the samples projected on a plane perpendicular to the vector $(0.5,-1,0.5)$, and the lower plot shows the pre-image of the samples in the parameter space); 
the plots in the middle show the 2,000 samples generated by the area formula; the right plots show 2,000 samples generated by uniformly sampling in the parameter space, i.e., $\theta, \psi \sim \text{Unif\,}\big([0,100]\times[0,100]\big)$.
The samples generated uniformly in the parameter space are concentrated in a small part of the boundary of the manifold.
The Algorithm~\ref{alg:ideal} was started with initial samples distributed uniformly in the parameter space, and the final samples were obtained after the resampling step in the 10th iteration. The progression of the algorithm is illustrated in Figure~\ref{fig:progression}.

\begin{figure}[htb]
\hspace{-60pt}\includegraphics[width = 1.2\linewidth, height = 11cm]{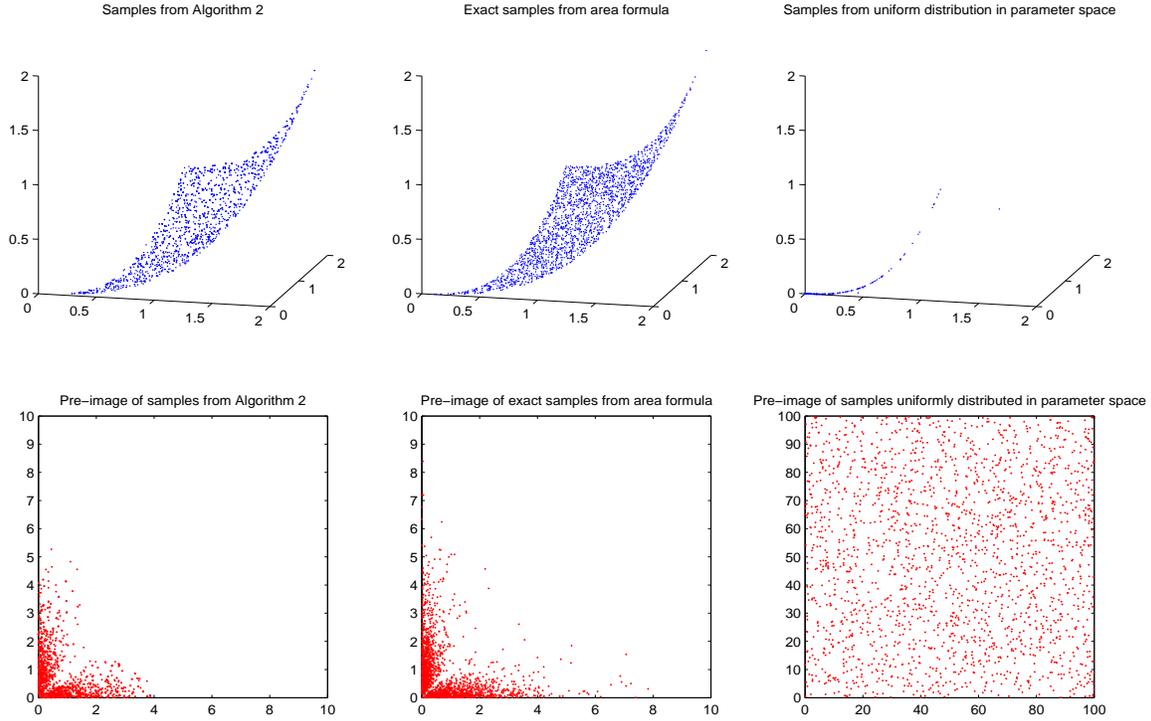}
\caption{(Exponential Model) Comparison of the 2,000 samples from Algorithm~\ref{alg:ideal}, area formula, and uniform distribution in parameter space.\label{fig:exponential}}
\end{figure}

\begin{figure}[htb]
\hspace{-35pt}\includegraphics[width = 1.1\linewidth]{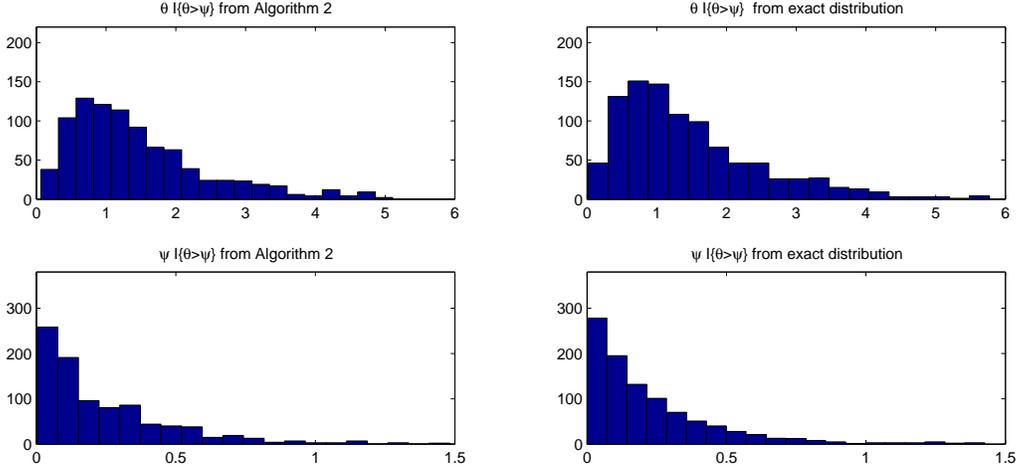}
\caption{(Exponential Model) Histograms from 2,000 samples generated by Algorithm~\ref{alg:ideal} and the exact area formula.}\label{fig:exp_hist}
\end{figure}

\begin{figure}[htb]
\hspace{-55pt}\includegraphics[width = 1.2\linewidth, height = 9cm]{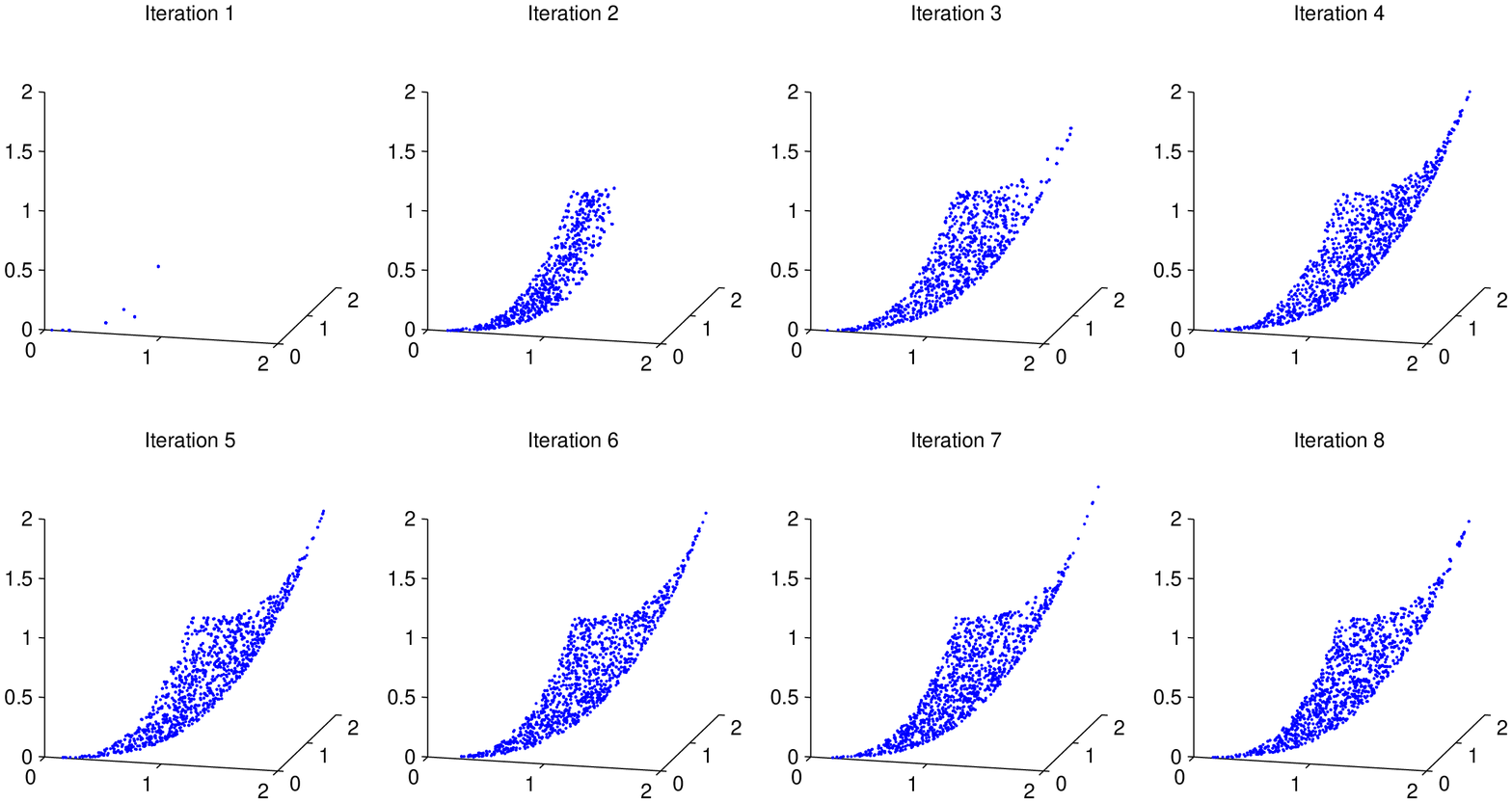}
\caption{(Exponential Model) The progression of Algorithm~\ref{alg:ideal}.\label{fig:progression}}
\end{figure}
\end{example}

\begin{example}\label{eg:sys-bio}{(ODE Models in Systems Biology)}
The dynamics of the enzymatic regulatory systems are often modeled with a set of ordinary differential equations. One of the most popular form of such differential equations is Michaelis-Menten kinetics (\citealp{michaelis2011original}). 
Consider the following Michaelis-Menten kinetics between three different kinds of enzymes $A$, $B$, $C$, and the input $I$:
\begin{equation}\label{enzyme-example}
\begin{aligned}
\frac{dA}{dt} &= k_{IA}I\frac{(1-A)}{(1-A)+K_{IA}} - F_{A}k'_{F_AA}\frac{A}{A+K'_{F_AA}}\\
\frac{dB}{dt} &= Ck_{CB}\frac{(1-B)}{(1-B)+K_{CB}} - F_B k'_{F_BB}\frac{B}{B+K'_{F_BB}}\\
\frac{dC}{dt} &= A k_{AC}\frac{(1-C)}{(1-C)+K_{AC}} - Bk'_{BC}\frac{C}{C+K'_{BC}}.
\end{aligned}
\end{equation}
Assume that the exact values of $k_{IA}$ and $k_{CB}$ are unknown. 
One way to proceed to study the model is to sample $k_{IA}$ and $k_{CB}$ randomly from a plausible range, say $\pa \triangleq [d_1, u_1]\times[d_2, u_2]$, and see if the model can exhibit the desired behavior of the enzymatic system. 
A typical approach in systems biology is to sample a number of parameters uniformly from $R$ and observe what kind of model behaviors are exhibited at the selected design points
(\citealp{ma2009defining}).
However, the change of dynamics w.r.t.\ the change of the values of $k_{IA}$ and $k_{CB}$ might be highly non-linear so that the observation based on insufficient number of uniform samples can be misleading. 
Suppose that we are interested in the adaptive behavior of the model. 
Adaptation refers to the ability of the system to respond (i.e., change the output level) to an input stimulus (i.e., change in input level), and then return to its original output level even when the change in the input level persists.
The adaptive behavior can be summarized as the \emph{sensitivity} and the \emph{precision} of the system. 
In the context of our example \eqref{enzyme-example}, the sensitivity is defined as the ratio $\left|\frac{(C_{\text{peak}}-C_0)/C_0}{(I_1-I_0)/I_0}\right|$ between the size of the response of the output $C$ and the size of the stimulus (i.e., the change in the input $I$), where $I_0$ and $C_0$ are the initial input and output levels respectively, $I_1$ is the new input level, and $C_{\text{peak}}$ is the maximum of the output level after the intput level changes from $I_0$ to $I_1$; 
the precision is defined as the ratio $\left|\frac{(I_1-I_0)/I_0}{(C_1-C_0)/C_0}\right|$ between the (long term) change in the input and output levels, where $C_1$ is the final output level of the system by the time the system stabilizes after the initial change due to the stimulus. 
These output measures quantify how well the system detects the change in the input level, and how robust is the system to such change, respectively.  
For more thorough description of adaptation, sensitivity, and precision, see \cite{ma2009defining}. 
For our purpose, just note that ODE in \eqref{enzyme-example} defines a mapping from the parameter space $\pa$ to the output space $\R^2$, each coordinate of which represents the sensitivity and the precision, respectively. 
Figure~\ref{fig:sysbio} compares the observations based on uniform sampling on the parameter space and observations based on uniform samples on the output space.
More specifically, the right plot shows the observation based on sampling $(\log_{10}k_{IA} + 1)/2$ and $(\log_{10}k_{CB} + 1)/2$ uniformly from $R = [0.35, 0.88]\times[0,1]$, and the other two plots show the observations based on sampling uniformly from the output space via Algorithm~\ref{alg:iterative} with $q=0.1$, $k = 5$, $h = 0.03$, $b=\infty$, and $N = 1000$. 
We used Algorithm~\ref{alg:iterative} instead of Algorithm~\ref{alg:ideal} in this example, since the exact derivative of the mapping is not readily available. 
The parameters of the ODE (other than $k_{IA}$ and $k_{CB}$) were chosen as follows:
$$
\left(\begin{array}{c}
k'_FAA\\
k'_FBB\\   
k_AC\\     
k'_BC\\    
K_IA\\     
K'_FAA\\   
K_CB\\  
K'_FBB\\
K_AC\\  
K'_BC
\end{array}\right)
= 
\left(\begin{array}{c}
7.0437 \\
0.1364 \\    
3.0061 \\ 
0.8395 \\   
0.0183 \\
0.0016 \\     
0.0122 \\  
0.0032 \\    
0.0044 \\  
0.0742
\end{array}\right).
$$
One can see that by uniformly sampling on the parameter space one may end up wasting lots of design points to explore the lower left part of the model output space while almost missing the protruding region on the lower right part of the model output space. On the other hand, our algorithm distributes the design points intelligently so that the nearly missed lower right part of the model output space is clearly identified with less total number (4000 times) of ODE simulations compared to the na\"ive design points uniform in the parameter space (5000 times). 
\begin{figure}
\hspace*{-120pt}\includegraphics[width=23cm, height=15cm]{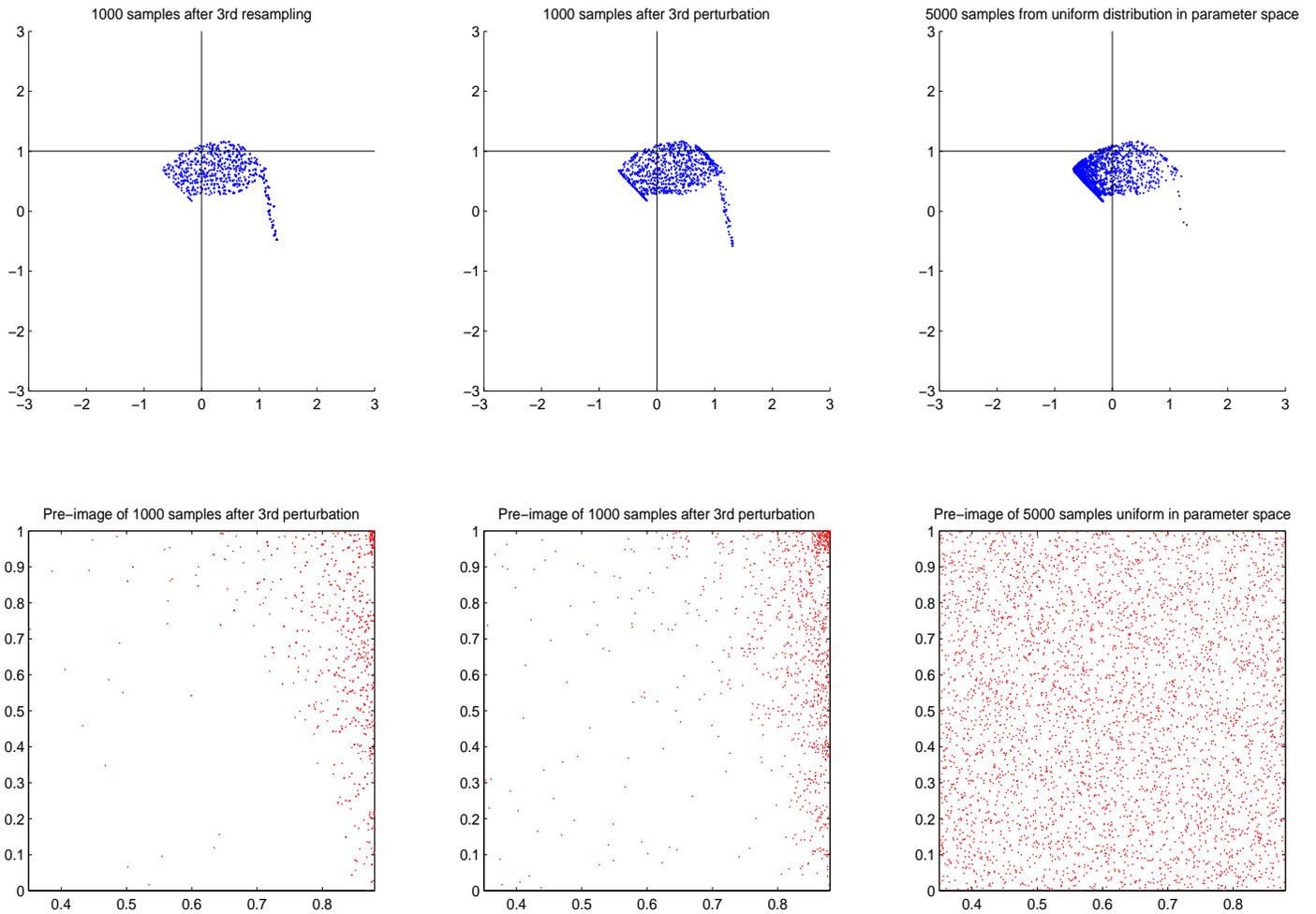}
\caption{The right plot shows the observation based on the samples uniformly distributed on $\pa = [0.35, 0.88]\times[0,1]$, and the other two plots show the observations based on Algorithm~\ref{alg:iterative}. \label{fig:sysbio}}
\end{figure}
\end{example}

\appendix


%
%
\section{Appendix}\label{sec:knn-proofs}
In this section we provide a justification for the consistency of Algorithm~\ref{alg:iterative}.
More specifically, let 
$$
\check\xi^{[j+1]}
\triangleq \sum_{i=1}^{N} \frac{(\mu\circ f\cdot \hat r ^m\circ f )(X_i^{[j+1/2]})}{\sum_{l=1}^{N} (\mu\circ f\cdot \hat r^m\circ f)(X_l^{[j+1/2]})}\delta_{X_i^{[j+1/2]}}.
$$
and consider a procedure that follows the same steps 1)-4) in Section~\ref{sec:proofs} except that 
\begin{itemize}
\item In step 1), set, instead of \eqref{eq:xi-j-plus-1}, 
$$
\xi^{[j+1/2]} 
\triangleq
\frac{\min\left\{b, \ q/\lambda^m(A) + (1-q)\frac{1}{N}\sum_{i=1}^{N}\zeta_h(x-x_i')\right\}}
{\int_A \min\left\{b, \ q/\lambda^m(A) + (1-q)\frac{1}{N}\sum_{i=1}^{N}\zeta_h(s-x_i')\right\}ds}
$$
\item In step 4), generate samples from $\check \xi^{[j+1]}$ instead of $\xi^{[j+1]}$.
\end{itemize}
Then, $\hat \xi^{[j+1]}$ describes the samples after resampling step (in $j+1$\textsuperscript{th} iteration) produced by Algorithm~\ref{alg:iterative}.
If we set $H = (\mu\circ f) \cdot \left(\frac{\Gamma_m\hat r^m\circ f}{k/N}\right)$, $G = d\xi/d\xi^{[j+1/2]}$, $\nu = \hat\xi^{[j+1/2]}$, 
then
$\Psi_H(\nu) = \check\xi^{[j+1]}$,
$\Psi_G(\nu) = \xi^{[j+1]}$, 
and hence,
\begin{equation*}
\begin{aligned}
\W_1(\hat\xi^{[j+1]}, \xi)
&
\leq 
\W_1(\hat \xi^{[j+1]}, \check\xi^{[j+1]})+
\W_1(\check\xi^{[j+1]}, \xi^{[j+1]}) + 
\W_1(\xi^{[j+1]}, \xi)
\\
&
=
\W_1(\hat \xi^{[j+1]}, \check\xi^{[j+1]})+
\W_1(\Psi_H(\nu), \Psi_G(\nu)) + \W_1(\xi^{[j+1]}, \xi)
.
\end{aligned}
\end{equation*}
Note that $\W_1(\hat \xi^{[j+1]}, \check\xi^{[j+1]})\leq cD\alpha(N)$ from \cite{fournier2015rate}, and 
one can show that $\E\W_1(\xi^{[j+1]}, \xi)$ can be bounded by $c\alpha(N)\E \W_1(\xi^{[j]}, \xi) + d\alpha(N)$ for some $c$ and $d$ following a similar argument as in Proposition~\ref{prop:key_prop}.
Since $H$ can be viewed as an approximation of $G$, for Algorithm~\ref{alg:iterative}'s consistency, what is left is to show that $\W_1(\check\xi^{[j+1]}, \xi^{[j+1]})=\W_1(\Psi_H(\nu), \Psi_G(\nu))$ can also be bounded in a similar form by establishing some sort of modulus of continuity of $\Psi_\cdot(\nu)$ in terms of $\W_1$ distance w.r.t.\ the potential. 
%
%
Note first that from a straightforward algebra,
$$
\big(\Psi_G(\nu) - \Psi_{H}(\nu)\big)f
= \frac{1}{\nu(G)} \big\{\nu\big((G-H)f\big) + \nu (H-G) \Psi_{H}(\nu)f \big\}
$$
and hence,
\begin{align}
\W_1(\Psi_G(\nu),\Psi_{H}(\nu))
&=\sup_{f: K_f \leq 1, f(x^0) = 0}\big\{\big(\Psi_G(\nu) - \Psi_{H}(\nu)\big)f\big\}
\nonumber
\\
&= \frac{1}{\nu(G)} \sup_{f: K_f \leq 1, , f(x^0) = 0}\big\{\nu\big((G-H)f\big) + \nu (H-G) \Psi_{H}(\nu)f \big\}\\
&\leq \frac{1}{\nu(G)} \sup_{f: K_f \leq 1, f(x^0) = 0}\big\{\nu\big(|H-G|\,\|f\|_\infty\big) + |\nu(H-G)| \,\|f\|_\infty\big\}
\nonumber
\\
&\leq \frac{1}{\nu(G)} \left\{\nu\big(|H-G|\big)\,\diam(\pa) + |\nu(H-G)|\,\diam(\pa)\right\}
\nonumber
\\
&\leq \frac{2\,\diam(\pa)}{\nu(G)} \nu(|H-G|).
\label{eq:mod_con_wrt_pot}
\end{align}
%
%
%
%
%
%
Therefore,
\begin{equation*}
\begin{aligned}
\W_1(\check\xi^{[j+1]}, \xi^{[j+1]})
&
\leq 
\frac{2\,\diam(\pa)}{\hat\xi^{[j+1/2]}(d\xi/d\xi^{[j+1/2]})}\,\hat\xi^{[j+1/2]}\left(\left|\mu\circ f\cdot \frac{\Gamma_m}{k/N}\hat r^m\circ f- d\xi / d\xi^{[j+1/2]}\right|\right) 
\\
&
\leq 
2\,b\,\diam(\pa)\,
\hat\xi^{[j+1/2]}\left(\left|\mu\circ f\cdot \frac{\Gamma_m}{k/N}\hat r^m\circ f- d\xi / d\xi^{[j+1/2]}\right|\right).
\end{aligned}
\end{equation*}
where the second inequality is from the construction of $\xi^{[j+1/2]}$.
Lemma~\ref{lemma:key_lemma} provides the desired bound for the RHS. 
For Lemma~\ref{lemma:key_lemma}, we make a few additional assumptions. 
\begin{itemize}
\item[A3.] The target density $\mu$ is bounded away from above and below;
\item[A4.] Spectrum of $Df$ is bounded away from $0$ and $\pm \infty$;
\item[A5.] There exists a constant $c_m>0$ and $\delta_0>0$ such that if $y_0 \in \mathcal M$
\begin{equation}\label{eq:key_lemma_key_assumption}
\int_{B(y_0; \delta_1)\setminus B(y_0; \delta_2)} \H^m(dy) \geq c_m \,(\delta_1^m - \delta_2^m)
\end{equation} 
for $\delta_0 \geq \delta_1 \geq \delta_2 \geq 0$.
\end{itemize}

\begin{lemma}\label{lemma:key_lemma}
For any given $\epsilon>0$, one can choose $k$ as a function of $N$ so that there exists $c$ such that
\begin{equation}\label{eq:key_lemma_conclusion1}
\E\hat\xi_N^{[j+1/2]}\left(\left|\mu\circ f\cdot \frac{\Gamma_m}{k/N}\hat r^m\circ f- d\xi / d\xi^{[j+1/2]}\right|\right)\leq  c\big(1+\|\nabla \zeta_h\|_\lip\W_1( \xi^{[j]},\xi)\big)N^{-\frac{1-\epsilon}{2(m+1)}}.
\end{equation}

\end{lemma}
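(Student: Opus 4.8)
The plan is to push the whole quantity over to the manifold by the area formula, recognise the expression inside the absolute value as the error of a $k$-nearest-neighbour density estimator for the push-forward of $\xi^{[j+1/2]}$, and then control that error by a single \emph{uniform} (covering-number) deviation inequality for the empirical measure of the $Y_i\triangleq f(X_i^{[j+1/2]})$ over Euclidean balls traced on $\M$, balancing stochastic fluctuation against bias by a suitable power choice $k=k(N)$. Concretely: by A1 and A4, $f$ is a bi-Lipschitz homeomorphism of $\pa$ onto $\M$ with $J_mf$ and $1/J_mf$ bounded and $f(\partial\pa)=\partial\M$; hence by the area formula the $Y_i$ are i.i.d.\ on $\M$ with density (w.r.t.\ $\H^m$) $p\triangleq\big((d\xi^{[j+1/2]}/d\lambda^m)\circ f^{-1}\big)/\big(J_mf\circ f^{-1}\big)$, and since $d\xi/d\lambda^m=(\mu\circ f)(J_mf)$ a direct computation gives $\frac{d\xi}{d\xi^{[j+1/2]}}(x)=\mu(f(x))/p(f(x))$. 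Writing $\hat p(y)\triangleq(k/N)/(\Gamma_m\hat r^m(y))$ for the $k$-NN estimate of $p$ built from $Y_1,\dots,Y_N$, the left side of \eqref{eq:key_lemma_conclusion1} equals $\E\big[\tfrac1N\sum_i\mu(Y_i)\,|1/\hat p(Y_i)-1/p(Y_i)|\big]$, and because $\mu$ is bounded above (A3) and $p$ is bounded away from $0$ and $\infty$ (A3, A4, together with $q/V\le d\xi^{[j+1/2]}/d\lambda^m\le b$), it suffices to bound $\E\big[\tfrac1N\sum_i|\hat p(Y_i)-p(Y_i)|\big]$. Part (iii) of Lemma~\ref{lemma:new-l_infty_proximity} together with the $C^1$ bi-Lipschitz properties of $f$ and $J_mf$ shows that $p$ is Lipschitz on $\M$ with constant $L\lesssim 1+\|\nabla\zeta_h\|_\lip\,\W_1(\xi^{[j]},\xi)$ (up to a lower-order $\alpha(N)$ adjustment between $\hat\xi^{[j]}$ and $\xi^{[j]}$); this is where the prefactor $\big(1+\|\nabla\zeta_h\|_\lip\W_1(\xi^{[j]},\xi)\big)$ in the statement originates.

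Next I would establish a good event. Cover $\M$ by $O(\varepsilon^{-m})$ balls of radius $\varepsilon$ and discretise radii into $O(\varepsilon^{-1})$ levels; a Hoeffding bound and a union bound over these $O(\varepsilon^{-(m+1)})$ test sets, with $\varepsilon\asymp 1/N$, produce an event $\Omega_N$ with $\P(\Omega_N)\ge 1-N^{-(m+1)}$ on which
$\sup_{y\in\M,\,r>0}\big|\tfrac1N\#\{i:Y_i\in B(y;r)\}-(p\,\H^m)\big(B(y;r)\cap\M\big)\big|\le C\sqrt{(m+1)(\log N)/N}$.
Since by definition of $\hat r$ one has $\tfrac1N\#\{i:Y_i\in B(Y_\ell;\hat r(Y_\ell))\}=k/N$ (up to $O(1/N)$), assumption A5 and $p_{\min}\le p\le p_{\max}$ force on $\Omega_N$ that $\hat r(Y_\ell)\le\delta_0$ and in fact $\hat r(Y_\ell)\asymp(k/N)^{1/m}$ for every $\ell$ once $k/N\to0$ and $k\gg\sqrt N$; moreover, for $y$ at distance $>\delta$ from $\partial\M$, the $C^2$ geometry of $\M$ and the Lipschitz bound on $p$ give $(p\,\H^m)\big(B(y;r)\cap\M\big)=p(y)\Gamma_m r^m\big(1+O(Lr/p_{\min})\big)$ for $r\le\delta$. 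Substituting $r=\hat r(Y_\ell)$ and solving then yields, for interior $Y_\ell$,
$|\hat p(Y_\ell)-p(Y_\ell)|\lesssim \tfrac{N}{k}\sqrt{(m+1)(\log N)/N}+L\,(k/N)^{1/m}$.

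It remains to absorb the boundary collar and $\Omega_N^c$, then optimise $k$. The $f$-preimage of $\{y\in\M:\mathrm{dist}(y,\partial\M)\le\delta\}$ lies in a $C\delta$-neighbourhood of $\partial\pa$, of Lebesgue measure $O(\delta)$, hence of $\xi^{[j+1/2]}$-measure $O(b\delta)$; taking $\delta\asymp(k/N)^{1/m}$, the expected fraction of the $Y_i$ in the collar is $O((k/N)^{1/m})$, and on the collar the crude bounds $\hat r\lesssim(k/N)^{1/m}$ (from A5 on $\Omega_N$) and $p\le p_{\max}$ keep the integrand $O(1)$, so the collar contributes $O((k/N)^{1/m})$. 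On $\Omega_N^c$ the integrand is at most $C N/k$ (using $\hat r\le\diam\M$ and part (ii) of Lemma~\ref{lemma:new-l_infty_proximity}), and $\P(\Omega_N^c)\cdot(N/k)$ vanishes faster than any power of $N$. Collecting the three contributions,
$\E\big[\tfrac1N\sum_i|\hat p(Y_i)-p(Y_i)|\big]\lesssim \tfrac{\sqrt N}{k}\sqrt{(m+1)\log N}+L\,(k/N)^{1/m}$.
Choosing $k=k(N)$ in the range $N^{1/2+\frac{1-\epsilon}{2(m+1)}}\sqrt{\log N}\lesssim k\lesssim N^{\,1-\frac{m(1-\epsilon)}{2(m+1)}}$ (nonempty for $N$ large, since the gap in exponents is $\epsilon/2-o(1)$) makes both terms $\lesssim\big(1+\|\nabla\zeta_h\|_\lip\W_1(\xi^{[j]},\xi)\big)N^{-\frac{1-\epsilon}{2(m+1)}}$, which is \eqref{eq:key_lemma_conclusion1}. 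The bare balance $(k/N)^{1/m}\asymp\sqrt N/k$ corresponds to $k\asymp N^{(m+2)/(2(m+1))}$ and rate $N^{-1/(2(m+1))}$; the $\log$-factors are exactly what force the $\epsilon$-loss.

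\textbf{Main obstacle.} The estimator is \emph{in-sample} --- the same $Y_i$ both index the evaluation points and generate the $k$-NN radii --- and $\M$ carries a boundary, so classical pointwise $k$-NN consistency does not apply directly. The two devices that make the argument work are (a) the uniform-over-$(y,r)$ deviation bound, which decouples the data-dependent radius $\hat r(Y_\ell)$ from the randomness used to measure the ball it defines, and (b) the collar estimate, which confines the non-vanishing boundary bias to a set whose mass is $O((k/N)^{1/m})$. A secondary subtlety is the $h$-dependence: only a first-order (Lipschitz) bound on $p$ is available, with constant degrading like $\|\nabla\zeta_h\|_\lip$ (via Lemma~\ref{lemma:new-l_infty_proximity}(iii)), which is why one cannot exploit the $C^2$ cancellation that would yield the faster $(k/N)^{2/m}$ bias and must settle for the stated exponent.
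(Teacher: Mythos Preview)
Your proposal is correct and reaches the same rate with the same choice $k\asymp N^{(m+2)/(2(m+1))}$, but the mechanism you use for the stochastic term is genuinely different from the paper's. The paper never builds a uniform good event: it introduces the deterministic radius $r_k(y_0)$ defined by $\int_{B(y_0;r_k(y_0))}p_Y\,d\H^m=k/N$, splits $\big|\tfrac{\Gamma_m}{k/N}\hat r^m-\tfrac{1}{p_Y}\big|$ into a stochastic piece $|\hat r^m-r_k^m|$ and a bias piece $|\tfrac{\Gamma_m}{k/N}r_k^m-\tfrac{1}{p_Y}|$, and controls the first piece \emph{pointwise} by conditioning on $Y_1=y_0$, writing $\{\hat r^m-r_k^m\ge s\}$ as a binomial tail, and applying Hoeffding once per tail level; the in-sample issue is handled by this conditioning plus exchangeability rather than by a covering/union bound. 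For the bias piece the paper carries out explicitly the $C^2$ comparison of $\H^m(B(y_0;\delta)\cap\M)$ with $\Gamma_m\delta^m$ via a linearisation $\bar f$ of $f$ at $x_0$ (the (I)$'$,(II)$'$,(III)$'$ computation), which is what you summarise as ``$C^2$ geometry of $\M$.'' Your uniform-deviation route is cleaner in that it decouples the data-dependent radius from the sampling in one stroke and makes the collar and bad-event contributions transparent; the paper's route is more elementary (no covering combinatorics) and works directly with $1/\hat p-1/p$, avoiding the step where you pass to $|\hat p-p|$ --- a passage that, as written, needs $\hat p$ bounded below and is therefore only valid on $\Omega_N$ for interior points, not globally as your ``it suffices'' sentence suggests (you do patch this later, but it would be tidier to restrict the reduction to $\Omega_N$ from the outset). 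Both approaches invoke Lemma~\ref{lemma:new-l_infty_proximity}(iii) at the same place to get $\|p_Y\circ f\|_\lip\lesssim 1+\|\nabla\zeta_h\|_\lip\W_1(\xi^{[j]},\xi)$, which is where the prefactor in \eqref{eq:key_lemma_conclusion1} comes from, and both lose the $\epsilon$ in the exponent to a subpolynomial factor (your $\sqrt{\log N}$, the paper's $N^\beta$).
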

\begin{proof}
First note that from (\ref{eq:density_of_xi}),
\begin{align*}
\left|\mu\circ f\cdot \frac{\Gamma_m}{k/N} \hat r^m\circ f- d\xi / d\xi^{[j+1/2]}\right|
&
= |\mu\circ f| \cdot \left|\frac{\Gamma_m}{k/N}\hat r^m\circ f - \frac{J_m f }{ d\xi^{[j+1/2]}/d\lambda^m}\right|
\\
&
= |\mu\circ f| \cdot \left|\frac{\Gamma_m}{k/N}\hat r^m\circ f - \frac{1}{ p_Y\circ f}\right|
\end{align*}
where $p_Y$ is the density of $f(X_i^{[j+1/2]})$'s w.r.t.\ the Hausdorff measure. Now we consider the following decomposition:
\begin{align}\label{eq:decomposition}
|\mu\circ f| \cdot\left|\frac{\Gamma_m}{k/N}\hat r^m\circ f - \frac{1}{p_Y\circ f}\right|
&\leq 
|\mu\circ f| \cdot\left|\frac{\Gamma_m}{k/N}\hat r^m\circ f - \frac{\Gamma_m}{k/N}r_{k}^m\circ f \right|
+
|\mu\circ f| \cdot\left|\frac{\Gamma_m}{k/N} r_{k}^m\circ f - \frac{1}{p_Y\circ f}\right| \nonumber\\
&= (I) + (II)
\end{align}
where for each $y_0$, $r_{k}(y_0)$ is the real number such that
$$\int_{B(y_0; r_k(y_0))}p_Y(y)\H^m(dy) = k/N.$$
For (I),
\begin{align*}
&
\E\frac{\hat\xi_N^{[j+1/2]} (|\mu\circ f|\cdot|\hat r^m \circ f- r_k^m\circ f|)}{k/N} 
= \E \frac{\left|\mu(f(X_i^{[j+1/2]}))\right|\cdot\left|\hat r^m(f(X_1^{[j+1/2]})) - r_k^m(f(X_1^{[j+1/2]}))\right|}{k/N}
\\
&
= \E\left[ \E \left[\left.\frac{\left|\mu(f(X_1^{[j+1/2]}))\right|\cdot\left|\hat r^m(f(X_1^{[j+1/2]})) - r_k^m(f(X_1^{[j+1/2]}))\right|}{k/N}\right|f(X_1^{[j+1/2]})\right]\right].
\end{align*}
We first study the inner conditional expectation given $f(X_1^{[j+1/2]}) = y_0$, which will be denoted by $\E_{y_0}$ from now on, i.e.,
$$
\E_{y_0} \frac{\left|\mu({y_0})\right|\cdot\left|\hat r^m(y_0) - r_k^m({y_0})\right|}{k/N}
\triangleq
 \E \left[\left.\frac{\left|\mu(f(X_1^{[j+1/2]}))\right|\cdot\left|\hat r^m(f(X_1^{[j+1/2]})) - r_k^m(f(X_1^{[j+1/2]}))\right|}{k/N}\right|f(X_1^{[j+1/2]})={y_0}\right].
$$
With this notation,
\begin{align}
&\E_{y_0} \frac{\left|\mu({y_0})\right|\cdot\left|\hat r^m({y_0}) - r_k^m({y_0})\right|}{k/N}
= \int_0^\infty \P_{y_0}\left(\frac{|\mu({y_0})|\cdot\left|\hat r^m({y_0}) - r_k^m({y_0})\right|}{k/N} \geq s\right)ds
\nonumber\\
&\leq \gamma + \int_\gamma^\infty \P_{y_0}\left(\frac{|\mu({y_0})|\cdot\left|\hat r^m({y_0}) - r_k^m({y_0})\right|}{k/N} \geq s\right)ds
\nonumber\\
&\leq \gamma + \int_\gamma^\infty\P_{y_0}\left(\hat r^m({y_0}) - r_k^m({y_0}) \geq \frac{ks}{N\mu({y_0})}\right)ds
+ \int_\gamma^{\infty}\P_{y_0}\left(r_k^m({y_0}) - \hat r^m({y_0}) \geq \frac{ks}{N\mu({y_0})}\right)ds 
\label{eq:bound_expectation}
\end{align}
where $\P_{y_0}$ is the corresponding conditional probability given $f(X_1^{[j+1/2]}) = {y_0}$.
Note that since $\M$ is bounded, the upper limit of the first integral is in fact finite:
\begin{align*}
&\int_\gamma^\infty\P_{y_0}\left(\hat r^m({y_0}) - r_k^m({y_0}) \geq \frac{ks}{N\mu({y_0})}\right)ds
=\int_\gamma^{(N/k)\diam(\M)}\P_{y_0}\left(\hat r^m({y_0}) - r_k^m({y_0}) \geq \frac{ks}{N\mu({y_0})}\right)ds.
\end{align*}
\ccf{
Note that
\begin{align*}
&\P_{y_0}\left(\hat r^m(f(X_i^{[j+1/2]})) - r_k^m(f(X_i^{[j+1/2]})) \geq kx/N\right)
\\
\ccf{
&= \P\left(\left. \hat r^m(f(X_i^{[j+1/2]})) - r_k^m(f(X_i^{[j+1/2]})) \geq kx/N\right|f(X_i^{[j+1/2]}) \in \M^{-h}\right)\P(f(X_i^{[j+1/2]}) \in \M^{-h}) 
\\
&\qquad\qquad+ \P\left(\left. \hat r^m(f(X_i^{[j+1/2]})) -  r_k^m(f(X_i^{[j+1/2]})) \geq kx/N\right|f(X_i^{[j+1/2]}) \in \M\setminus\M^{-h}\right)\P(f(X_i^{[j+1/2]}) \in \M\setminus\M^{-h})
\\
}
&\leq \P\left(\left. \hat r^m(f(X_i^{[j+1/2]})) - r_k^m(f(X_i^{[j+1/2]})) \geq kx/N\right|f(X_i^{[j+1/2]}) \in \M^{-h}\right)
+\P(f(X_i^{[j+1/2]}) \in \M\setminus\M^{-h})
\end{align*}
The second term can be bounded by a constant multiple of $h$ due to the assumption:
\begin{align*}
\P(f(X_i^{[j+1/2]}) \in \M\setminus\M^{-h}) 
&= \int_{f^{-1}(\M \setminus \M^{-h})} \frac{d\xi_N^{[j+1/2]}}{d\lambda^m}(x) \lambda^m(dx)
= \int_{f^{-1}(\M \setminus \M^{-h})} \frac{\frac{d\xi_N^{[j+1/2]}}{d\lambda^m}\circ f^{-1}(f(x))}{J_mf
\circ f^{-1}(f(x))} J_mf(x) \lambda^m(dx)\\
&= \int_{\M \setminus \M^{-h}} \frac{\frac{d\xi_N^{[j+1/2]}}{d\lambda^m}\circ f^{-1}(y)}{J_mf
\circ f^{-1}(y)} \H^m(dy)
\leq \frac{\overline{d\xi_N^{[j+1/2]}/d\lambda^m} }{\underline{ J_m f}}\cdot \gamma_m h
\end{align*}
(in fact, $\frac{\frac{d\xi_N^{[j+1/2]}}{d\lambda^m}\circ f^{-1}(y)}{J_mf
\circ f^{-1}(y)}$ is $p_Y(y)$, and hence, the bound can be written as $\overline{p_Y}\gamma_m h$) The investigate the first term, we observe that
}
Also,
\begin{align}
&\P_{y_0}\left(\hat r^m({y_0}) - r_k^m({y_0}) \geq \frac{ks}{N\mu(y_0)}\right) 
\nonumber
\\
&
= \P_{y_0}\left(\text{less than $k$ points fall inside $B\Big({y_0}; (r_k^m({y_0}) + \mu(y_0)^{-1}ks/N)^{1/m}\Big)$}\right)
\nonumber
\\
&
= \P(\text{Bin}(N, q) \leq k)
\label{eq:binomial}
\end{align}
where
$q = q(s,y_0) \triangleq \int_{B(y_0; (r_k^m(y_0)+\mu(y_0)^{-1}ks/N)^{1/m})}  \frac{\frac{d\xi_N^{[j+1/2]}}{d\lambda^m}\circ f^{-1}(y)}{J_mf \circ f^{-1}(y)} \H^m(dy)$. Note that
\begin{align*}
q-k/N
&= \int_{B(y_0; (r_k^m(y_0)+\mu(y_0)^{-1}kx/N)^{1/m})}  \frac{\frac{d\xi_N^{[j+1/2]}}{d\lambda^m}\circ f^{-1}(y)}{J_mf \circ f^{-1}(y)} \H^m(dy) - k/N\\
&= \int_{B(y_0; (r_k^m(y_0)+\mu(y_0)^{-1}ks/N)^{1/m})}  p_Y(y) \H^m(dy) - \int_{B(y_0; r_k(y_0))} p_Y(y) \H^m(dy)\\
&= \int_{B(y_0; (r_k^m(y_0)+\mu(y_0)^{-1}ks/N)^{1/m})\setminus B(y_0; r_k(y_0))} p_Y(y) \H^m(dy)\\
&\geq \frac{csk}{N}\cdot\frac{p_Y(y_0+h')}{\mu(y_0)}
\end{align*}
where $y_0+h' \in B(y_0; (r_k^m(y_0)+\mu(y_0)^{-1}ks/N)^{1/m})$ by \eqref{eq:key_lemma_key_assumption} and the mean-value theorem as far as $(\frac{ks}{N\mu(y_0)} + r_k^m)^{1/m} \leq \delta_0$.
Therefore, from Hoeffding's inequality and (\ref{eq:binomial}),
\begin{align*}
\P_{y_0}\left(\hat r^m(y_0) - r_k^m(y_0) \geq \frac{ks}{N\mu(y_0)}\right)
\leq \exp \left(-2N\left(\frac{cks\,p_Y(y_0+h')}{N\mu(y_0)}\right)^2\right)
\end{align*}
for $s \leq (N/k)\mu(y_0)(\delta_0^m - r_k^m(y_0))$. 
In view of this,
\begin{align*}
&
\int_\gamma^{(N/k)\mu(y_0)\diam(\M)^m}\P_{y_0}\left(\hat r^m(y_0) - r_k^m(y_0) \geq \frac{ks}{N\mu(y_0)}\right)ds
\\
&
\leq \int_\gamma^{(N/k)\mu(y_0)\diam(\M)^m}\P_{y_0}\left(\hat r^m(y_0) - r_k^m(y_0) \geq \frac{k\gamma}{N\mu(y_0)}\right)ds
\\
&
\leq (N/k)\mu(y_0)\diam(\M)^m\exp\left(-\frac{2c^2k^2\gamma^2\,p_Y(y_0+h')^2}{N\mu(y_0)^2}\right)
\end{align*}
if $\gamma \leq (N/k)(\delta_0^m - r_k^m(y_0))$. 
Similarly, the second integral in (\ref{eq:bound_expectation}) can be bounded by first noting that $\hat r$ is non-negative (and hence the upper limit of the integral is finite), and then applying Hoeffding's inequality:
\begin{align*}
\int_\gamma^{\infty}\P_{y_0}\left(r_k^m(y_0) - \hat r^m(y_0) \geq \frac{ks}{N\mu(y_0)}\right)ds
&
\leq 
\int_\gamma^{(N/k)\mu(y_0)r_k(y_0)^m}\P_{y_0}\left(r_k^m(y_0) - \hat r^m(y_0) \geq  \frac{k\gamma}{N\mu(y_0)} \right)ds\\
&\leq
(N/k)\mu(y_0){r_k}^m(y_0)\exp\left(-\frac{2c^2k^2\gamma^2\,p_Y(y_0+h')^2}{N\mu(y_0)^2}\right)
\end{align*}
for $\gamma \leq (N/k)(\delta_0^m - {r_k}^m(y_0))$.
From these along with (\ref{eq:bound_expectation}), we conclude that
\begin{align*}
\E_{y_0}\frac{|\mu(y_0)|\cdot\left|\hat r^m(y_0) - r_k^m(y_0)\right|}{k/N} 
&\leq \gamma + (N/k)\mu(y_0)\big(\diam(\M)^m+ r_k^m(y_0)\big)\exp\left(-\frac{2c^2k^2\gamma^2\,p_Y(y_0+h')^2}{N\mu(y_0)^2}\right)
\nonumber\\
&\leq \gamma + 2(N/k)\mu(y_0)\diam(\M)^m\exp\left(-\frac{2c^2k^2\gamma^2\,p_Y(y_0+h')^2}{N\mu(y_0)^2}\right).
\end{align*}
for $\gamma \leq (N/k)(\delta_0^m - {r_k}^m(y_0))$.
Choosing $\gamma = \frac{N^{1/2+\beta}}{k}\frac{\mu(y_0)}{p_Y(y_0)}$ and $k = N^{\alpha}$ for $\alpha \in (1/2,1)$ and $\beta \in (0,1/2)$, we get
$$
|h'|
\leq 
\left(r_k^m(y_0) + \frac{k\gamma/N}{\mu(y_0)}\right)^{1/m}
=\left(r_k^m(y_0) + \frac{N^{\beta-1/2}}{p_Y(y_0)}\right)^{1/m}
$$
and
\begin{align}
&
\E_{y_0}\frac{|\mu(y_0)|\cdot\left|\hat r^m(y_0) - r_k^m(y_0)\right|}{k/N} 
\nonumber
\\
&
\leq 
N^{1/2+\beta-\alpha}\frac{\mu(y_0)}{p_Y(y_0)}
+ 2N^{1-\alpha}\mu(y_0)\diam(\M)^m\exp\left(-2c^2N^{2\beta}\left(\frac{p_Y(y_0+h')}{ p_Y(y_0)}\right)^2\right)
\nonumber
\\
&
\leq 
N^{1/2+\beta-\alpha}\frac{\mu(y_0)}{p_Y(y_0)}
+ 2N^{1-\alpha}\mu(y_0)\diam(\M)^m\exp\left(-2c^2N^{2\beta}\left(\underline{p_Y}/ \overline{p_Y}\right)^2\right).
\label{eq:bound-for-I}
\end{align}
Therefore, 
\begin{align}\label{bound:xi_hat_r_hat_r}
\E \frac{\hat \xi_N^{[j+1/2]}(|\mu\circ f|\cdot|\Gamma_m\hat r^m\circ f - \Gamma_m r_k^m\circ f|)}{k/N} 
\leq 
c \{\underline{p_Y}^{-1}N^{1/2+\beta-\alpha}+N^{1-\alpha}\exp(-cN^{2\beta}(\underline{p_Y}/\overline{p_Y})^2\}
\end{align}
for some $c>0$ that depends only on $f$, $\mu$, and $\diam(M)$.
\cf{
\begin{align*}
&
\E_{y_0}\frac{|\mu(y_0)|\cdot\left|\hat r^m(y_0) - r_k^m(y_0)\right|}{k/N} 
\\
&
\leq 
N^{1/2+\beta-\alpha}\frac{\mu(y_0)}{p_Y(y_0)}
+ 2N^{1-\alpha}\mu(y_0)\diam(\M)^m\exp\left(-2c^2N^{2\beta}\left(\frac{p_Y(y_0+h')}{ p_Y(y_0)}\right)^2\right)
\\
&
\leq 
N^{1/2+\beta-\alpha}\frac{\mu(y_0)}{p_Y(y_0)}
+ 2N^{1-\alpha}\mu(y_0)\diam(\M)^m\exp\left(-2c^2N^{2\beta}\left(1+\frac{K_{p_Y}}{p_Y(y_0)}\left(r_k^m(y_0) + \frac{N^{\beta-1/2}}{p_Y(y_0)}\right)^{1/m}\right)^2\right).
\end{align*}
}

Now, turning to (II) of (\ref{eq:decomposition}), we first prove that for $y_0$ and $\delta$ such that $\bar f^{-1}(B(y_0; \delta)) \subseteq A$ (where $\bar f$ is defined in (\ref{eq:linear_approximation})),
\begin{equation}\label{eq:mean_value_theorem_H}
\left|\frac{1}{\Gamma_m\delta^m}\int_{B(y_0;\delta)} \H^m(dy) - 1\right|
\leq c_f' \delta
\end{equation}
for $c_f'$ that depends only on $f$.
Let $\bar f$ be the linear approximation of $f$ at $x_0 = f^{-1}(y_0)$, i.e.,
\begin{equation}\label{eq:linear_approximation}
\bar f(x_0+h) =  f(x_0)+ Df(x_0)h,
\end{equation}
and $\bar\H^m$ be the Hausdorff measure on $\bar f(\pa)$. Then,
\begin{align*}
\int_{B(y_0;\delta)} \H^m(dy) - \Gamma_m \delta^m
&= \int_{B(y_0;\delta)} \H^m(dy) - \int_{B(y_0; \delta)} \bar{\H}^m(dy)\\
&= \int_{f^{-1}(B(y_0;\delta))} J_mf(x)\lambda^m(dx) - \int_{\bar{f}^{-1}(B(y_0; \delta))} J_m\bar f(x){\lambda}^m(dx)\\
&= \int_{f^{-1}(B(y_0;\delta))\cap \bar f^{-1}(B(y_0;\delta))} \big(J_mf(x) - J_m \bar f(x)\big)\lambda^m(dx) \\
&\qquad + \int_{f^{-1}(B(y_0;\delta))\backslash \bar f^{-1}(B(y_0;\delta))} J_m f(x){\lambda}^m(dx)\\ 
&\qquad - \int_{\bar f^{-1}(B(y_0;\delta))\backslash  f^{-1}(B(y_0;\delta))} J_m \bar f(x){\lambda}^m(dx)\\
&= \text{(I)}' + \text{(II)}' - \text{(III)}'
\end{align*}
Since $J_m \bar f(x) = J_m f(x_0)$, the first term (I)$'$ in the previous display can be bounded as follows:
\begin{align*}
|\text{(I)}'| &= \int_{f^{-1}(B(y_0;\delta))\cap \bar f^{-1}(B(y_0;\delta))} \big|J_mf(x) - J_m f(x_0)\big|\lambda^m(dx) \\
&\leq 
\int_{\bar f^{-1}(B(y_0;\delta))} J_m \bar f(x) \lambda^m(dx) \ \sup_{x\in \bar f^{-1}(B(y_0; \delta))}\left|1 - \frac{J_m f(x)}{J_mf(x_0)}\right|\\
&\leq 
\Gamma_m \delta^m \cdot \sup_{x\in \bar f^{-1}(B(y_0; \delta))}\left|1 - \frac{J_m f(x)}{J_mf(x_0)}\right|
\end{align*}
To get a bound for (II)$'$, note that
\begin{align*}
|\text{(II)}'| 
&\leq\int_{f^{-1}(B(y_0;\delta))\backslash \bar f^{-1}(B(y_0;\delta))} J_m f(x_0) {\lambda}^m(dx)\cdot  \sup_{x\in \bar f^{-1}(B(y_0; \delta))}\left|\frac{J_m f(x)}{J_mf(x_0)}\right|\\ 
&= \int_{f\big(f^{-1}(B(y_0; \delta))\backslash \bar f^{-1}(B(y_0; \delta))\big)}\bar \H^m(dy)
\cdot  \sup_{x\in \bar f^{-1}(B(y_0; \delta))}\left|\frac{J_m f(x)}{J_mf(x_0)}\right|
\end{align*}
To bound the integral, we prove that $f\big(f^{-1}(B(y_0; \delta))\backslash \bar f^{-1}(B(y_0; \delta))\big)$ is close to the boundary of $B(y_0; \delta)$---i.e., $f\big(f^{-1}(B(y_0; \delta))\backslash \bar f^{-1}(B(y_0; \delta))\big)\subset B(y_0; \delta) \backslash B(y_0;\delta-\epsilon(\delta))$ where $\epsilon(\delta) = o(\delta)$ as $\delta\to0$. Suppose that $y \in f\big(f^{-1}(B(y_0; \delta))\backslash \bar f^{-1}(B(y_0; \delta))\big)$. Then, there exists an $h\in \R^m$ such that $x_0+h = f^{-1}(y)$,  $f(x_0+h) \in B(y_0; \delta)$, and $\bar f(x_0+h) \notin B(y_0; \delta)$. 

Since $f$ is $C^2$, the $k$\textsuperscript{th} component of $f$ can be written as
$$
f_k(x_0+h) = f_k(x_0)+ Df_k(x_0)h + \frac{1}{2}h^TR_k(x_0+h)h
$$
where $R_k(x_0+h) = \int_0^1 (1-t)D^2f_k(x_0+th)dt$. Therefore, $\|R_k(x_0+h)\|\leq \sup_{t\in[0,1]}\| D^2 f_k(x_0+th)\|$, 
\begin{equation}\label{eq:f-barf_bound1}
\|f(x_0+h) - \bar f(x_0+h)\|_2 \leq \frac{1}{2} \sum_{k=1}^n |h^TR_k(x_0+h)h| \leq \frac{1}{2}\sup_{x\in \pa}\| D^2 f_k(x)\| \|h\|_2^2.
\end{equation}
Assumption A4 guarantees that $\|f^{-1}\|_\lip>0$ and $\text{diam}(f^{-1}(B(y_0; \delta))<2\,\|f^{-1}\|_\lip\delta$, and since $x_0+h \in f^{-1}(B(y_0;\delta))$---i.e., $\|h\|_2\leq 2\,\|{f^{-1}}\|_\lip\delta$---(\ref{eq:f-barf_bound1}) becomes
$$\|f(x_0+h) - \bar f(x_0+h)\|_2 \leq 2\, \|{f^{-1}}\|_\lip^2\,\sup_{x\in \pa}\| D^2 f_k(x)\| \delta^2 \triangleq c_f\, \delta^2.$$
Since $\tilde f(x_0+h) \notin B(y_0,\delta)$, the above inequality implies that $f(x_0+h) \in B(y_0; \delta)\backslash B(y_0; \delta-c_f\delta^2)$, and hence, $f\big(f^{-1}(B(y_0; \delta))\backslash \bar f^{-1}(B(y_0; \delta))\big)$ is a subset of $B(y_0; \delta)\backslash B(y_0; \delta-c_f\delta^2)$. 
Now, we can bound the integral in (II)$'$
$$ \int_{f\big(f^{-1}(B(y_0; \delta))\backslash \bar f^{-1}(B(y_0; \delta))\big)} \bar\H^m(dy) \leq  \Gamma_m(\delta^m - (\delta-c_f\delta^2)^m) \leq m\Gamma_m c_f \delta^{m+1},$$
which in turn implies 
\begin{equation*}
\text{(II)}' < m\Gamma_m c_f \delta^{m+1}\cdot  \sup_{x\in \bar f^{-1}(B(y_0; \delta))}\left|\frac{J_m f(x)}{J_mf(x_0)}\right|.
\end{equation*}
The following bound for (III)$'$ can be obtained by an essentially identical (but only simpler) argument: 
$$
\text{(III)}' = \int_{\bar f^{-1}(B(y_0;\delta))\backslash  f^{-1}(B(y_0;\delta))} J_m \bar f(x){\lambda}^m(dx) 
\leq m\Gamma_m c_f \delta^{m+1}
$$
Now, combining the bounds for (I)$'$, (II)$'$, and (III)$'$, we arrive at 
\begin{equation}\label{eq:mvt_for_Hausdorff}
\left|\frac{1}{\Gamma_m\delta^m}\int_{B(y_0;\delta)} \H^m(dy) - 1\right|
\leq \sup_{x\in \bar f^{-1}(B(y_0; \delta))}\left|1 - \frac{J_m f(x)}{J_mf(x_0)}\right| + m c_f\delta\left(\sup_{x\in \bar f^{-1}(B(y_0; \delta))}\left|\frac{J_m f(x)}{J_mf(x_0)}\right| + 1\right) 
\end{equation}
and
$$
\sup_{x\in \bar f^{-1}(B(y_0; \delta))}\left|1 - \frac{J_m f(x)}{J_mf(x_0)}\right| \leq \delta \|J_m f\|_\lip\, \|{f^{-1}}\|_\lip/\underline{J_mf}
$$
where $\underline{J_m f} = \inf_{x\in \pa} J_m f(x)$. 
Letting $c_f' = \|{J_m f}\|_\lip \|{f^{-1}}\|_\lip/\underline{J_mf} + m c_f \left(\sup_{x\in \bar f^{-1}(B(y_0; \delta))}\left|\frac{J_m f(x)}{J_mf(x_0)}\right| + 1\right)$, we arrive at (\ref{eq:mean_value_theorem_H}).
%
Note that from the mean value theorem,
\begin{align}
\int_{B(y_0;\delta)}p_Y(y)\H^m(dy) 
&= \int_{f^{-1}(B(y_0; \delta))} J_mf(x) p_Y(f(x))\lambda^m(dx) \nonumber \\
&= \int_{f^{-1}(B(f(x_0); \delta))} J_m f(x) \lambda^m(dx)  \ p_Y\circ f(x_0+h^*) \nonumber \\
&= \int_{B(y_0; \delta)} \H^m(dy)\ p_Y\circ f(x_0+h^*)
\label{eq:intermediate1}
\end{align}
for some $h^*$ such that $f(x_0+h^*) \in B(y_0; \delta)$ where $x_0 = f^{-1}(y_0)$. 
Substituting $r_k$ for $\delta$ in (\ref{eq:intermediate1}) and using the definition of $r_k(y_0)$, we get
\begin{align}\label{eq:2.12}
\frac{\Gamma_m r_k^m\circ f(x_0)}{k/N} 
&= 
\frac{\Gamma_m r_k^m\circ f(x_0)}{\int_{B(f(x_0);r_k\circ f(x_0))} \H^m(dy)}
\frac{p_Y\circ f(x_0)}{p_Y\circ f(x_0+h^*)}
\frac{1}{p_Y\circ f(x_0)}.
\end{align}
\ccf{
$$|bc-1| \leq  |b-1|\cdot|c| + |c-1|$$
\begin{align*}
&\left|
\frac{1}{1 + \epsilon_3(r_k(y_0))}
\frac{p_Y(y_0)}{p_Y(y_0)+\epsilon_4(r_k(y_0))}-1\right|\\
&\leq
\frac{c'_f r_k(y_0)}{1 - c'_f\cdot r_k(y_0)}\frac{p_Y(y_0)}{p_Y(y_0)-K_{p_Y}\cdot r_k(y_0)} + \frac{K_{p_Y}\cdot r_k(y_0)}{p_Y(y_0)-K_{p_Y}\cdot r_k(y_0)}\\
&\leq
\frac{c'_f r_k(y_0)}{1 - c'_f\cdot r_k(y_0)}\frac{\underline{p_Y}}{\underline{p_Y}-K_{p_Y}\cdot r_k(y_0)} + \frac{K_{p_Y}\cdot r_k(y_0)}{\underline{p_Y}-K_{p_Y}\cdot r_k(y_0)}
\end{align*}
}
We therefore arrive at the bound
\begin{align}
&
\left|\frac{1}{p_Y\circ f(x_0)}-\frac{\Gamma_mr_k^m\circ f(x_0)}{k/N}\right|
\\
&=
\frac{1}{p_Y(y_0)}
\left(
\frac{p_Y\circ f(x_0+h^*)-p_Y\circ f(x_0)}{p_Y\circ f(x_0+h^*)} + \frac{p_Y\circ f(x_0)}{p_Y\circ f(x_0+h^*)} \frac{\Gamma_m r_k^m(y_0)}{\int_{B(y_0; r_k(y_0))} \H^m(dy)}\left(\frac{\int_{B(y_0; r_k(y_0))}\H^m(dy)}{\Gamma_m r_k^m(y_0)}-1\right)
\right)
\end{align}
for $y_0$ s.t.\ $\bar f^{-1}(B(y_0; \delta)) \subseteq \pa$ and $h^*$ such that $f(x_0+h^*) \in B(y_0; \delta)$.
Note that if $x_0\in \pa^{-r_k\circ f(x_0)\|{f^{-1}}\|_\lip}$, then $\bar f^{-1} (B(f(x_0);r_k\circ f(x_0)) \subseteq \pa$.
\cf{$x'\in\bar f^{-1} (B(f(x_0);r_k\circ f(x_0)))$ implies 
$|\bar f(x') - f(x_0)| = |\bar f(x') - \bar f(x_0)| \leq r_k \circ f(x_0)$ which in turn implies $|x'-x_0| = |\bar f^{-1}(\bar f(x'))-\bar f^{-1}(\bar f(x_0))| \leq \|{f^{-1}}\|_\lip r_k \circ f(x_0) \leq \|{f^{-1}}\|_\lip\bar r_k$ and hence $x'\in \pa$
}
If, in addition, $x_0 \in \pa^{-2r_k\circ f(x_0)\|{f^{-1}}\|_\lip}$, then $x_0+h^* \in f^{-1}(B(f(x_0); r_k\circ f(x_0))$ implies $|h^*| \leq r_k\circ f(x_0) \|{f^{-1}}\|_\lip$, and hence, $x_0+h^* \in \pa^{-r_k\circ f(x_0) \|{f^{-1}}\|_\lip}$. 
For notational simplicity, let $\bar r_k \triangleq \|r_k\|_\infty$, $\underline{p_Y} \triangleq \|1/p_Y\|_\infty$, and $\overline{p_Y} \triangleq \|p_Y\|_\infty$.
Then, if $x_0 \in \pa^{-2\bar r_k \|{f^{-1}}\|_\lip}$, then assuming that $\bar r_k$ is sufficiently small,
\begin{align*}
\left|\frac{1}{p_Y\circ f(x_0)}-\frac{\Gamma_mr_k^m\circ f(x_0)}{k/N}\right|
&\leq
\frac{1}{p_Y(y_0)}
\Bigg(\frac{\|{p_Y\circ f}\|_\lip h^* }{p_Y\circ f(x_0+h^*)}+\frac{p_Y(y_0)}{p_Y(y_0+h^*)}\frac{c'_fr_k(y_0)}{1-c'_fr_k(y_0)}\Bigg)
\\
&\leq
\frac{r_k(y_0)}{p_Y(y_0)}
\Bigg(\frac{\|{f^{-1}}\|_\lip\, \|{p_Y\circ f}\|_\lip }{\underline{p_Y}}+\frac{\overline{p_Y}}{\underline{p_Y}}\frac{c'_f}{1-c'_fr_k(y_0)}\Bigg)
\\
&\leq 
\frac{c\bar r_k}{ p_Y(y_0)}\,
\Bigg(\|{p_Y\circ f}\|_\lip +1\Bigg).
\end{align*}
for some $c>0$ that depends only on $f$, $b$, and $q$.
From Lemma~\ref{lemma:new-l_infty_proximity} and the construction of $d\xi^{[j+1/2]}$,
\begin{align*}
\|{p_Y\circ f}\|_\lip
&
\leq 
\|{d\xi^{[j+1/2]}/d\lambda^m}\|_\lip\cdot \|{1}/{J_mf}\|_\infty + \|{1/J_mf}\|_\lip\, \|d\xi^{[j+1/2]}/d\lambda^m\|_\infty
\\
&
\leq
c\big(1+h+\|{\nabla \zeta_h}\|_\lip\W_1(\xi^{[j]}, \xi)\big).
\end{align*}
Therefore,
\begin{align*}
&
\E \hat\xi^{[j+1/2]}\left(|\mu\circ f|\cdot\left|\frac{\Gamma_mr_k^m\circ f}{k/N} - \frac{1}{p_Y\circ f}\right|\right)
\\
&
=
\E \mu(f(X_1^{[j+1/2]})) \left|\frac{1}{p_Y(f(X_1^{[j+1/2]}))} - \frac{\Gamma_m r_k^m(f(X_1^{[j+1/2]}))}{k/N}\right|
\\
&\qquad\qquad\qquad\qquad\qquad\qquad\qquad\qquad\cdot 
\left(
\I_{\{X_1^{[j+1/2]}\in \pa\setminus \pa^{-2\bar r_k \|{f^{-1}}\|_\lip}\}} + \I_{\{X_1^{[j+1/2]}\in \pa^{-2\bar r_k \|{f^{-1}}\|_\lip}\}}
\right)
\\
&
\leq  
\|\mu\|_\infty\Big(\frac{\Gamma_m \bar r_k ^m}{k/N} + \underline{p_Y}^{-1}\Big) \P\big(X_N^{[i]}\in \pa\setminus \pa^{-2\bar  r_k \|{f^{-1}}\|_\lip}\big)
\\
&\qquad+
\E \mu(f(X_1^{[j+1/2]})) \left|\frac{1}{p_Y(f(X_1^{[j+1/2]}))} - \frac{\Gamma_m r_k^m(f(X_1^{[j+1/2]}))}{k/N}\right|
\I_{\{X_1^{[j+1/2]}\in \pa^{-2\bar r_k\|{f^{-1}}\|_\lip}\}}
\\
&
\leq  
\|\mu\|_\infty\Big(\frac{\Gamma_m \bar r_k^m}{k/N} + \underline{p_Y}^{-1}\Big) \P(X_N^{[i]}\in \pa\setminus \pa^{-2\bar r_k \|{f^{-1}}\|_\lip})
+
c\E\frac{\bar r_k}{p_Y(f(X_1^{[j+1/2]}))}\Bigg(\|p_Y\circ f\|_\lip +1\Bigg)
\\
&
\leq
 c\big(1+h+\|\nabla \zeta_h\|_\lip \W_1( \xi^{[j]},\xi)\big)\bar r_k.
\end{align*}
for some (new) constant $c>0$.
Therefore, together with (\ref{bound:xi_hat_r_hat_r}) and noting that $\bar r_k ^m = O((k/N)) = O(N^{\alpha-1})$ (since $p_Y$ is bounded from below by construction),
\begin{align*}
&\E\hat\xi^{[j+1/2]}\left(\left|\mu\circ f\cdot \frac{\Gamma_m}{k/N}\hat r^m\circ f- d\xi / d\xi^{[j+1/2]}\right|\right)
\\
&\leq 
c\big\{ \big(1+ \|\nabla \zeta_h\|_\lip \W_1(\xi^{[j]},\xi)\big) N^{\frac{\alpha-1}{m}}
+
N^{1/2+\beta-\alpha}+N^{1-\alpha}\exp(-cN^{2\beta}(\underline{p_Y}/\overline{p_Y})^2\big\}
\end{align*}
Noting that $\underline{p_Y}/\overline{p_Y}$ is bounded from below by construction, we see that for any $\epsilon>0$, by considering $\beta$ small enough and $\alpha \approx \frac{m+2}{2(m+1)}$, one can always find a (new) constant $c>0$ such that 
$$
\E\hat\xi_N^{[j+1/2]}\left(\left|\mu\circ f\cdot \frac{\Gamma_m}{k/N}\hat r^m\circ f- d\xi / d\xi_N^{[j+1/2]}\right|\right)
\leq c\big(1+\|{\nabla \zeta_h}\|_\lip\W_1(\xi^{[j]},\xi)\big)N^{-\frac{1-\epsilon}{2(m+1)}}.
$$

\end{proof}


\bibliographystyle{apalike}
\bibliography{RZQ_ref}

\end{document}